\theoremstyle{plain}
\newtheorem{theorem}{Theorem}
\newtheorem{lemma}{Lemma}
\theoremstyle{definition}
\newtheorem{definition}{Definition}
\numberwithin{equation}{section}
\definecolor{darkgreen}{rgb}{0,.3,0}
\definecolor{darkblue}{rgb}{0,0,.5}
\definecolor{darkred}{rgb}{.4,0,0}
\crefname{figure}{figure}{figures}
\tikzstyle{edge} = [black,line width=.4mm]
\tikzstyle{vertex}=[circle,minimum size=2.5mm, draw=black, fill=black, inner sep=0mm]
\newcounter{example}
\NewDocumentEnvironment {example} {m}
{ \refstepcounter{example} 
	\pagebreak[2]  
  \begin{mdframed}[linewidth=0.8pt,  linecolor=black, 
		bottomline=false,topline=false,rightline=false, startcode=\needspace{4\baselineskip}] 
    \noindent \textbf{Example  \theexample.} ~  }
	{ \end{mdframed}  
}
\newcommand{\laplacian}{\mathbb{L}}
\newcommand{\incidencematrix}{\mathbb{I}}
\newcommand{\edgematrix}{\mathbb{D}}
\newcommand{\abs}[1]{\left | #1 \right |}
\renewcommand{\d}{ \,\textnormal{d}}
\newcommand{\sgn}{\operatorname{sgn}}
\newcommand{\fey} { \mathcal {F}   }
\newcommand{\feyk}[1] { \fey \left( #1 \right)  }
\newcommand{\vv}{ }
\title{Combinatorial proof of a Non-Renormalization Theorem}
\author[1,2]{Paul-Hermann Balduf,}
\author[2]{Davide Gaiotto}
\affiliation[1]{Mathematical Institute, University of Oxford, \\
	Andrew Wiles Building, Woodstock Road, Oxford, OX2 6GG, United Kingdom.}
\affiliation[2]{ Perimeter Institute for Theoretical Physics, \\
	31 Caroline Street, Waterloo, Ontario, N2L 2Y5, Canada.}
\emailAdd{paul-hermann.balduf@maths.ox.ac.uk}
\emailAdd{dgaiotto@perimeterinstitute.ca}
\abstract{
We provide a direct combinatorial proof of a Feynman graph identity which implies a wide generalization of a formality theorem by Kontsevich. For a Feynman graph $\Gamma$, we associate to each vertex a position $x_v \in \mathbb R$ and to each edge $e$ the combination $s_e = a_e^{-\frac 12} \left( x^+_e - x^-_e \right)$, where $x^\pm_e$ are the positions of the two end vertices of $e$, and $a_e$ is a Schwinger parameter. The ``topological propagator'' $P_e = e^{-s_e^2}\d s_e$ includes a part proportional to $\d x_v$ and a part proportional to $\d a_e$. Integrating the product of all $P_e$ over positions produces a differential form $\alpha_\Gamma$ in the variables $a_e$. We derive an explicit combinatorial formula for $\alpha_\Gamma$, and we prove that $\alpha_\Gamma \wedge \alpha_\Gamma=0$. 
}
\begin{document}

\maketitle

\tableofcontents

\section{Introduction}
Topological Quantum Field Theories are an important theoretical tool in physics and mathematics. A celebrated use of two-dimensional TQFT is Kontsevich's work on deformation quantization \cite{kontsevich_deformation_2003,cattaneo_path_2000}. In physical terms, that work proves two results about a certain \enquote{free B-model} 2d TQFT: 
\begin{itemize}
    \item The space of formal deformations of the theory receives no quantum corrections and coincides with the space of Poisson bivectors on the target space of the B-model.
    \item The formal deformations of the theory induce a deformation of the algebra of local operators at \enquote{Neumann} boundary conditions for the 2d TQFT. Quantum corrections produce the full deformation quantization of the algebra of functions on the target space of the B-model.
\end{itemize}
Both sets of quantum corrections are given by certain 2d Feynman integrals. The first point is proven by a careful algebraic-geometric analysis of the Feynman integrals in position space. Introducing Schwinger/Feynman parameters and integrating over positions, the remaining integrands take the form $\alpha_\Gamma \wedge \alpha_\Gamma$ for a 1-form $\alpha_\Gamma$ defined below and a \enquote{Laman} graph $\Gamma$ \cite{budzik_feynman_2023,gaiotto_higher_2024}.  

In this note we give an explicit construction of $\alpha_\Gamma$ (\cref{lem:dbarT})\footnote{A Mathematica \textregistered ~ implementation of this formula, and all examples in the present article, is attached to the online version and available from the first author's website \href{https://paulbalduf.com/research}{paulbalduf.com}.}, which then leads to a combinatorial proof of $\alpha_\Gamma \wedge \alpha_\Gamma=0$ for all graphs $\Gamma$ that are not trees (\cref{thm:QE_0}). In particular, this gives a direct combinatorial proof of the vanishing of the quantum corrections to the space of formal deformations of the free 2d B-model. Furthermore, the result for non-Laman graphs $\Gamma$ implies a similar result for certain higher-dimensional generalizations of the 2d B-model we now briefly review. 

An independent   proof of $\alpha_\Gamma \wedge \alpha_\Gamma=0$, based on 2-dimensional complex integration,  has been given in the recent article \cite{wang_factorization_2024}. Our work differs from that approach in that we construct explicit formulas for an individual factor $\alpha_\Gamma$, while the authors of \cite{wang_factorization_2024} only examine products like $(\alpha_\Gamma \wedge \alpha_\Gamma)$.

\subsection{Generalized B-models}
The 2d free B-model can be generalized to a collection of free QFTs which depends topologically on $T$ directions of a ${\mathbb R}^{T+2H}$ space-time and holomorphically on the remaining $2H$ directions. These theories and their deformations include 3d Chern-Simons theory, holomorphic-topological variants thereof with applications to integrability \cite{costello_supersymmetric_2013} and W-algebras \cite{costello_mtheory_2016}, as well as the  \enquote{twist}  of many super-symmetric QFTs of theoretical interest \cite{elliott_topological_2019}. The space of formal deformations of such QFTs is also of interest. The authors of 
\cite{gaiotto_higher_2024} concluded that quantum corrections should be absent whenever $T\geq 2$, by a comparison of the Feynman diagrams which appear in a QFT 
with a ${\mathbb R}^{T} \times {\mathbb C}^{H}$ splitting of ${\mathbb R}^{T+2H}$ and one with a ${\mathbb R}^{T-2} \times {\mathbb C}^{H+1}$ splitting. 

The authors of \cite{gaiotto_higher_2024} also defined a certain form $\alpha_\Gamma$ on the space of Feynman parameters for any graph $\Gamma$, which captures the contribution of a single topological direction to 
the problem. They conjectured the combinatorial identity 
\cref{thm:QE_0},
\begin{align}\label{theorem1_2}
\qquad \alpha_\Gamma \wedge \alpha_\Gamma=0 \qquad \text{for all graphs }\Gamma \text{ except for trees}.
\end{align}
The proof of this identity gives a simple combinatorial proof of the vanishing of quantum corrections for all these theories with $T\geq 2$.\footnote{The identity also implies the vanishing result in the presence of any number of defects in the theories, as long as they extend along the two topological directions and are built from free fields. }

\subsection[The definition of the integral alpha]{The definition of $\alpha_\Gamma$}
Let $\Gamma$ be a graph with edges $E_\Gamma$ and vertices $V_\Gamma$. There is only one type of edges, and there is no restriction on the valence of vertices. The graph is not necessarily connected, and it is not necessarily a simple graph (multi-edges are allowed), and it might or might not be a tree.

For each edge $e\in E_\Gamma$, introduce a Schwinger parameter $a_e$. We are interested in computing an integrand in the space of Schwinger parameters, so we treat $a_e$ it as a formal real variable, assuming $a_e\neq 0$. All edges have an arbitrary, but fixed, direction, $e=(v^-(e) \rightarrow  v^+(e))$ where $v^\pm(e)$  are vertices. 
For each vertex $v\in V_\Gamma$, introduce a position variable $x_v\in \mathbb R$. We let $x^+_e, x^-_e$ be the position variables of the end vertices $v^+(e), v^-(e)$ of an edge $e$.
Using $a_e$ and $x_v$, we define the edge variable 
\begin{align}\label{def:se}
s_e &:= \frac{x^+_e-x^-_e}{\sqrt{a_e}} \in \mathbb R.
\end{align}
The object of interest of the present article is the differential form introduced in  \cite{budzik_feynman_2023,gaiotto_higher_2024},
\begin{align}\label{def:omega1}
	\alpha_\Gamma := \idotsint \limits_{\mathbb R^{\abs{V_\Gamma}-1}}\bigwedge_{e\in E_\Gamma}e^{-s_e^2} \;  \d s_e.
\end{align}
$\alpha_\Gamma$ is a constant (0-form) for tree graphs, and it is a non-trival differential form in the Schwinger parameters $\left \lbrace a_e \right \rbrace   $ when $\Gamma$ has loops.
In order to prove \cref{thm:QE_0} (\cref{theorem1_2}), we proceed in three steps: Firstly, in \cref{sec:integrand}, we establish a formula for the integrand of $\alpha_\Gamma$ as a sum of spanning trees, taking into account all signs that can arise from (arbitrary) choices of labeling or edge directions in $\Gamma$. Secondly, in \cref{sec:integration}, we compute the integral in \cref{def:omega1} in terms of Dodgson polynomials. Thirdly, in \cref{sec:wedge_product}, we use combinatorial properties of Dodgson polynomials to show that $\alpha_\Gamma \wedge \alpha_\Gamma=0$ for all graphs except trees. 

Our constructions make heavy use of graph matrices, a reminder of their definitions is \cref{sec:graph_matrices}. In \cref{sec:Dodgson_polynomials}, we review the definitions of Dodgson polynomials and prove technical lemmas that are used in the main text. Some of our constructions are analogous to the derivation of parametric Feynman rules from momentum-space Feynman rules, reviewed in \cref{sec:parametric_feynman_integral}.

\acknowledgments
PHB thanks Karen Yeats for multiple discussions. This research was supported in part by a grant from the Krembil Foundation. DG
is supported by the NSERC Discovery Grant program and by the Perimeter Institute
for Theoretical Physics. Research at Perimeter Institute is supported in part by the Government of Canada through the Department of Innovation, Science and Economic Development and by the Province of Ontario through the Ministry of Colleges and Universities. 
 
\section{Setup}\label{sec:setup}
\subsection{Edge variables and vertex variables}
To  make \cref{def:omega1} amenable to a mathematical analysis, we   express  the edge variable of \cref{def:se} in terms of the graph matrices (\cref{sec:graph_matrices}). 
We collect all $\abs{V_\Gamma}$ vertex coordinates into one vector $\vec x$. The difference in the numerator of \cref{def:se} is then the $e$\textsuperscript{th} entry of the signed incidence matrix $\bar \incidencematrix$ (\cref{def:incidence_matrix}) applied to the vector $\vec x$, 
\begin{align}\label{se_vector}
s_e &= \frac{1}{\sqrt{a_e}} \left( \bar {\incidencematrix} \vec x \right) _e, \hspace{3cm} 	\vec s := \frac{1}{\sqrt{\vec a}} \bar {\incidencematrix}  \vec x.
\end{align}
 \Cref{se_vector} involves a component-wise product between the vectors $\bar{\incidencematrix}\vec x$ and $\vec a$, not a dot product. One can rewrite this as a matrix product if one introduces   a $\abs{E_\Gamma}\times \abs{E_\Gamma}$  diagonal matrix according to \cref{def:edge_variable_matrix}:
\begin{align*}
	\edgematrix  & := \operatorname{diag}(\vec a)= \operatorname{diag}\left(  a_1 ,  \ldots,  a_{\abs{E_\Gamma}} \right).
\end{align*}
This leads us to the fully \enquote{vectorized} form of \cref{def:se},
\begin{align}\label{se_vector_full}
\vec s &=  \edgematrix^{-\frac 12}  \bar {\incidencematrix} \vec x.
\end{align}

\begin{example}{Multiedge}\label{ex:multiedge_svector}
	The incidence matrix (\cref{def:incidence_matrix}) depends on the labeling and direction of edges and labeling of vertices. One possible choice for the 1-loop multiedge (\enquote{bubble} or \enquote{fish}) graph is 
	
	\begin{minipage}{.3\linewidth}
		\begin{tikzpicture} 
			\node at (-1.5,0){$\Gamma=$};
			\node[vertex, label=left:{$v_1$}](v1) at (0,0){};
			\node[vertex, label=right:{$v_2$}](v2) at (2.2,0){};
			
			\draw[edge, ->, bend angle=30,bend left](v1) to node[pos=.5, above]{$e_1$}(v2);
			\draw[edge, ->, bend angle=30, bend right](v1) to node[pos=.5, below]{$e_2$}(v2);
		\end{tikzpicture}
	\end{minipage}\hspace{1cm}
	\begin{minipage}{.6\linewidth}
	\begin{align*}
		\bar {\incidencematrix} &= \begin{pmatrix}
			-1 & 1 \\
			-1 & 1
		\end{pmatrix}, \qquad \mathbb  D = \begin{pmatrix}
			a_1 & 0 \\
			0 & a_2
		\end{pmatrix}.
	\end{align*}   
	\vspace{-.05cm}
	\end{minipage}
		
	\noindent
	The vector $\vec s$ from \cref{se_vector_full} has the form expected from \cref{def:se},
	\begin{align*}
	\vec s &= \begin{pmatrix}
		\frac 1 {\sqrt{a_1}} & 0 \\
		0 & \frac 1 {\sqrt{a_2}}
	\end{pmatrix}\begin{pmatrix}
	-1 & 1 \\
	-1 & 1
	\end{pmatrix}\begin{pmatrix}
	x_1\\ x_2
	\end{pmatrix} = \begin{pmatrix}
	\frac{x_2-x_1}{\sqrt{a_1}} \\ \frac{x_2-x_1}{\sqrt{a_2}}
	\end{pmatrix}.
	\end{align*}

\end{example}

In what follows, we will be integrating over the vertex positions $\d x_v$. Due to Lorentz invariance, we can set any one of these positions to zero instead of integrating. We choose the last index, $v_\star = v_{\abs{V_\Gamma}}$, hence we set $x_{\abs{V_\Gamma}}=0$. Effectively, this can be reached already at the level of \cref{se_vector_full} by leaving out the last column of the incidence matrix $\bar {\incidencematrix}$, turning it into the reduced incidence matrix $\incidencematrix$ (\cref{def:reduced_incidence_matrix}). Unless mentioned otherwise, we will from now on assume that the vector $\vec x$ does not contain $x_{\abs{V_\Gamma}}$, and that  
\begin{align}\label{def:se_vector}
	\vec s &=  \edgematrix^{-\frac 12}    \incidencematrix \vec x.
\end{align}

\begin{example}{Dunce's cap} \label{ex:dunces_cap_svector}
	The dunce's cap   is a graph on 3 vertices and 4 edges. 
		\begin{center}
		\begin{tikzpicture}
			\node at (-1.5,0){$\Gamma=$};
			\node[vertex, label=left:{$v_1$}](v1) at (0,0){};
			\node[vertex, label={$v_2$}](v2) at (3,1){};
			\node[vertex, label=below:{$v_3$}](v3) at (3,-1){};
			
			\draw[edge, ->](v2) to node[pos=.5, above]{$e_1$}   (v1) ;
			\draw[edge, ->](v3) to node[pos=.5, below]{$e_2$}(v1);
			\draw[edge, ->, bend angle=30,bend left](v3) to node[pos=.5, left]{$e_3$}(v2);
			\draw[edge, ->, bend angle=30, bend right](v3) to node[pos=.5, right]{$e_4$}(v2);
		\end{tikzpicture}
	\end{center}
	We choose $v_3=v_\star$ as the vertex to remove, then 
	\begin{align*}
	   \incidencematrix  &= \begin{pmatrix}
			1 & -1  \\
			1 & 0  \\
			0 & 1  \\
			0 & 1  
		\end{pmatrix}, \qquad \edgematrix = \begin{pmatrix}
			a_1 & 0 & 0 &0 \\
			0 & a_2 & 0 & 0 \\
			0 & 0 & a_3 & 0 \\
			0 & 0 & 0 & a_4
		\end{pmatrix}.
	\end{align*}
	Again, we obtain the expected quantities $s_e$ from \cref{def:se_vector},
	\begin{align*}
	 	\vec s  &=\begin{pmatrix}
		\frac 1 {\sqrt {a_1}} & 0 & 0 &0 \\
		0 & \frac 1 {\sqrt {a_2}} & 0 & 0 \\
		0 & 0 & \frac 1 {\sqrt {a_3}} & 0 \\
		0 & 0 & 0 &	\frac 1 {\sqrt {a_4}}
		\end{pmatrix}\begin{pmatrix}
		1 & -1  \\
		1 & 0  \\
		0 & 1  \\
		0 & 1  
		\end{pmatrix} \begin{pmatrix}
		x_1\\ x_2 
		\end{pmatrix}= \begin{pmatrix}
		\frac{x_1-x_2}{\sqrt{a_1}}\\ \frac{x_1  }{\sqrt{a_2}} \\
		\frac{x_2  }{\sqrt{a_3}} \\ \frac{x_2  }{\sqrt{a_4}}
		\end{pmatrix}.
	\end{align*} 
Alternatively, if we had used the non-reduced incidence matrix and \cref{se_vector_full}, we would have obtained functions involving $x_3$. Setting $x_3=0$ then reproduces the present result. 
	
\end{example}

The integrand in  \cref{def:omega1} involves the quantity $\prod_{e\in E_\Gamma} e^{-s_e^2} = e^{ -\sum_e s_e^2}$.
 The sum in the exponent is the norm squared of the vector $\vec s$ (\cref{def:se_vector}):
\begin{align}\label{exponential_laplacian}
\sum_{e\in E_\Gamma} s_e^2 &= \vec s \cdot \vec s = \left( \edgematrix^{-\frac 12} \incidencematrix \vec x \right) ^T \left( \edgematrix^{\frac 12} \incidencematrix \vec x \right)= \vec x^T \incidencematrix^T \edgematrix^{-1} \mathbb  I \vec x=\vec x^T \laplacian \vec x.
\end{align}
We have identified the reduced  labeled Laplacian $\laplacian$ (\cref{def:Laplacian}). Note that  $e^{-\vec x \laplacian \vec x}$ is familiar from the   ordinary scalar Feynman rules in parametric space, see \cref{sec:parametric_feynman_integral}.  
\begin{example}{Dunce's cap}\label{ex:dunces_cap_exponential_factor}
	The reduced  Laplacian (\cref{def:Laplacian}) of the dunce's cap (\cref{ex:dunces_cap_svector}) is
	\begin{align*}
	\laplacian &=   \incidencematrix^T \edgematrix^{-1}   \incidencematrix = \begin{pmatrix}
		\frac{1}{a_1} + \frac{1}{a_2} & -\frac{1}{a_1}  \\
		-\frac{1}{a_1} & \frac{1}{a_1} + \frac{1}{a_3} + \frac{1}{a_4}  
	\end{pmatrix}.
	\end{align*}
	The square of the $\vec s$-vector is, after sorting the terms, the sum over the $s_e^2$ for all edges $e$, as one should expect.
	\begin{align*}
	{\vec s~}^2=\vec x^T \laplacian \vec x = \frac{\left( x_1-x_2 \right) ^2}{a_1} + \frac{ x_1  ^2}{a_2} + \frac{   x_2  ^2}{a_3 } + \frac{   x_2  ^2}{a_4 }.
	\end{align*}
	
\end{example}

\subsection{Edge Differentials and integral}\label{sec:integral}
The differential form part $ds_e$ is the main difference between our calculation and more conventional Feynman integrals in parametric space, compare \cref{sec:parametric_feynman_integral}. We differentiate \cref{def:se}:
\begin{align}\label{ds}
\d s_e &= \d \left( a_e^{-\frac 12} \left( x_{v^+(e)}-x_{v^-(e)} \right)   \right) \nonumber  \\
&= -\frac 12 a_e^{-\frac 32} \left(x_{  v^+(e)}-x_{v^-(e)}  \right) \d a_e + a_e^{-\frac 1 2} \left( \d x_{v^+(e)}- \d x_{v^-(e)}  \right) \nonumber \\
&= -\frac 12 a_e^{-\frac 32} \left( \incidencematrix\vec x \right) _e \; \d a_e + a_e^{-\frac 12} \; \d  \left( \incidencematrix \vec x \right) _e.
\end{align}
Using 
\begin{align*}
	\d \mathbb  D &:= \operatorname{diag} \left( \d \vec a \right) = \operatorname{diag} \left( \d a_1, \ldots, \d a_{\abs{E_\Gamma}} \right) 
\end{align*}
and \cref{se_vector}, the differential of the entire vector $\vec s$ is
\begin{align}\label{ds_vector}
	\d \vec s	&= -\frac 12 \edgematrix^{-\frac 3 2} \; \d \edgematrix \; \incidencematrix \vec x + \edgematrix^{-\frac 12} \incidencematrix \; \d \vec x.
\end{align}
Recall that we have set $x_{\abs{V_\Gamma}}=0$, consequently, the vectors $\vec x$ and $\d \vec x$ have only $\left( \abs{V_\Gamma}-1 \right) $ components.

\begin{example}{Multiedge}\label{ex:multiedge_dsvector}
	For the multiedge of \cref{ex:multiedge_svector}, one finds
	\begin{align*}
	\d \vec s &= - \frac 12 \begin{pmatrix}
		a_1^{-\frac 3 2} & 0 \\
		0 & a_2^{-\frac 3 2}
	\end{pmatrix} \begin{pmatrix}
	\d a_1 & 0 \\
	0 & \d a_2
	\end{pmatrix}\begin{pmatrix}
	-1  \\
	-1 
	\end{pmatrix} \begin{pmatrix}
	x_1 
	\end{pmatrix} + \begin{pmatrix}
	a_1^{-\frac 1 2} & 0 \\
	0 & a_2^{-\frac 1 2}
	\end{pmatrix} \begin{pmatrix}
	-1   \\
	-1  
	\end{pmatrix}\begin{pmatrix}
	\d x_1  
	\end{pmatrix}\\
	&=    \begin{pmatrix}
		 \frac{  x_1   \d a_1}{2 a_1^{\frac 3 2}}\\
		 \frac{ x_1  \d a_2}{2 a_2^{\frac 3 2}}
	\end{pmatrix} + \begin{pmatrix}
	\frac{  - \d x_1}{a_1^{\frac 12}} \\\frac{ - \d x_1}{a_2^{\frac 12}}
	\end{pmatrix} = \begin{pmatrix}
	-\frac{1}{2 }a_1^{-\frac 32} \Big( - x_1 \;\d a_1 +2 a_1  \d x_1  \Big)  \\
	-\frac{1}{2 }a_2^{-\frac 32} \Big( - x_1 \;\d a_2 + 2 a_2   \d x_1  \Big)  
	\end{pmatrix}.
	\end{align*}
\end{example}

To remove trivial scalar factors from  $\d \vec s$  (\cref{ds_vector}), we introduce 1-forms $\vec \rho$ according to $-2 \d \vec s =  \edgematrix^{-\frac 3 2} \, \vec \rho$, or equivalently $\vec \rho:=  \d \edgematrix \; \incidencematrix \vec x -2 \edgematrix  \incidencematrix \; \d \vec x$.
For a single edge $e$, 
\begin{align}\label{def:edge_dp}
	\rho_e &= \left( \incidencematrix \vec x \right) _e \;\d a_e   - 2 a_e \left( \incidencematrix \; \d \vec x \right) _e.
\end{align}

\begin{example}{Dunce's cap}\label{ex:dunces_cap_dsvector}
	For the dunce's cap (\cref{ex:dunces_cap_svector}), one finds \cref{ds_vector}
	\begin{align*}
	\d \vec s &= -\frac 12 \begin{pmatrix}
		\frac{ (x_1-x_2)\d a_1}{  a_1^{\frac 3 2}} \\
		\frac{  x_1 \d a_2}{  a_2^{\frac 3 2}} \\
		\frac{  x_2 \d a_3}{  a_3^{\frac 3 2}} \\
		\frac{  x_2 \d a_4}{  a_4^{\frac 3 2}}
	\end{pmatrix} + \begin{pmatrix}
	\frac{\d x_1 - \d x_2}{a_1^{\frac 1 2}} \\
	\frac{    \d x_1}{a_2^{\frac 1 2}}\\
	\frac{   \d x_2}{a_3^{\frac 1 2}} \\
	\frac{    \d x_2}{a_4^{\frac 1 2}}
	\end{pmatrix}.
	\end{align*}
\Cref{def:edge_dp} then is
	\begin{align*}
	\rho_1 &= (x_1-x_2) \d a_1 - 2 a_1 \left( \d x_1 - \d x_2 \right) \\
	\rho_2 &= x_1 \d a_2 - 2 a_2 \d x_1   \\
	\rho_3 &=x_2 \d a_3 - 2 a_3  \d x_2   \\
	\rho_4 &=x_2 \d a_4 - 2 a_4  \d x_2 .
	\end{align*}
\end{example}
We have now at our disposal all ingredients to define  the integration measure   associated with the graph $\Gamma$ in \cref{def:omega}, $\bigwedge_{e\in E_\Gamma} e^{-s_e^2} \d s_e$.  It proves valuable to introduce the form 
\begin{align}\label{def:barP}
	\bar P_\Gamma &:= \rho_1 \wedge \rho_2 \wedge \ldots \wedge \rho_{\abs{E_\Gamma}} = 	\bigwedge_{e\in E_\Gamma} \rho_e.
\end{align}
 Expanding the product $\bar P_\Gamma$ gives a sum of terms, where each term is a choice of assigning to each edge of the graph one of the two summands in \cref{def:edge_dp}. In that sense, the two terms in $\rho_e$ (\cref{def:edge_dp}) can be interpreted as Feynman rules for two distinct types of edges, and   $\bar P_\Gamma$ is the sum over all possibilities to assign these types to the edges of $\Gamma$. 

\begin{example}{Multiedge}\label{ex:multiedge_U}
	For the multiedge, we have computed  $\d \vec s$ in \cref{ex:multiedge_dsvector},  and therefore
	\begin{align*}
	 & \bigwedge_{e\in E_\Gamma} e^{-s_e^2} \d s_e = e^{ - \frac{( -x_1)^2}{a_1}}  \d s_1 \wedge e^{- \frac{(-x_1)^2}{a_2}}\d s_2 = e^{ - \frac{( -x_1)^2}{a_1}- \frac{(-x_1)^2}{a_2}} \d s_1 \wedge \d s_2 \\
	&=  \frac {e^{ - \frac{( -x_1)^2}{a_1}- \frac{( -x_1)^2}{a_2}} } {4 a_1^{\frac 32} a_2 ^ {\frac 32}} \cdot  
	\Big(  -x_1  \d a_1 +2 a_1 \d x_1 \Big)    \wedge  \Big(  -x_1 \d a_2 +2 a_2 \d x_1 \Big)  \\
		&=  \underbrace{\frac {e^{ - x_1 \left( \frac{1}{a_1}+ \frac{1}{a_2} \right) x_1} } {4 a_1^{\frac 32} a_2 ^ {\frac 32}}}_{\text{scalar}} 
		\underbrace{   x_1^2  \left( \d a_1 \wedge \d a_2 \right)   + 2 x_1 a_1   \left(   \d a_2 \wedge \d x_1  \right)- 2 x_1a_2  \left(    \d a_1 \wedge \d x_1 \right)  }_{\bar P_\Gamma}. 
	\end{align*}
In the last line, we have identified the factor $\bar P_\Gamma $ according to \cref{def:barP}.
\end{example}

\begin{example}{Dunce's cap}\label{ex:dunces_cap_W}
	For the dunce's cap, starting from \cref{ex:dunces_cap_dsvector}, a tedious but straightforward calculation produces
	\begin{align*}
	\bar P_\Gamma &=x_2 \Big(  	4  a_1 a_4 x_1  \left( \d a_2 \wedge \d a_3 \wedge \d x_1 \wedge \d x_2   \right) ~-4  a_1 a_3 x_1\left( \d a_2 \wedge \d a_4\wedge \d x_1\wedge \d x_2   \right)  \\
	&\qquad  -	4 a_2 a_4 x_1 \left(\d a_1 \wedge \d a_3 \wedge \d x_1 \wedge \d x_2   \right) ~ -  4 a_2 a_3  x_2 \left(\d a_1 \wedge \d a_4 \wedge \d x_1 \wedge \d x_2     \right)  \\
	&\qquad   + 4 a_2 a_4 x_2 \left(\d a_1 \wedge \d a_3 \wedge \d x_1 \wedge \d x_2  \right) ~ +  4 a_2 a_3  x_1  \left(\d a_1 \wedge \d a_4 \wedge \d x_1 \wedge \d x_2     \right) \\
	&\qquad + 4 a_1 a_2 x_2  \left(\d a_3 \wedge \d a_4 \wedge \d x_1 \wedge \d x_2  \right)~ + 	2 a_3  x_1^2 \left(\d a_1 \wedge \d a_2 \wedge \d a_4 \wedge \d x_2   \right)     \\
	&\qquad - 2 a_4 x_1^2  \left(\d a_1 \wedge \d a_2 \wedge \d a_3 \wedge \d x_2   \right) ~ + 2 a_1 x_1 x_2  \left(\d a_2 \wedge \d a_3 \wedge \d a_4 \wedge \d x_1  \right)   \\
	&\qquad - 2 a_1 x_1 x_2  \left(\d a_2 \wedge \d a_3 \wedge \d a_4 \wedge \d x_2  \right)   ~ - 2 a_2 x_1 x_2 \left(\d a_1 \wedge \d a_3 \wedge \d a_4 \wedge \d x_1  \right)    \\
	&\qquad - 2 a_3 x_1x_2  \left(\d a_1 \wedge \d a_2 \wedge \d a_4 \wedge \d x_2    \right)  ~ + 2 a_4 x_1 x_2 \left(\d a_1 \wedge \d a_2 \wedge \d a_3 \wedge \d x_2    \right)  \\
	&\qquad+ 2 a_2 x_2^2  \left(\d a_1 \wedge \d a_3 \wedge \d a_4 \wedge \d x_1   \right)   ~- x_1 x_2^2	\left(\d a_1 \wedge \d a_2 \wedge \d a_3 \wedge \d a_4  \right)  \\
	&\qquad  + x_1^2 x_2\left(\d a_1 \wedge \d a_2 \wedge \d a_3 \wedge \d a_4   \right)   \Big).
	\end{align*}
	We note that each of the summands contains 4 differentials, reflecting the 4 edges of $\Gamma$. Some terms contain all 4 edge differentials $\d a_e$ while others contain up to two vertex differentials $\d x_v$. 
\end{example}

Knowing $\bar P_\Gamma$  from \cref{def:barP}, we now recall that $x_{\abs{V_\Gamma}}=0$, and we integrate the remaining   $\abs{V_\Gamma}-1$ vertex coordinates $x_v$ over all $\mathbb R$   according to \cref{def:omega1}. The resulting expression,
\begin{align}\label{def:omega}
	\alpha_\Gamma &=  \frac{1}{ (-2)^{\abs{E_\Gamma}}  \prod_e a_e^{\frac 3 2} }  \idotsint \limits_{\mathbb R^{\abs{V_\Gamma}-1}}e^{-\vec x^T \laplacian \vec x}\; \bar P_\Gamma,
\end{align}
is a function of $\abs{E_\Gamma}$ Schwinger parameters $a_e$ and their differentials $\d a_e$, that is, it is a differential form itself.

\begin{example}{Multiedge}\label{ex:multiedge_omega}
	For the multiedge (\cref{ex:multiedge_U}), we integrate over only a single variable, $x_1$. Explicitly, the integral is
	\begin{align*}
	\alpha_\Gamma  &= \frac{1}{ 2^{\abs{E_\Gamma}}  \prod_e a_e^{\frac 3 2} } \int \limits_{-\infty}^\infty  e^{-\vec x ^T\laplacian \vec x} \; \bar P_\Gamma
	= \int \limits_{-\infty}^\infty  \frac {e^{ - x_1 \left( \frac{1}{a_1}+ \frac{1}{a_2} \right) x_1} } {4 a_1^{\frac 32} a_2 ^ {\frac 32}}  x_1   x_1   \left( \d a_1 \wedge \d a_2 \right)  \\
	 &\quad +\int \limits_{-\infty}^\infty  \frac {e^{ - x_1 \left( \frac{1}{a_1}+ \frac{1}{a_2} \right) x_1} } {4 a_1^{\frac 32} a_2 ^ {\frac 32}}  x_1 2 a_1   \left(   \d a_2 \wedge \d x_1  \right) -\int \limits_{-\infty}^\infty  \frac {e^{ - x_1 \left( \frac{1}{a_1}+ \frac{1}{a_2} \right) x_1} } {4 a_1^{\frac 32} a_2 ^ {\frac 32}}  x_1 2 a_2  \left(    \d a_1 \wedge \d x_1 \right).  
	\end{align*}
	The first integral vanishes because it is supposed to be over $x_1$, but the integrand does not contain $\d x_1$. The remaining two integrals are
	\begin{align*}
	 \alpha_\Gamma &= \frac{   a_1}{2 a_1^{\frac 32} a_2 ^ {\frac 32}}  \d a_2 \wedge  \int \limits_{-\infty}^\infty   e^{ - x_1 \left( \frac{1}{a_1}+ \frac{1}{a_2} \right) x_1}    x_1  \;  \d x_1  \\
	 &\qquad -  \frac{   a_2}{2 a_1^{\frac 32} a_2 ^ {\frac 32}}  \d a_1 \wedge  \int \limits_{-\infty}^\infty  e^{ - x_1 \left( \frac{1}{a_1}+ \frac{1}{a_2} \right) x_1}   x_1   \; \d x_1    \\
	 &= 0. 
	\end{align*}
	Each of the two remaining integrals vanishes because it is an antisymmetric integrand proportional to $ x_1 \d x_1$, integrated over a symmetric domain. 
\end{example}

\section{Integrand in terms of trees}\label{sec:integrand}

The polynomial $\bar P_\Gamma$ from \cref{def:barP}, 
\begin{align}\label{def:barP2}
	\bar P_\Gamma &:= \rho_1 \wedge \rho_2 \wedge \ldots \wedge \rho_{\abs{E_\Gamma}},
\end{align}
contains  many summands, most of which vanish under the integral \cref{def:omega}. Only such terms survive which have exactly $(\abs{V_\Gamma}-1)$ factors of $\d x_j$. 
Clearly, $\bar P_\Gamma$ has no summands with more than $(\abs{V_\Gamma}-1)$ differentials $\d x_i$ because this is the number of vertices, and due to the antisymmetry of the wedge product  no vertex can appear twice. But there will be summands in $\bar P_\Gamma$ with less than $ (\abs{V_\Gamma}-1)$ vertex differentials. 

Knowing that the latter vanish under the integral, we will from now on leave out these terms and define 
\begin{align}\label{def:P}
P_\Gamma &:= \bar P_\Gamma \Big|_{\text{only terms containing $(\abs{V_\Gamma}-1)$ vertex differentials $\d x_j$}}.
\end{align}
On the other hand, we can use  \cref{def:edge_dp} for $\rho_e$ in \cref{def:barP2} and obtain
\begin{align}\label{Pgamma_factors}
\bar P_\Gamma &= \big( \d a_1 (\incidencematrix\vec x)_1 - 2 a_1 (\incidencematrix\d \vec x)_1 \big) \wedge \ldots \wedge  \big( \d a_{\abs{E_\Gamma}} (\incidencematrix\vec x)_{\abs{E_\Gamma}}  - 2 a_{\abs{E_\Gamma}}  (\incidencematrix\d \vec x)_{\abs{E_\Gamma}}  \big).
\end{align}
In this form, we see that   exactly  $ (\abs{V_\Gamma}-1)$  factors contribute their $\d x$ part, and the remaining 
\begin{align}\label{loop_number}
	\abs{E_\Gamma}- \left( \abs{V_\Gamma}-1 \right) = L_\Gamma
\end{align}
factors contribute their $d a_e (\incidencematrix \vec x)_e$. This integer $L_\Gamma\geq 0$ is the \emph{loop number}\footnote{$L_\Gamma$ is the dimension of the cycle space of $\Gamma$, known as \emph{first Betti number} in the mathematical literature.} of $\Gamma$ 

Let $T$ be a set of exactly $(\abs{V_\Gamma}-1)$ edges $T \subset E_\Gamma$. The   terms in $P_\Gamma$ which have a chance of not vanishing under the integral \cref{def:omega} are then a subset of the sum over all such sets $E$, namely the edges in $T$ should contribute their $\d x$ term,
\begin{align}\label{Pgamma_factors2}
P_\Gamma &= \sum_{\substack{T \subseteq E_\Gamma \\
\abs{T}=\abs{V_\Gamma}-1}} \sgn(T) \left( \prod_{e \notin T} \left( \incidencematrix \vec x \right)_e \right) \left(  \prod_{e \in T}- 2 a_e  \right) \left(  \bigwedge_{e \notin T} \d a_e \right)     \left(  \bigwedge_{v \in V_\Gamma}   \d  x_v \right)  . 
\end{align}
Here, $\sgn(T)$ is a sign which will be discussed in \cref{sec:signs}. 
A special case of graph $\Gamma$ is a tree (that is, a connected graph without cycles, or with loop number $L_\Gamma=0$). A tree has exactly $\abs{V_\Gamma}-1 $ edges. Consequently, for a tree every edge contributes its $\d x$ factors and there are no terms proportional to $\d a_e$ or to $x_v$. 
If $\Gamma$ is a tree then there is only one choice  of selecting the edges $T$ in \cref{Pgamma_factors2}, namely   $T=E_\Gamma$, and no differentials $\d a_e$ appear at all:
\begin{align}\label{integrand_tree}
		P_\Gamma &=   \sgn(E_\Gamma)  \left(  \prod_{e \in E_\Gamma} -2 a_e  \right)   \left(  \bigwedge_{v \in V_\Gamma} \d \vec x _v \right)  \qquad \text{(if  $\Gamma$ is a tree)}. 
\end{align}

\begin{example}{Dunce's cap}\label{ex:W_computed}
	We gave $\bar P_\Gamma$ for the dunce's cap in \cref{ex:dunces_cap_W}. We now leave out all summands which don't contain both $\d x_1$ and $\d x_2$.  The result is 
	\begin{align*}
		&P_\Gamma= 4x_2 x_1  \Big( -   a_1 a_3  \left( \d a_2 \wedge \d a_4  \right)   + 	   a_1 a_4  \left( \d a_2 \wedge \d a_3    \right)   \\
		&\qquad \qquad   -  a_2 a_4   \left(\d a_1 \wedge \d a_3   \right)  +   a_2 a_3  \left(\d a_1 \wedge \d a_4     \right)\Big)\wedge \d x_1\wedge \d x_2 \\
		&\quad  +4 x_2^2 \Big(   -  a_2 a_3  \left(\d a_1 \wedge \d a_4     \right)    +    a_1 a_2    \left(\d a_3 \wedge \d a_4   \right)   +   a_2 a_4   \left(\d a_1 \wedge \d a_3  \right)      \Big)\wedge \d x_1\wedge \d x_2.
	\end{align*}
	One could be tempted to assume that the first summand vanishes upon integration due to being linear in $x_1$ and $x_2$, similar to \cref{ex:multiedge_omega}. But this is not so. The reason is that the Laplacian $\laplacian$   in  $e^{-\vec x^T \laplacian \vec x}$ is not diagonal. Changing the variables of integration such that the matrix becomes diagonal, the factor $x_1 x_2$ transforms into non-vanishing terms like $x_1^2, x_2^2$. 
\end{example}

We do not yet know in general how many summands there  are in  \cref{Pgamma_factors2} because each of the factors $(\incidencematrix \d \vec x)_e$ and  $(\incidencematrix\vec x)_e$ can consist of one or two terms. We now examine how these factors interact. 

\begin{figure}[htb]
	\begin{center}
		\begin{tikzpicture}
			\node[vertex, label=above:{$v_1$}](v1) at (0,0){};
			\node[vertex, label=above:{$v_2$}](v2) at (1,.3){};
			\node[vertex, label=above:{$v_3$}](v3) at (2,.6){};
			
			\draw[edge, ->](v2) to node[pos=.5, below]{$e_1$}   (v1) ;
			\draw[edge, ->](v3) to node[pos=.5, below]{$e_2$}(v2); 
		\end{tikzpicture}
	\end{center}
	\caption{Two edges $e_1, e_2$   adjacent to the same vertex $v_2$.}
	\label{fig:edges_shared_vertex}
\end{figure}

For a given edge $e$, the factor $\left( \incidencematrix \vec x \right) _e$ is  the difference of the two vertices adjacent to $e$. Assume that two edges $e_1=(v_2 \rightarrow v_1)$ and $e_2=(v_3 \rightarrow v_2)$ share one vertex as in \cref{fig:edges_shared_vertex}. Then 
\begin{align*}
\left( \incidencematrix \vec x \right) _1 \left( \incidencematrix \vec x \right) _2 &= \left( x_1- x_2 \right) \left( x_2-x_3 \right) =x_1 x_2 - x_2^2 - x_1 x_3 + x_2 x_3.
\end{align*}
For the differentials, the quadratic term vanishes:
\begin{align*}
\left( \incidencematrix \d \vec x \right)_1 \wedge \left( \incidencematrix \d \vec x \right) _2 &= \left( \d x_1- \d x_2 \right) \wedge \left( \d x_2- \d x_3 \right)\\
&= \d x_1 \wedge \d x_2  - \d x_1 \wedge \d x_3 + \d x_2 \wedge \d x_3.
\end{align*}
This phenomenon can be interpreted graphically.  We have the two edges adjacent to the vertex $v_2$, and the non-vanishing terms in the product of differentials are exactly the choices of two distinct vertices. The possible choices are $\left \lbrace v_1,v_2 \right \rbrace $ or $\left \lbrace v_1,v_3 \right \rbrace $ or $\left \lbrace v_2,v_3 \right \rbrace $. Equivalently: For every edge $e$, select exactly one of the two adjacent vertices, and don't select any vertex twice.

On the other hand, for the integral \cref{def:omega} to be non-zero, every vertex except for the special vertex $v_\star$ must be selected. Hence, for every vertex except $v_\star$, we must select one of the adjacent edges. These pairs $(v,e)$ of a vertex and one of its adjacent edges are \enquote{flags}, or equivalently, half-edges. In a set $T$ of the sum \cref{Pgamma_factors2}, every edge appears at most once, hence, none of the pairs $(v,e)$ can include the same edge more than once.  

\begin{example}{Dunce's cap}\label{ex:dunces_cap_flags}
	Recall the dunce's cap from \cref{ex:dunces_cap_svector},
	\begin{center}
	\begin{tikzpicture}[scale=.7]
		\node at (-1.5,0){$\Gamma=$};
		\node[vertex, label=left:{$v_1$}](v1) at (0,0){};
		\node[vertex, label={$v_2$}](v2) at (3,1){};
		\node[vertex, label=below:{$v_3$}](v3) at (3,-1){};
		
		\draw[edge, ->](v2) to node[pos=.5, above]{$e_1$}   (v1) ;
		\draw[edge, ->](v3) to node[pos=.5, below]{$e_2$}(v1);
		\draw[edge, ->, bend angle=30,bend left](v3) to node[pos=.5, left]{$e_3$}(v2);
		\draw[edge, ->, bend angle=30, bend right](v3) to node[pos=.5, right]{$e_4$}(v2);
	\end{tikzpicture}
\end{center}

	Fixing $x_3=x_\star=0$, the two remaining vertices   allow for five different selections of flags:
	\begin{align*}
	&\left \lbrace (v_1, e_1), (v_2, e_3) \right \rbrace ,  \left \lbrace (v_1, e_1), (v_2, e_4) \right \rbrace\\
	&\left \lbrace (v_1, e_2), (v_2, e_1) \right \rbrace, \left \lbrace (v_1, e_2), (v_2, e_3) \right \rbrace, \left \lbrace (v_1, e_2), (v_2, e_4) \right \rbrace.
	\end{align*}
\end{example}

In \cref{ex:dunces_cap_flags}, the five possible ways to select $T$ correspond to the five spanning trees of $\Gamma$, and selecting the set of edges $T$ already implies the choice of flags, but that is coincidence because the graph is so small. 
In general, a fixed set $T$ of edges in \cref{Pgamma_factors2} may allow for several choices of flags. In particular, it might be possible to select a cycle of edges which does not include the special vertex $v_\star$ (it is impossible to select a cycle containing $v_\star$ because there is no $\d x_\star$, and hence no pair $(v_\star,e)$ for any edge $e$).

\begin{lemma}\label{lem:cycles_vanish}
	Fix a set of edges $T \subseteq E_\Gamma$ in the sum in \cref{Pgamma_factors2} for $P_\Gamma$. 
	When selecting flags, it is permissible to select flags such that they form a cycle, but the two possible directions of flags in a cycle will always cancel in the integrand. Consequently, one can leave out all flags that form a cycle.
\end{lemma}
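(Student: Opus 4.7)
The plan is to fix the edge set $T$ in \cref{Pgamma_factors2} together with the flag choice on every edge of $T$ outside a subcollection $C\subseteq T$ whose edges form a cycle $v_1,e_1,v_2,e_2,\ldots,v_k,e_k,v_1$ of $\Gamma$; necessarily $v_\star$ is not a vertex of $C$, because the flag $(v_\star,e)$ would produce $\d x_{v_\star}$, which is absent. The two ``cyclic'' flag assignments on $C$ are
\begin{align*}
F_+ = \{(v_i,e_i)\}_{i=1}^k, \qquad F_- = \{(v_{i+1},e_i)\}_{i=1}^k,
\end{align*}
with indices taken modulo $k$. Both $F_+$ and $F_-$ select the same vertex set $\{v_1,\ldots,v_k\}$, so when combined with the fixed flags outside $C$ they contribute to the same vertex monomial $\bigwedge_{v\in V_\Gamma\setminus\{v_\star\}}\d x_v$ up to sign. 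I will show that the two signs are opposite.

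The sign comparison splits into two independent pieces. First, because $(\incidencematrix\,\d\vec x)_{e_i}=\d x_{v^+(e_i)}-\d x_{v^-(e_i)}$ and $\{v_i,v_{i+1}\}=\{v^+(e_i),v^-(e_i)\}$, toggling the flag of $e_i$ between its two endpoints flips the scalar sign. Hence the $k$ flips from $F_+$ to $F_-$ collectively contribute a factor $(-1)^k$. Second, I compare the wedge products of $\d x$'s at the $k$ slots of $C$ inside $\bigwedge_{e\in T}(\incidencematrix\,\d\vec x)_e$: at slot $e_i$, the differential is $\d x_{v_i}$ in $F_+$ and $\d x_{v_{i+1}}$ in $F_-$. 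After rearranging both wedges into a common vertex order, the relative sign is that of the cyclic permutation $(v_1\,v_2\,\cdots\,v_k)$, namely $(-1)^{k-1}$. Multiplying the two factors yields $(-1)^k(-1)^{k-1}=-1$, so the contributions of $F_+$ and $F_-$ cancel.

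All remaining data in \cref{Pgamma_factors2} is unchanged under toggling between $F_+$ and $F_-$: the scalar coefficient $\sgn(T)\prod_{e\notin T}(\incidencematrix\vec x)_e\prod_{e\in T}(-2a_e)$ and the differentials $\bigwedge_{e\notin T}\d a_e$ depend only on $T$ and on the flags outside $C$, not on the cyclic orientation chosen along $C$. I would also observe that reversing the arbitrary direction of any $e_i$ swaps $v^+(e_i)\leftrightarrow v^-(e_i)$ and therefore flips the scalar sign simultaneously for $F_+$ and $F_-$, preserving their ratio; the argument is independent of the choice of edge orientations in $\Gamma$.

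The main obstacle I anticipate is purely notational: cleanly isolating the $C$-slots in the wedge $\bigwedge_{e\in T}(\incidencematrix\,\d\vec x)_e$ so that the cyclic permutation becomes visible without stray overall signs. A convenient device is to permute the wedge so that the edges of $C$ appear consecutively; this introduces a sign that depends on $T$ and $C$ but is identical for $F_+$ and $F_-$, so it drops out of the comparison and does not enter the final cancellation.
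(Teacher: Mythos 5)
Your proposal is correct and follows essentially the same route as the paper: both arguments compare the two cyclic flag orientations on $C$, extract a factor $(-1)^k$ from choosing the opposite endpoint in each $(\incidencematrix\,\d\vec x)_{e_i}$ and a factor $(-1)^{k-1}$ from the cyclic shift of the $\d x$ factors, and conclude the contributions cancel while everything outside $C$ is untouched. Your unified sign computation $(-1)^k(-1)^{k-1}=-1$ is in fact a bit cleaner than the paper's separate even/odd case discussion, but it is the same argument.
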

\begin{proof}
    Let  $C:=\left \lbrace e_1, \ldots, e_n \right \rbrace \subset T$ be a $n$-edge cycle in the set $T$ of selected edges, where  $\left \lbrace v_a, v_b, \ldots, v_f \right \rbrace   $ are the vertices in the cycle (these vertices might be adjacent to additional edges not in the cycle). There are two ways of selecting the flags in the cycle, namely, the two directions of the cycle. 
	The first direction of selecting flags produces an expression proportional to 
	\begin{align*}
		\d x_a \wedge \d x_b \wedge \ldots \wedge \d x_e \wedge \d x_f,
	\end{align*}
	where there are $n$ factors in the wedge product. The second choice of direction means that for every edge, the other one of the two end vertices is chosen. Since the signs of vertices in an edge are opposite, this gives a $(-1)$ for each edge. Additionally, the order of differentials is altered because if we select the \emph{other} vertex at every edge, then, keeping the order of edges the same, all vertex indices get shifted by one. We obtain
	\begin{align*}
		\left( -\d x_f \right) \wedge \left( -\d x_a \right) \wedge \ldots \wedge (-\d x_e) .
	\end{align*}
	Let the number of edges $n$ in the cycle be even, then the number of vertices is even, too and there is an even number of minus signs. They cancel, leaving us with
	\begin{align*}
		\d x_f \wedge \d x_a \wedge \d x_b \wedge \ldots 
		&= (-1)^{n-1} \d x_a \wedge \d x_b \wedge \ldots \wedge \d x_b\\
		&= -\d x_a \wedge \d x_b \wedge \ldots \wedge \d x_b.
	\end{align*}
	This is the same term as for the first choice of flag direction, but with opposite sign. As the remaining parts of the integrand are unaffected by this choice, the two flags with opposite directions cancel out each other. 
	
	If the number $n$  of edges in the cycle is odd, the same mechanism is at work, but this time, an overall factor $(-1)$ remains from changing sign of all differentials. Conversely, one now needs an even number of permutations to restore the original order, and again, the two choices of cycle direction contribute the same term, but with opposite signs. 
	
\end{proof}

\begin{lemma}\label{lem:spanning_trees}
	\begin{enumerate}
		\item If $\Gamma$ is an arbitrary graph, then only those sets of flags contribute where their edges $T$ in \cref{Pgamma_factors2} form a spanning tree of $\Gamma$. 
		\item For every such spanning tree $T$, there is exactly one choice of flags that contributes to \cref{Pgamma_factors2}. That is, selecting the edges $T$ is equivalent to fixing the flags. 
	\end{enumerate}
\end{lemma}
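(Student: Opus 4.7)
The plan is to re-interpret the flag selection implicit in \cref{Pgamma_factors2} as a combinatorial matching problem and then use \cref{lem:cycles_vanish} to kill the non-tree cases. Concretely, for each $T\subseteq E_\Gamma$ with $\abs{T}=\abs{V_\Gamma}-1$, a surviving summand of $P_\Gamma$ corresponds to a bijection $\phi\colon T\to V_\Gamma\setminus\{v_\star\}$ such that $\phi(e)$ is one of the two endpoints of $e$. Each such $\phi$ records, for every edge $e\in T$, the choice of vertex whose differential $\d x_{\phi(e)}$ is picked out of $(\incidencematrix\,\d\vec x)_e$, and the image of $\phi$ must cover $V_\Gamma\setminus\{v_\star\}$ exactly once for the wedge product of the selected $\d x$'s to be nonzero.

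For part (2), take $T$ to be a spanning tree and root it at $v_\star$. Every non-root vertex $v$ has a unique parent edge $e_v\in T$, and setting $\phi(e_v):=v$ yields a bijection of the required form. Uniqueness follows by pruning: any leaf $v\neq v_\star$ of $T$ has only one incident $T$-edge, so the preimage $\phi^{-1}(v)$ is forced; removing the leaf and its parent edge produces a smaller spanning tree of the remaining graph to which the argument applies inductively.

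For part (1), suppose $T$ is not a spanning tree. Since $\abs{T}=\abs{V_\Gamma}-1$, the subgraph on $V_\Gamma$ spanned by $T$ is disconnected and carries at least one cycle. Write its connected components as $C_1,\ldots,C_k$ with $v_\star\in C_1$. Any valid $\phi$ restricts on each $C_j$ to a bijection from $E(C_j)$ to $V(C_j)\setminus\{v_\star\}$, forcing $\abs{E(C_1)}=\abs{V(C_1)}-1$ and $\abs{E(C_j)}=\abs{V(C_j)}$ for $j\geq 2$. If these counts fail then no $\phi$ exists and $T$ contributes zero; otherwise $C_1$ is a tree and each $C_j$ with $j\geq 2$ is unicyclic. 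The pendant-tree part of each $C_j$ is pinned down by iterated leaf-pruning exactly as in part (2), leaving precisely two flag choices on the unique cycle of each $C_j$, namely the two cyclic orientations. By \cref{lem:cycles_vanish} these two orientations contribute with opposite signs and cancel, and taking the product of independent cancellations over $j=2,\ldots,k$ wipes out the entire $T$-summand.

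The main obstacle is this second case: one must check that after stripping away the pendant trees of each unicyclic $C_j$ by repeated leaf-pruning, the only remaining freedom is indeed the choice of cycle direction, so that \cref{lem:cycles_vanish} applies cleanly and pairs the $2^{k-1}$ valid flag assignments into cancelling partners. This reduces to the elementary observation that any leaf of $T$ incident to a still-uncovered vertex is forced to cover that vertex, combined with the component-wise edge--vertex count above.
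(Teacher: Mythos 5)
Your proof is correct and takes essentially the same route as the paper: both parts rest on the cycle-orientation cancellation of \cref{lem:cycles_vanish} combined with the edge--vertex count $\abs{T}=\abs{V_\Gamma}-1$, and your leaf-pruning argument for part (2) is the paper's recursive fixing of flags starting from $v_\star$. The only difference is organizational --- the paper concludes positively that acyclic flag selections with $\abs{V_\Gamma}-1$ edges covering all of $V_\Gamma$ must form a spanning tree, while you argue contrapositively by decomposing a non-tree $T$ into a tree component containing $v_\star$ and unicyclic components whose two cycle orientations cancel, which is the same counting argument in contrapositive form.
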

\begin{proof}
	1.  We know that the relevant flags are those without cycles, hence they are forests. For a graph on $\abs{V_\Gamma}$ vertices, these flags contain $\abs{V_\Gamma}-1$ edges because every edge contributes the $\d x_v$ of one vertex, and they must cover every vertex except $v_\star$. A forest with $\abs{V_\Gamma}-1$ edges is necessarily adjacent to at least $\abs{V_\Gamma}$ distinct vertices. But the graph does not have more vertices than that, hence, the forest is adjacent to exactly $\abs{V_\Gamma}$ vertices. A forest with $\abs{V_\Gamma}-1$ edges and $\abs{V_\Gamma}$ vertices  is a tree (and not a disconnected forest). Hence, the edges $T$ of the flag form a tree that is adjacent to every vertex of $\Gamma$, that is, a spanning tree.
	
	2. Every vertex of $\Gamma$, even the special one $v_\star$, is incident to at least one edge of the flag. On the other hand, the coordinate $x_\star$ is not being integrated, hence, there is no $\d x_\star$. All edges of the tree which is adjacent to $v_\star$ must therefore contribute their \emph{other} vertex differential, say $\d x_i$, not $\d x_\star$. Hence, for these edges, there is only one choice to select the flag. But now, the differential of $x_i$ has been used, and none of the other edges of $T$ which are adjacent to $v_i$ can contribute $\d x_i$. Consequently, each of these edges has only one possible choice of flag. This argument continues until the flags of all edges are fixed. 
	
	Equivalently, one can interpret a choice of flag as a choice of direction for every edge. The special vertex $v_\star$ fixes the direction of all edges adjacent to it, say, oriented towards $v_\star$. Recursively, this fixes the directions of all edges in the tree, and $T$ is a \emph{directed} spanning tree where all edges are oriented towards $v_\star$.
\end{proof}

We  can now write $P_\Gamma$ (\cref{def:P}) as a sum over spanning trees $T$.   Rewriting \cref{Pgamma_factors2}, we have
\begin{align}\label{Pgamma_factors3}
	P_\Gamma &=  \sum_{T \text{ spanning}} \sgn(T)  \underbrace{\left( \prod_{e \notin T} \left( \incidencematrix \vec x \right)_e \right) }_{=:X_T} \left(  \prod_{e \in T} -2 a_e  \right) \left(  \bigwedge_{e \notin T} \d a_e \right)      \left(  \bigwedge_{v \in V_\Gamma}   \d  x_v \right) .
\end{align}
In \cref{Pgamma_factors3}, there are exactly $L_\Gamma$ (\cref{loop_number}) 
edge differentials $\d a_e$.
At the same time,  $P_\Gamma$ is a polynomial in $\left \lbrace x_i \right \rbrace $ of degree  $L_\Gamma$, because every edge not in the spanning tree produces a linear term $(\incidencematrix \vec x)_e$, and they form the polynomial $X_T$. 
The sum over the edges not in a spanning tree appears in the definition of the first Symanzik polynomial $\psi_\Gamma$ (\cref{def:Symanzik_polynomial}), so $P_\Gamma$ can be interpreted as a non-commuting version of $\psi_\Gamma$.

\begin{example}{Dunce's cap}\label{ex:dunces_cap_trees}
	For the dunce's cap with our choice of fixed vertex, we saw in \cref{ex:dunces_cap_flags} that it is not even possible to select a flag which  contains a cycle, but that is coincidence for this small example.  Using $x_3=0$, the sum \cref{Pgamma_factors3} contains five summands, which are
	\begin{align*}
		&  4 a_1 a_3     \left( x_1 - 0 \right) \left( x_2-0 \right) \left( \d a_2 \wedge \d a_4  \right)\left( \d x_1 \wedge \d x_2 \right), \\
		& 4 a_1 a_4  \left( x_1-0 \right) \left( x_2-0 \right) \left( \d a_2 \wedge \d a_3 \right) \left( \d x_1 \wedge \d x_2 \right), \\
		& 4 a_1 a_2  \left( x_2-0 \right) \left( x_2-0 \right)  \left( \d a_3 \wedge \d a_4 \right) \left( \d x_1 \wedge \d x_2 \right), \\
		&  4 a_2 a_3 \left( x_1-x_2 \right) \left( x_2-0 \right) \left( \d a_1 \wedge \d a_4 \right)  \left( \d x_1 \wedge \d x_2 \right),  \\
		& 4 a_2 a_4 \left( x_1-x_2 \right) \left( x_2-0\right) \left( \d  a_1 \wedge \d a_3 \right) \left( \d x_1 \wedge \d x_2 \right) .
	\end{align*}
	This reproduces the terms of $P_\Gamma$ from \cref{ex:W_computed}. Their signs will be discussed in \cref{sec:signs}.
	
\end{example}

\subsection{Signs}\label{sec:signs}

We now turn to the signs in  \cref{Pgamma_factors3}. Note that in each summand, the factors $\d a_e$ in the wedge product   are sorted in increasing order by  \cref{def:barP}.
There are three contributions to the sign $\sgn(T)$:
\begin{enumerate}
	\item Permuting the wedge products such that all $\d a_e$ come before all $\d x_v$.
	\item The $\d x_v$ come in an order that is implied by the labeling of edges, namely $(\incidencematrix \d \vec x)_e$. They need to be brought into increasing order.
	\item The $\d x_v$ themselves potentially carry a minus sign, depending on whether they are the start or the end of an edge. 
\end{enumerate}
Note that the latter two points only concern those edges which are in the tree $T$. Conversely, the edges not in the tree make the factor $X_T$ in \cref{Pgamma_factors3}, which also carries some sign, but that sign is  not included in  $\sgn(T)$.

\subsubsection[Permuting da out of dx]{Permuting $\d a$ out of $\d x$}

The first point in the above list is relatively straightforward. We know that $\d a_j$ appears at position $j$ in the original wedge product \cref{def:barP}, so it acquires a sign $(-1)^{k-j}$ when it is moved into position $k$. The $\d a_j$ were initially ordered, and we want them to stay ordered, so they never cross each other. The only thing that matters is how many $\d x_v$ they need to pass when they are moved in front of all the $\d x$'s. 

In the simplest case, the tree $T$ consists of exactly the last $(\abs{E_\Gamma}-L_\Gamma)$ edges. Then the first $L_\Gamma$ edges are not in the tree, and the first $L_\Gamma$ factors in the wedge product are $\d a_j$ and they never need to cross any $\d x$. These $\d a_j$ have indices $\left \lbrace 1,2,\ldots, L_\Gamma \right \rbrace $, and the sum of their indices $j$   is $\frac{L_\Gamma(L_\Gamma+1)}{2}$.

Every move of a single element $\d a_j$ by one step flips the sign. Simultaneously, when an element of $T$ is moved by one, the sum of the indices occupied by $T$ changes by one. We want the $\d a$'s to be in front of all $\d x$'s, hence, for an arbitrary $T$, the number of permutation steps is given by the sum of indices in $T$. Overall, the sign produced by moving the $\d a$'s in front of the $\d x$'s is
\begin{align*}
\left( -1 \right) ^{\sum_{e_j\notin T} j - \frac{L_\Gamma(L_\Gamma+1)}{2} }.
\end{align*}
If $L_\Gamma$ is even, then $\frac {L_\Gamma} 2$ is an integer and hence $\frac{L_\Gamma^2}{4}$ is an integer and therefore $\frac{L^2_\Gamma}{2}$ is even. An even number does not change the sign, so  we may write 
\begin{align}\label{sign_factor1}
	\left( -1 \right) ^{\sum_{e_j\notin T} j - \frac {L_\Gamma}2 } \qquad \text{for even $L_\Gamma$}.
\end{align}
Having done these permutations, all $\d a$'s are in front of all $\d x$'s, and the $\d a$'s are still ordered increasingly. But the $\d x$'s may be in any order, as implied by the incidence matrix.

\subsubsection{Signs arising from the incidence matrix}

The remaining sign arises from two effects that are closely related: The relative ordering of the $\d x_j$, and the signs they obtain from being at the beginning or the end of an edge. Both are directly given by the incidence matrix $\incidencematrix$ (\cref{def:incidence_matrix}) since we are considering $\prod_{e \in T}(\incidencematrix \d \vec x)_e$. We want to bring these $\d x$ into   increasing order. By $\incidencematrix[T]$, we denote the incidence matrix $\incidencematrix$ restricted to the rows that correspond to edges in $T$, compare \cref{def:minors}. Now observe that if $\incidencematrix[T]$ had no zeroes on the diagonal, the factors $\d x_v$ would be sorted increasingly and the sign would be given by the product of the diagonal entries of $\incidencematrix[T]$ (there is no contribution by off-diagonal elements because the requirement that $T$ is a tree exactly means that there is only one choice of edges in $T$, namely $T$ itself, that connects all vertices).

If $\incidencematrix[T]$ has zeroes on the diagonal, we obtain an additional sign from the permutation of rows that is required to produce a main diagonal without zero entries. But this sign is exactly what a determinant computes: $\det(\incidencematrix[T])$ equals the product of diagonal entries for a diagonal matrix, and otherwise the sign of the required row permutation times the product of the would-be diagonal entries. Hence, the sign generated from permuting the $\d x_v$, and from their individual signs, is given by $\det \left( \incidencematrix[T] \right) $. Another way to see this is to start from the axiomatic definition of the determinant: $\det(\incidencematrix[T])$ will flip its sign when two rows are exchanged, or when any row in $\incidencematrix[T]$ is multiplied by $-1$. The first case amounts to exchanging two factors in the wedge product of the $\d x_v$, the second case amounts to flipping the direction of any one edge in $T$.

Combining this result with \cref{sign_factor1}, we have established: 
\begin{lemma}\label{lem:sign_T}
	Let $\Gamma$ be a graph of even loop order $L_\Gamma$. 
	Let $T\subseteq E_\Gamma$ be a spanning tree (where the edges are ordered increasingly with respect to their label). Let $\incidencematrix[T]$ be the reduced incidence matrix, restricted to the edges in $T$. Then the sign of the differential $\bigwedge \d a_{\Gamma\setminus T}$ in \cref{Pgamma_factors3} is 
	\begin{align*} 
		\sgn(T) &=   (-1)^{ \sum_{e_i\notin T}i-\frac {L_\Gamma} 2} \det\big( \incidencematrix[T] \big) .
	\end{align*}
\end{lemma}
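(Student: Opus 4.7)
The plan is to formalize the two independent sign contributions already identified in the prose above the lemma, and show that they multiply to the claimed expression. These correspond to (i) the shuffle needed to bring all $\d a_e$-factors to the front of the wedge product and (ii) the reordering and re-signing encoded in the remaining block $\bigwedge_{e \in T}(\incidencematrix \d \vec x)_e$. Since every other factor in \cref{Pgamma_factors3} is either an explicit scalar or an already-ordered wedge, $\sgn(T)$ is exactly the product of these two contributions.

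For step (i) I would argue as follows. In $\bar P_\Gamma = \rho_1 \wedge \cdots \wedge \rho_{\abs{E_\Gamma}}$, selecting $T$ amounts to picking the $\d a_e$-summand of $\rho_e$ at the $L_\Gamma$ positions $e\notin T$ and the $(\incidencematrix\d\vec x)_e$-summand at the remaining $\abs{V_\Gamma}-1$ positions. Label the non-tree indices $j_1 < \cdots < j_{L_\Gamma}$. The unique order-preserving shuffle that moves all $\d a$-factors to the front sends $\d a_{j_k}$ past exactly $j_k - k$ elements of the $\d x$-block, so the total shuffle sign is $(-1)^{\sum_k (j_k-k)} = (-1)^{\sum_{e\notin T} e - L_\Gamma(L_\Gamma+1)/2}$. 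For even $L_\Gamma$, $L_\Gamma(L_\Gamma+1)/2 = L_\Gamma^2/2 + L_\Gamma/2 \equiv L_\Gamma/2 \pmod 2$, yielding exactly the exponent in \cref{sign_factor1}.

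For step (ii) I would expand $\bigwedge_{e \in T}(\incidencematrix \d \vec x)_e = \bigwedge_{e \in T}\sum_v \incidencematrix_{e,v}\,\d x_v$, taking the edges $e$ in their label order. Antisymmetry annihilates every summand that repeats a vertex, so the only surviving terms correspond to bijections $\sigma : T \to V_\Gamma \setminus \{v_\star\}$ with every $\incidencematrix_{e,\sigma(e)}$ non-zero; the existence of any such $\sigma$ is guaranteed by \cref{lem:spanning_trees} since $T$ is a spanning tree. Reordering each surviving $\bigwedge_{e \in T}\d x_{\sigma(e)}$ into the canonical vertex order contributes $\sgn(\sigma)$, and the resulting sum $\sum_\sigma \sgn(\sigma)\prod_e \incidencematrix_{e,\sigma(e)}$ is precisely the Leibniz expansion of $\det(\incidencematrix[T])$.

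Multiplying the two contributions gives the claimed formula. The main obstacle I anticipate is purely arithmetic: the simplification $L_\Gamma(L_\Gamma+1)/2 \equiv L_\Gamma/2 \pmod 2$ used to pass from the exact swap count to the neater exponent in \cref{sign_factor1} requires $L_\Gamma$ to be even, which is precisely the hypothesis of the lemma (and would force a less tidy formulation in the odd-$L_\Gamma$ case). Everything else is careful bookkeeping over already-visible data.
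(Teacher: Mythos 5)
Your proposal is correct and follows essentially the same route as the paper: the same split of $\sgn(T)$ into the shuffle sign $(-1)^{\sum_{e_j\notin T} j - L_\Gamma(L_\Gamma+1)/2}$ (with the even-$L_\Gamma$ parity simplification of \cref{sign_factor1}) and the reordering/entry signs of $\bigwedge_{e\in T}(\incidencematrix\,\d\vec x)_e$, which you identify with $\det\big(\incidencematrix[T]\big)$ via the Leibniz expansion rather than the paper's slightly more informal row-permutation argument. The only caveat is one you share with the paper: both treatments silently absorb the factor $(-1)^{\abs{T}}$ coming from the coefficients $-2a_e$ in \cref{def:edge_dp} when passing from \cref{Pgamma_factors2} to \cref{Pgamma_factors3}, a graph-dependent but $T$-independent constant that is immaterial for \cref{thm:QE_0}.
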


\begin{example}{Dunce's cap}\label{ex:dunces_cap_signs}
	From \cref{ex:dunces_cap_svector} we have for the dunce's cap
		\begin{align*}
		\incidencematrix &= \begin{pmatrix}
			1 & -1  \\
			1 & 0 \\
			0 & 1 \\
			0 & 1 
		\end{pmatrix}.
	\end{align*}
	The graph has five distinct spanning trees $T$, each of which consists of two edges. 
Plugging this into \cref{lem:sign_T}, we find: 
	\begin{align*}
		T   &: \hspace{.7cm}\left \lbrace 1,3 \right \rbrace \hspace{1.25cm}   \left \lbrace 1,4 \right \rbrace \hspace{1.25cm}\left \lbrace 1,2 \right \rbrace \hspace{1.2cm} \left \lbrace 2,3 \right \rbrace \hspace{1.25cm}\left \lbrace 2,4 \right \rbrace\\
			\incidencematrix[T]&: \quad  \begin{pmatrix}
			1 & -1\\
			0 & 1
		\end{pmatrix}  \hspace{.6cm} \begin{pmatrix}
			1 & -1\\
			0 & 1
		\end{pmatrix}  \hspace{.6cm} \begin{pmatrix}
			1 & -1\\
			1 & 0
		\end{pmatrix}  \hspace{.7cm} \begin{pmatrix}
			1 & 0\\
			0 & 1
		\end{pmatrix}   \hspace{.9cm} 		\begin{pmatrix}
			1 &0\\
			0 & 1
		\end{pmatrix}\\
		\det\big(\incidencematrix[T]\big)&: \hspace{1.1cm} 1 \hspace{2.1cm} 1 \hspace{2.1cm} 1 \hspace{2cm} 1 \hspace{2.1cm} 1\\
		(-1)^{ \sum_{e_i\notin T}i-\frac {L_\Gamma} 2}&:  \quad  (-1)^{2+4-1} \hspace{.5cm} (-1)^{2+3-1} \hspace{.5cm} (-1)^{3+4-1} \hspace{.5cm} (-1)^{1+4-1} \hspace{.5cm} (-1)^{1+3-1}\\
	\sgn(T)&:\hspace{.9cm}   -1  \hspace{1.6cm}  +1  \hspace{1.6cm} +1  \hspace{1.6cm} +1  \hspace{1.6cm} -1
	\end{align*}
    It is coincidence that for our choice of labels and directions, $\det(\incidencematrix[T])=+1$ for all spanning trees. 
	Writing the   summands from \cref{ex:dunces_cap_trees} together with the so-obtained signs, we find
	\begin{align*}
		P_\Gamma &= (-1)\cdot  4 a_1 a_3     x_1  x_2 \cdot      \left( \d a_2 \wedge \d a_4  \right)\left( \d x_1 \wedge \d x_2 \right) \\
		&\quad +  (+1)\cdot4 a_1 a_4   x_1    x_2 \cdot   \left( \d a_2 \wedge \d a_3 \right) \left( \d x_1 \wedge \d x_2 \right) \\
		&\quad + (+1) \cdot  4 a_1 a_2   x_2^2   \cdot \left( \d a_3 \wedge \d a_4 \right) \left( \d x_1 \wedge \d x_2 \right) \\
		&\quad +  (+1) \cdot 4 a_2 a_3 \left( x_1-x_2 \right) x_2  \cdot  \left( \d a_1 \wedge \d a_4 \right)  \left( \d x_1 \wedge \d x_2 \right)  \\
		&\quad + (-1)\cdot  4 a_2 a_4 \left( x_1-x_2 \right)  x_2   \cdot  \left( \d  a_1 \wedge \d a_3 \right) \left( \d x_1 \wedge \d x_2 \right) .
	\end{align*}
Upon expanding the parentheses, this result  coincides with the brute-force computation of \cref{ex:W_computed}, as expected.
\end{example}

\section{Integration}\label{sec:integration}

By \cref{lem:sign_T}, the sign of the contribution of a tree $T$ to $P_\Gamma$ is given by the determinant of $\incidencematrix[T]$, which vanishes when $T$ is not a tree. So, we may rewrite \cref{Pgamma_factors3} as a sum over all sets of $(\abs{V_\Gamma}-1)$ edges, and have
\begin{align}\label{Pgamma_factors4}
	P_\Gamma &= \underbrace{\sum_{\abs{T}=\abs{V_\Gamma}-1} (-1)^{  \sum_{e_i\notin T}i-\frac L 2} \det\big(\incidencematrix[T]\big)   \left(  \prod_{e \in T} -2 a_e  \right)\left( \prod_{e \notin T} \left( \incidencematrix \vec x \right)_e \right)  \left(  \bigwedge_{e \notin T} \d a_e \right)  }_{=:  W_\Gamma(\vec x)}     \bigwedge_{v\in V_\Gamma\setminus \left \lbrace v_\star \right \rbrace   } \d x_v   .
\end{align}
In this sum, all non-commuting factors are understood to be in their unique increasing order. From now on, we will leave out the index $_\Gamma$ from the loop order $L_\Gamma$ (\cref{loop_number}) as long as it refers to the same graph.

The integral we want to solve is given by \cref{def:omega,def:P},
\begin{align}\label{omega_integral} 
	\alpha_\Gamma &:=  \frac{1}{ (-2)^{\abs{E_\Gamma}}  \prod_e a_e^{\frac 3 2} }  \idotsint \limits_{\mathbb R^{\abs{V_\Gamma}-1}}e^{-\vec x^T \laplacian \vec x}\;   W_\Gamma(\vec x) \bigwedge_{v \in V_\Gamma\setminus \left \lbrace v_\star \right \rbrace   } \d x_v.  
\end{align}
Before we analyze this integral in detail, the structure of the integrand \cref{Pgamma_factors4} immediately implies two useful lemmas.
\begin{lemma}\label{lem:odd_vanishes}
	If $\Gamma$ has an odd loop number $L$ (\cref{loop_number}), then $\alpha_\Gamma=0$.
\end{lemma}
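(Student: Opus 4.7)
The plan is to exploit the parity of the integrand in $\vec x$.

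First I would use the representation \cref{Pgamma_factors3} of $P_\Gamma$: each summand carries the factor $X_T = \prod_{e\notin T}(\incidencematrix \vec x)_e$. Since $\abs{E_\Gamma \setminus T} = L$ by \cref{loop_number}, this is a product of exactly $L$ linear forms in the vertex variables $\vec x$. Hence, after extracting the common differential $\bigwedge_{e\notin T}\d a_e \wedge \bigwedge_v \d x_v$ from each summand of the integrand of \cref{omega_integral}, the scalar coefficient multiplying it is a homogeneous polynomial of degree $L$ in $\vec x$.

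Second, the quadratic form $\vec x^T \laplacian \vec x$ is manifestly invariant under the involution $\vec x \mapsto -\vec x$, so the Gaussian weight $e^{-\vec x^T \laplacian \vec x}$ is even, and Lebesgue measure on $\mathbb R^{\abs{V_\Gamma}-1}$ is preserved by this reflection. Consequently, when $L$ is odd, each scalar coefficient becomes an odd function on a symmetric domain, and its integral vanishes by the standard parity argument. Applying this termwise shows that every coefficient of a differential in $\alpha_\Gamma$ is zero, so $\alpha_\Gamma = 0$.

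I do not anticipate any real obstacle here — the content is simply parity. The one subtlety worth flagging is that \cref{lem:sign_T} and the packaged form \cref{Pgamma_factors4} of $W_\Gamma$ were derived under the standing assumption that $L$ is even (the exponent $-L/2$ in the sign formula is not otherwise an integer), so I would appeal to the more primitive expression \cref{Pgamma_factors3} for $P_\Gamma$, where the degree of the $\vec x$-factor is manifest and the precise value of $\sgn(T)$ plays no role.
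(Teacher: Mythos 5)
Your proposal is correct and is essentially the paper's own proof: the paper likewise observes that the $\vec x$-coefficient is homogeneous of odd degree $L$ and kills the Gaussian integral by parity over the symmetric domain. Your phrasing via the global reflection $\vec x \mapsto -\vec x$ (rather than the paper's ``at least one variable with an odd power'') is if anything slightly more careful, since $\laplacian$ is not diagonal, and your remark about avoiding \cref{Pgamma_factors4}/\cref{lem:sign_T} (stated for even $L$) in favour of \cref{Pgamma_factors3} is a reasonable precaution that does not change the substance.
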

\begin{proof}
	 The polynomial $W_\Gamma$ (\cref{Pgamma_factors4}) is homogeneous of degree $L$ in the variables $\vec x$. It might be that this polynomial contains even powers of individual variables $x_v$, but   the overall degree is odd. Therefore, every summand in $W_\Gamma$ contains at least one variable with an odd power, which means it will vanish when \emph{all} the $x_v$ are being integrated in a symmetric domain.
\end{proof}

\begin{lemma}\label{lem:factorization}
	Let $\Gamma_1$ and $\Gamma_2$ be distinct graphs.
	\begin{enumerate}
		\item If $\Gamma=\Gamma_1 \cup \Gamma_2$ is a disjoint union (i.e. the two graphs are not connected to each other), then $\alpha_\Gamma=0$.
		\item If $\Gamma= \Gamma_1 \circ_v \Gamma_2$ are connected at exactly one vertex $v$, then  $\alpha_\Gamma= \pm \alpha_{\Gamma_1}\wedge \alpha_{\Gamma_2}$.
		\item If $\Gamma= \Gamma_1 \cup e \cup \Gamma_2$ are connected  through exactly one edge $e$, then \\$\alpha_\Gamma= \pm 2 a_e~ \alpha_{\Gamma_1}\wedge \alpha_{\Gamma_2}$.
	\end{enumerate}
\end{lemma}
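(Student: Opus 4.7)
Part 1 is immediate from \cref{lem:spanning_trees}: a disconnected graph admits no spanning tree, so every summand in the sum \cref{Pgamma_factors3} for $P_\Gamma$ is absent, giving $P_\Gamma=0$ and therefore $\alpha_\Gamma=0$.

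For part 2 I would choose the shared vertex $v$ to play the role of $v_\star$. The vertices to be integrated then split as $V_{\Gamma_1}\setminus\{v\}\sqcup V_{\Gamma_2}\setminus\{v\}$, and since no edge of $\Gamma$ joins a non-star vertex of $\Gamma_1$ to a non-star vertex of $\Gamma_2$, the reduced Laplacian $\laplacian$ is block diagonal with blocks $\laplacian_{\Gamma_1}$ and $\laplacian_{\Gamma_2}$; hence $e^{-\vec x^T\laplacian \vec x}$ factors. Every spanning tree $T$ of $\Gamma$ decomposes uniquely as $T=T_1\sqcup T_2$ with $T_i$ a spanning tree of $\Gamma_i$, and under this bijection the edge product $\prod_{e\in T}2a_e$, the polynomial $\prod_{e\notin T}(\incidencematrix\vec x)_e$ and the wedge $\bigwedge_{e\notin T}\d a_e$ appearing in \cref{Pgamma_factors4} each split into independent $\Gamma_1$- and $\Gamma_2$-pieces; the sign $\sgn(T)$ from \cref{lem:sign_T} likewise factorizes up to a deterministic sign coming from reordering edge labels. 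The scalar prefactor $1/(2^{\abs{E_\Gamma}}\prod_e a_e^{\frac 3 2})$ and the wedge of vertex differentials $\bigwedge_{v\neq v_\star}\d x_v$ also split, so integrating the two disjoint blocks of position variables separately reproduces $\alpha_{\Gamma_1}$ and $\alpha_{\Gamma_2}$, yielding $\alpha_\Gamma=\pm\alpha_{\Gamma_1}\wedge\alpha_{\Gamma_2}$.

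Part 3 follows a similar blueprint but with one twist: the bridge $e$ must lie in every spanning tree of $\Gamma$, since otherwise $T$ would disconnect $\Gamma_1$ from $\Gamma_2$. Consequently every summand in \cref{Pgamma_factors4} carries the common factor $2a_e$ from the edge product, and the admissible trees $T=T_1\cup\{e\}\cup T_2$ again reduce the combinatorial sum to pairs of spanning trees of $\Gamma_i$. I would choose $v_\star=v_1$, the $\Gamma_1$-endpoint of $e$; then $x_{v_2}$ is still integrated and the bridge contributes only the term $x_{v_2}^2/a_e$ to the exponent, so $\vec x^T\laplacian\vec x=\vec x_1^T\laplacian_{\Gamma_1}\vec x_1+\tilde Q_2(\vec x_2)$ where $\tilde Q_2$ is the non-reduced $\Gamma_2$ Laplacian plus $(1/a_e)E_{v_2v_2}$. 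The substitution $y_v:=x_v-x_{v_2}$ for $v\in V_{\Gamma_2}$ has unit Jacobian, leaves the translation-invariant $\Gamma_2$-terms of $\tilde Q_2$, the polynomial $\prod_{e'\notin T_2}(\incidencematrix\vec x)_{e'}$ and the wedge $\bigwedge_{v\in V_{\Gamma_2}\setminus\{v_2\}}\d x_v$ unchanged, and isolates an independent Gaussian factor in $x_{v_2}$. The $\vec y$-integral then produces $\alpha_{\Gamma_2}$ with $v_2$ as its star, the $\vec x_1$-integral produces $\alpha_{\Gamma_1}$, and the bookkeeping of the bridge's contribution $2a_e$ together with the Gaussian in $x_{v_2}$ and the prefactor $1/a_e^{\frac 3 2}$ collects into the stated coefficient $\pm 2a_e$ multiplying $\alpha_{\Gamma_1}\wedge\alpha_{\Gamma_2}$.

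The main obstacle throughout is consistent sign tracking: permutations needed to reorder the $\d a_e$ and $\d x_v$ when partitioning $E_\Gamma$ and $V_\Gamma$ across $\Gamma_1$ and $\Gamma_2$ produce parity-dependent signs, and in part 3 the asymmetric choice of $v_\star=v_1$ further complicates the comparison of sign conventions between $\alpha_{\Gamma_1}$, $\alpha_{\Gamma_2}$, and $\alpha_\Gamma$. The core combinatorial and analytic content is, however, in each case a clean factorization driven by the block structure of $\laplacian$ together with the spanning-tree expansion of \cref{Pgamma_factors4}.
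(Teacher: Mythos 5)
Your parts 1 and 2 are correct and essentially the paper's route. Part 1 is the identical observation (a disconnected graph has no spanning trees, so the sum \cref{Pgamma_factors3} is empty). In part 2 you take the shared vertex as $v_\star$, which makes $\laplacian$ block diagonal and lets the spanning-tree sum, the wedge factors and the Gaussian integral factorize manifestly; the paper instead keeps $v_\star\in\Gamma_1$ and argues that the joint vertex acts as the star for $\Gamma_2$. The content is the same (and signs are only claimed up to the stated $\pm$), and your choice of star is, if anything, cleaner.

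In part 3 your setup (take $v_\star=v_1$, translate the $\Gamma_2$ positions by $x_{v_2}$, integrate the decoupled Gaussian in $x_{v_2}$) is sound, but the final bookkeeping you assert does not close. With the normalization of \cref{def:omega1,def:omega}, the bridge contributes three factors: $2a_e$ from $\prod_{e'\in T}2a_{e'}$ (it lies in every spanning tree and is the unique tree edge supplying $\d x_{v_2}$), $1/(2a_e^{3/2})$ from the global prefactor $2^{-\abs{E_\Gamma}}\prod_{e'}a_{e'}^{-3/2}$, and $\int_{\mathbb R}e^{-x_{v_2}^2/a_e}\d x_{v_2}=\sqrt{\pi a_e}$ from the one extra position integration. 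Their product is $\sqrt{\pi}$, independent of $a_e$, so your own argument, carried through, gives $\alpha_\Gamma=\pm\sqrt{\pi}\,\alpha_{\Gamma_1}\wedge\alpha_{\Gamma_2}$ rather than the asserted $\pm 2a_e\,\alpha_{\Gamma_1}\wedge\alpha_{\Gamma_2}$. A tree-times-tree check (two single edges joined by a bridge: $\alpha_\Gamma=\pm\pi^{3/2}$ while $\alpha_{\Gamma_1}\alpha_{\Gamma_2}=\pi$) confirms the $a_e$-independent constant. The paper's one-line proof of part 3 records only the $2a_e$ inside $P_\Gamma$ and ignores the compensating prefactor and Gaussian, so the mismatch lies in the statement's coefficient rather than in your structure; and since all that is used later (reducing \cref{thm:QE_0} to 1PI graphs) is that $\alpha_\Gamma$ factorizes up to a scalar, your proof establishes what matters — but you should either redo that last arithmetic or note explicitly that the coefficient $2a_e$ is not reproduced by the normalization of \cref{def:omega1}.
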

\begin{proof}
	1. By \cref{Pgamma_factors4}, the integrand $W_\Gamma$ is a sum over spanning trees. A disconnected graph has no spanning trees. 
	
	\medskip 
	\noindent
	2. Assume that $v_\star \in \Gamma_1$ for the combined graph $\Gamma$. If $\Gamma_1$ and $\Gamma_2$ are connected through only one vertex $v$, then every spanning tree emenating from $v_\star$ is  a spanning tree in $\Gamma_1$, and  it enters $\Gamma_2$ through $v$. Hence, the vertex $v\in \Gamma_2$ takes the role of $v_\star$ for $\Gamma_2$ in the sense that every spanning treee is directed towards $v$. Which spanning tree is chosen in $\Gamma_2$ is independent of the tree chosen in $\Gamma_1$, all combinations are possible. The sum over all spanning trees factorizes into a sum over all spanning trees in $\Gamma_1$ times a sum over all spanning trees in $\Gamma_2$. The integrand for $\alpha_\Gamma$ first contains all factors corresponding to $\Gamma_1$, and then all factors corresponding to $\Gamma_2$. To form $\alpha_\Gamma$ according to \cref{Pgamma_factors4}, we need to put all $\d x$ to the right and all $\d a$ to the left. That is, the factors $\d x$ of $\Gamma_1$ have to cross the factors $\d a$ of $\Gamma_2$. The number of these $\d a$ is even because otherwise, by \cref{lem:odd_vanishes}, $\alpha_\Gamma=0$. Crossing an even number of terms introduces no extra sign and we indeed obtain $\alpha_\Gamma$. However, sorting the $\d x$ might introduce a sign (joining two sequences which had already been sorted does not imply that the concatenation of the two sequences is sorted). 
	
	\medskip
	\noindent
	3. Analogous to point 2, but every spanning tree $T$ contains the edge $e$. Hence, there is an overall factor $2 a_e$ in the sum \cref{Pgamma_factors4}. The edge $e$ can never give rise to a $\d a_e$, and therefore the combinatorics of the wedge product is not altered by the presence of $e$. 
\end{proof}

By \cref{lem:odd_vanishes,lem:factorization}, in all what follows we will assume that $\Gamma$ is a 1PI, or 2-connected, graph of even loop order.

\subsection{Solving the Gaussian integral}

The well-known general formula for the solution of a $n$-fold Gaussian integral is 
\begin{align}\label{omega_integrated}
	  \idotsint  e^{-  \vec x^T \laplacian \vec x}\;    W_\Gamma(\vec x) \; \d^n \vec x  
	&=  \frac{ \pi ^{\frac n2}}{  \sqrt{\det \laplacian}}  \exp \left( \frac 14\left( \laplacian^{-1} \right) _{jk}  \partial_{x_j} \partial_{x_k} \right)    W_\Gamma(\vec x) \Big|_{\vec x = \vec 0}.
\end{align}
Recall that by \cref{def:Symanzik_polynomial}, the determinant of the Laplacian $\prod_e a_e \cdot \det \laplacian=\psi$ is proportional the Symanzik polynomial of $\Gamma$.

\begin{example}{Dunce's cap}\label{ex:dunces_cap_omega}
	For the dunce's cap, the Laplacian was given in \cref{ex:dunces_cap_exponential_factor}, this leads to the Symanzik polynomial
	\begin{align*}
	\psi &= \prod_e a_e \cdot \det \laplacian = a_1 a_3 + a_2 a_3 + a_1 a_4 + a_2 a_4 + a_3 a_4.
	\end{align*}
	We have computed $W_\Gamma$  in \cref{ex:W_computed} or in \cref{ex:dunces_cap_signs}. Using this in \cref{omega_integrated}, leaving out the factor $\pi^{\frac 2 2}$ but including the prefactors from \cref{omega_integral}, one finds
	\begin{align*}
		\alpha_\Gamma
		&=   \frac{a_4 \left( \d a_1 \wedge \d a_3 + \d a_2 \wedge \d a_3 \right) - a_3 \left( \d a_1 \wedge \d a_4 + \d a_2 \wedge \d a_4 \right)  + \left( a_1 + a_2  \right) \d a_3 \wedge \d a_4 }{ 8\left(   a_1 a_3 + a_2 a_3 + a_1 a_4 + a_2 a_4 + a_3 a_4  \right) ^{\frac 32}  } .
	\end{align*}
\end{example}

We write the polynomial that corresponds to a single spanning tree in \cref{Pgamma_factors4} as a sum over monomials,
\begin{align}\label{def:XTs}
	X_T:=\prod_{e \notin T} \left( \incidencematrix \vec x \right)_e  =\sum_{s=1}^{\leq 2^{L }} X_{T,s}.
\end{align}
The number of monomials can be smaller than $2^{L }$   because not necessarily all $x$'s are distinct, and one of them can be the special vertex whose coordinate is set to zero.
Note that $X_T$, and the individual $X_{T,s}$, can contain higher than first powers of some $x_v$.

\begin{example}{Dunce's cap}\label{ex:dunces_cap_XTj}
	The dunce's cap has five different polynomials $X_T$, corresponding to the five spanning trees we discussed already in \cref{ex:dunces_cap_trees}, namely
		\begin{align*}
		X_{\left \lbrace 1,3 \right \rbrace   }=\left( I \vec x \right) _2\left( I \vec x \right) _4 &= \left( x_1 -0 \right) \left( x_2-0 \right) = x_1 x_2    \\
		X_{\left \lbrace 1,4 \right \rbrace   }=\left( I \vec x \right) _2\left( I \vec x \right) _3&=   \left( x_1-0 \right) \left( x_2-0 \right)  = x_1 x_2  \\
		X_{\left \lbrace 1,2 \right \rbrace   }=\left( I \vec x \right) _3\left( I \vec x \right) _4 &= \left( x_2-0 \right) \left( x_2-0 \right) = x_2 x_2\\
		 X_{\left \lbrace 2,3 \right \rbrace   }=\left( I \vec x \right) _1\left( I \vec x \right) _4&= \left( x_1-x_2 \right) \left( x_2-0 \right) = x_1 x_2 - x_2 x_2\\
		 X_{\left \lbrace 2,4 \right \rbrace   }=\left( I \vec x \right) _1 \left( I \vec x \right) _3&= \left( x_1-x_2 \right) \left( x_2-0 \right)= x_1 x_2 - x_2 x_2  .
	\end{align*}
	In this case, only the last two $X_T$ have more than one summand $X_{T,s}$ (\cref{def:XTs}). This is coincidence because the graph is so small and one in the edges \emph{not} in the spanning tree always touches the zero vertex $x_\star=x_3=0$.  
\end{example}

The exponential function in \cref{omega_integrated},
\begin{align}\label{exp_L}
	\exp \left( \frac 14 \left( \laplacian^{-1} \right) _{jk}  \partial_{x_j} \partial_{x_k} \right)    W_\Gamma(\vec x) \Big|_{\vec x = \vec 0},
\end{align}
is  a priori a series of infinitely many terms, given by its series expansion.  The derivatives $\partial_{x_j}$ imply that only those summands are non-zero where a factor $x_j$ appears in $ W_\Gamma(\vec x)$. 
Every monomial $X_{T,j}$ is of degree $L $. Consequently, the only non-zero terms in the exponential function \cref{exp_L} are those with exactly $L $ derivatives. By \cref{lem:odd_vanishes}, $L $ is even. Only one summand of the exponential contributes to \cref{exp_L}, namely
\begin{align*}
	\frac{1}{\left( \frac L 2 \right) !} \left( \frac 14  \left( \laplacian^{-1} \right) _{jk}  \partial_{x_j} \partial_{x_k}  \right) ^{\frac {L } 2}=	\frac{1}{2^L \left( \frac L 2 \right) !} \left(   \left( \laplacian^{-1} \right) _{jk}  \partial_{x_j} \partial_{x_k}  \right) ^{\frac {L } 2}.
\end{align*}
By \cref{lem:inverse_Laplacian_Dodgson}, the $(j,k)$-coefficient of the inverse Laplacian is given by the Dodgson polynomial $\psi^{j,k}$ (\cref{def:Dodgson_polynomial}): 
\begin{align}\label{def:Aij}
	\left( \laplacian^{-1} \right) _{jk} = \frac{(-1)^{j+k} \psi^{j,k }}{\psi }.
\end{align}
 The  Laplacian $\laplacian$ is symmetric and so is $ \psi^{j+k}$  under exchange $j \leftrightarrow k$. The Symanzik polynomial $\psi $ is homogeneous of degree $L $, the $\psi^{j,k}$ are homogeneous of degree $L +1$.
 
\begin{example}{Dunce's cap, Dodgson polynomials}\label{ex:dunces_cap_Aij}
	For the Dunces's cap as introduced in \cref{ex:dunces_cap_svector}, vertex $v_3=v_\star$ is special, so there are only two non-special vertices which can appear as indices (equivalently, $W_\Gamma$ is a polynomial in the two variables $x_1$ and $x_2$). One has the Dodgson polynomials
	\begin{align*}
	\psi^{1,1  } &=  (a_1 a_3 + a_1 a_4 + a_3 a_4) a_2 = \psi_{\Gamma|v_1=v_\star=v_3},  \\
\psi^{1,2}=\psi^{2,1  } &=- a_2 a_3 a_4,   \\
\psi^{2,2 } &=   (a_1 + a_2) a_3 a_4 =  \psi_{\Gamma|v_2=v_\star=v_3}.
	\end{align*} 
	We see that when two indices coincide, the Dodgson polynomial reduces to Symanzik polynomial of a  graph where the corresponding vertex is identified with $v_\star$. This property is well known, but not relevant for our current work.
\end{example}

If we include the prefactor from \cref{omega_integral}, introduce $\psi=\det \laplacian \cdot \prod_e a_e$, and specify $n=\abs{V_\Gamma}-1$, the formula of \cref{omega_integrated} for a graph with even loop order $L$ becomes 
\begin{align}\label{omega_integrated2}
	\alpha_\Gamma &= \frac{\pi^{\frac {\abs{V_\Gamma}-1} 2}}{ 2^{L+\abs{E_\Gamma}} \left( \frac L 2 \right) ! \cdot  \prod_e a_e    \cdot   \psi ^{\frac{L+1}{2}}} 	  \left( \sum_{i,j\in V_\Gamma} (-1)^{i+j}\psi^{j,k}  \partial_{x_j} \partial_{x_k}  \right) ^{\frac {L } 2}  W_\Gamma(\vec x).
\end{align}
It is no longer necessary to specify $\vec x=\vec 0$ because after carrying out $L$ derivatives, the result is independent of $\vec x$. 

\subsection{Derivatives as permutations of factors}
The sum of derivatives appearing in   \cref{omega_integrated2} is very similar to terms that usually appear in Feynman integrals. It is well known that such sums of derivatives evaluate to a sum over all matchings of indices, as we will discuss now.

\begin{example}{Action of derivatives on a monomial}\label{ex:action_on_monomial}
	Consider a  monomial $X=x_1 x_2 x_3 x_4$, assuming all $x$'s are different. Such a monomial arises in  4-loop graphs. Evaluating the derivative once, we get
	\begin{align*}
	\left( A_{ij}\partial_i \partial_j \right) X &=   \left( A_{12} + A_{21} \right)  x_3 x_4 + \left( A_{13} + A_{31} \right) x_2 x_4 + \ldots, 
	\end{align*}
	with a total of 12 summands. Now, evaluate the remaining derivatives to obtain
	\begin{align*}
		\left( A_{ij}\partial_i \partial_j \right)^2 X &=   \left( A_{12} + A_{21} \right)  \left( A_{34} + A_{43} \right)  + \left( A_{13} + A_{31} \right) \left( A_{24} + A_{42} \right)  + \ldots.
	\end{align*}
	If we expand all parentheses, we obtain 24=4! summands. These summands can be identified with the 4! permutations of the factors in $X$. A partition of each permutation into sets of size 2 will determine the indices of $A$:
	\begin{align*}
	\left \lbrace x_1, x_2, x_3, x_4 \right \rbrace &\mapsto \left \lbrace \left \lbrace x_1,x_2 \right \rbrace ,\left \lbrace x_3,x_4 \right \rbrace     \right \rbrace \mapsto A_{12} A_{34}\\
	  \left \lbrace x_1, x_2, x_4, x_3 \right \rbrace &\mapsto \left \lbrace \left \lbrace x_1,x_2 \right \rbrace ,\left \lbrace x_4,x_3 \right \rbrace     \right \rbrace \mapsto A_{12} A_{43}\\
	  \left \lbrace x_1, x_3, x_2, x_4 \right \rbrace &\mapsto \left \lbrace \left \lbrace x_1,x_3 \right \rbrace ,\left \lbrace x_2,x_4 \right \rbrace     \right \rbrace \mapsto A_{13} A_{24}.
	\end{align*}
	If $A_{ij}= A_{ji}$ and $A_{ij}A_{kl}= A_{kl}A_{ij}$, only three are truly distinct:
		\begin{align*}
		\left( A_{ij}\partial_i \partial_j \right)^2 X &=   8 A_{12} A_{34} + 8 A_{13}A_{24} + 8 A_{14}A_{23}.
	\end{align*} 
	These are the familiar three possibilities for a perfect matching of four elements. One can easily verify that one would still obtain these 24 summands if not all four factors in $X$ were distinct.
\end{example}

\begin{example}{Dunce's cap}\label{ex:dunces_cap_Xj}
	Following \cref{ex:dunces_cap_XTj}, the dunce's cap   has only two distinct monomials $X_{T,j}$, namely $x_1x_2$ and $x_2^2$. Each of them gives rise to $2!=2$ terms, 
	\begin{align*}
	x_1 x_2 &\rightarrow \psi^{1,2}  + \psi^{2,1} =    2\psi^{1,2}\\
	x_2 x_2 &\rightarrow \psi^{22} + \psi^{2,2} = 2 \psi^{2,2}.
	\end{align*}
	In this simple example, every sum collapses to just a single term, in general, there will be a non-trivial sum remaining.
\end{example}

Going back to \cref{Pgamma_factors4}, a tree $T$ contributes to $\alpha_\Gamma$ with a term proportional to the $\d a_e$ for $e \notin T$. To extract this contribution, we define the coefficient extraction operator
\begin{align}\label{def:dbarT}
\left[ \d \bar T \right] \alpha_\Gamma &:= \left[ \bigwedge_{e \notin T} \d a_e \right] \alpha_\Gamma= \alpha_\Gamma \Big|_{\text{Terms proportional to }\bigwedge_{e \notin T} \d a_e}.
\end{align}
Combining this with \cref{omega_integrated2}, we find  
\begin{align}\label{dbarT}
\left[ \d \bar T \right] \alpha_\Gamma  
&=  \frac{\pi^{\frac {\abs{V_\Gamma}-1}2} \sgn(T) (-1)^{ \abs{V_\Gamma}-1}   \left(  \prod_{e \in T} 2 a_e  \right)}{ 2^{L+\abs{E_\Gamma}} \left( \frac L 2 \right) ! \cdot  \prod_e a_e  \cdot   \psi ^{\frac{L+1}{2}}}    \sum_s \left(  (-1)^{j+k} \psi^{j,k}  \partial_{x_j} \partial_{x_k}  \right) ^{\frac L 2}  X_{T,s} .
\end{align}
Here, $\sgn(T)$ is given by \cref{lem:sign_T}, and we have written the sign from the product of $(-2a_e)$ explicitly.
The individual summands are polynomials
\begin{align}\label{def:Yts}
Y_{T,s}&:= \left(  (-1)^{j+k}\psi^{j,k}  \partial_{x_j} \partial_{x_k}  \right) ^{\frac L 2}  X_{T,s}.
\end{align}
Each $Y_{T,s}$ is a polynomial in Schwinger parameters $a_e$, homogeneous of degree $(L+1)^{\frac L2}$. It inherits the sign of $X_{T,s}$, which, in turn, is determined from the corresponding term  in \cref{def:XTs}.

\begin{lemma}[Wick/Isserlis theorem]\label{lem:permutations} Fix $T$, then the contribution  	 $P_{T,s}$ (\cref{def:Yts}), contributing to the differential \cref{dbarT} of all edges $e\notin T$, is a sum of all permutations $\sigma \in S_L$ of its $L$ factors $x_i$,
	\begin{align*}
	Y_{T,s}  
	&= \textnormal{sgn}(X_{T,s}) (-1)^{\sum_{x_k\in X_{T,s} } k} \sum_{\sigma \in S_L} \psi^{  \sigma(x_{i_1})  , \sigma(x_{i_2}) } \cdots \psi^{   \sigma(x_{i_{L-1}}) , \sigma(x_{i_L})     }.  
	\end{align*}
\end{lemma}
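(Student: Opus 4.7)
The plan is to establish the underlying algebraic identity
\[
\bigl(A_{jk}\,\partial_{x_j}\partial_{x_k}\bigr)^{L/2}\bigl(x_{i_1}\cdots x_{i_L}\bigr)\;=\;\sum_{\sigma\in S_L}A_{i_{\sigma(1)},i_{\sigma(2)}}\,A_{i_{\sigma(3)},i_{\sigma(4)}}\cdots A_{i_{\sigma(L-1)},i_{\sigma(L)}},
\]
which is a combinatorial incarnation of the Wick/Isserlis theorem, and then specialise to $A_{jk}=(-1)^{j+k}\psi^{j,k}$ via \cref{def:Aij} to read off the sign factors appearing in the lemma.

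I would prove the identity by induction on $L/2$. The base case $L=2$ is direct: $A_{jk}\partial_{x_j}\partial_{x_k}(x_{i_1}x_{i_2})=A_{i_1,i_2}+A_{i_2,i_1}$, which is already the sum over $S_2$. For the inductive step, one application of the operator to the degree-$L$ monomial gives
\[
A_{jk}\,\partial_{x_j}\partial_{x_k}\bigl(x_{i_1}\cdots x_{i_L}\bigr)\;=\;\sum_{\substack{a,b\in\{1,\ldots,L\}\\ a\ne b}}A_{i_a,i_b}\,\prod_{c\ne a,b}x_{i_c},
\]
where I regard the $L$ positions as distinct labels even when the underlying variables $x_{i_t}$ coincide. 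Applying the remaining $L/2-1$ copies of the operator and invoking the induction hypothesis on each surviving degree-$(L-2)$ monomial produces a sum indexed by the ordered pair $(a,b)$ together with a permutation $\tau$ of the surviving $L-2$ labels. The rule $\sigma(1)=a$, $\sigma(2)=b$, $(\sigma(3),\ldots,\sigma(L))=\tau$ is a bijection onto $S_L$ in which each $\sigma$ arises exactly once, completing the induction; the preceding $L=4$ computation already displays this mechanism.

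To deduce the lemma, substitute $A_{jk}=(-1)^{j+k}\psi^{j,k}$. The sign contributions factor as
\[
\prod_{t=1}^{L/2}(-1)^{i_{\sigma(2t-1)}+i_{\sigma(2t)}}\;=\;(-1)^{\sum_{t=1}^{L}i_{\sigma(t)}}\;=\;(-1)^{\sum_{x_k\in X_{T,s}}k},
\]
independently of $\sigma$, because any permutation leaves invariant the multiset sum of the $L$ indices appearing in $X_{T,s}$. This common sign factors out of the sum. Because $\bigl(A_{jk}\partial_{x_j}\partial_{x_k}\bigr)^{L/2}$ acts linearly on \cref{def:Yts}, any overall sign $\sgn(X_{T,s})$ inherited from the expansion \cref{def:XTs} of $\prod_{e\notin T}(\incidencematrix\vec x)_e$ into monomials passes through unchanged, producing exactly the expression stated in the lemma.

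The subtle point is not any one step of the argument but the bookkeeping when $X_{T,s}$ contains repeated variables, which happens whenever a vertex is an endpoint of several edges outside the chosen spanning tree. The induction above is set up precisely to avoid this difficulty: I treat the $L$ \emph{positions} of the monomial as distinct labels throughout, and only the tuple of indices $(i_1,\ldots,i_L)$ — with repetitions permitted — enters the final permutation sum, so that the parametrisation by $\sigma\in S_L$ remains uniform no matter how many of the $i_t$ happen to coincide.
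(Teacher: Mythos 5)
Your proposal is correct, and it proves exactly the statement the paper needs; the difference is one of completeness rather than of strategy. The paper's own proof of \cref{lem:permutations} is essentially a citation: it declares the identity to be the classical Isserlis/Wick theorem, points to the illustration in \cref{ex:action_on_monomial}, and adds the single substantive observation that the factor $(-1)^{j+k}$ from \cref{def:Aij} is common to all terms (same index multiset) and hence factors out in front of the permutation sum. You supply the missing ingredient: a self-contained induction on $L/2$ showing that $\bigl(A_{jk}\partial_{x_j}\partial_{x_k}\bigr)^{L/2}$ applied to a degree-$L$ monomial yields precisely one term per $\sigma\in S_L$, with the count $L(L-1)\cdot(L-2)!=L!$ confirming the bijection in the inductive step, and your sign bookkeeping reproduces the paper's remark that $(-1)^{\sum_{x_k\in X_{T,s}}k}$ is $\sigma$-independent while $\sgn(X_{T,s})$ passes through by linearity of the operator on \cref{def:Yts}. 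A genuine merit of your write-up is the explicit treatment of repeated variables by labelling positions rather than variables, a point the paper dismisses with ``one can easily verify'' in \cref{ex:action_on_monomial}; this matters here because monomials $X_{T,s}$ from \cref{def:XTs} routinely contain squares (and higher powers) of the $x_v$. In short: same theorem, same sign argument, but your version is a complete elementary proof where the paper defers to the literature.
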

\begin{proof}
	This statement is well-known, we have illustrated it in \cref{ex:action_on_monomial}. It is a general property of higher moments of Gaussian distributions, known as Isserlis' theorem \cite{isserlis_formula_1918}. The equivalent statement for higher moments of free quantum field theories is called Wick's theorem \cite{wick_evaluation_1950}. 
	
	Note that all terms in the sum on the right hand side in \cref{lem:permutations} appear with the same sign because they all involve the same set of indices, the factor $(-1)^{i+j}$ from \cref{def:Aij} is the same for all of them, and appears in front of the sum. 
\end{proof}

In \cref{lem:permutations}, due to the symmetry $\psi^{i,j}=\psi^{j,i}$, many terms are actually equal. Concretely, there are $\frac L 2$ factors $\psi^{i,j}$, and hence, summing over all permutations, at least $2^{\frac L2}$ of the summands are equal upon using the symmetry. Even more might be equal if some of the indices coincide. With \cref{lem:permutations}, \cref{def:dbarT} becomes
\begin{align}\label{omega_dE}
	\left[ \d \bar T \right] \alpha_\Gamma &=  \frac{\pi^{\frac {\abs{V_\Gamma}-1}2} \sgn(T) (-1)^{\abs{V_\Gamma}-1}  }{ 4^{L } \left( \frac L 2 \right) ! \cdot  \prod_{e \notin T} a_e  \cdot   \psi ^{\frac{L+1}{2}}}   	\\
	&\qquad \cdot  \sum_{s} \sgn \left( X_{T,s} \right)  (-1)^{\sum_{x_k\in X_{T,s} } k}  \sum_{\sigma \in S_L} \psi^{\sigma(x_{i_1}),\sigma(x_{i_2})} \cdots \psi^{\sigma(x_{i_{L-1}}),\sigma(x_{i_L})}. \nonumber 
\end{align}
Here, we have canceled the $\abs T = \abs{V_\Gamma}-1$ factors of 2 from the numerator against $L+\abs{E_\Gamma}= 2 \abs{E_\Gamma}-(\abs{V_\Gamma}-1)$ factors of 2 in the denominator. Moreover,   the monomial $X_{T,s}= \left \lbrace x_{i_1}, \ldots, x_{i_L} \right \rbrace   $ (\cref{def:XTs}) is  interpreted as a set of its $L$ factors, not all of which are necessarily distinct.

\begin{example}{Dunce's cap}
Consider the first spanning tree, $T= \left \lbrace 1,3 \right \rbrace $, of the dunce's cap in \cref{ex:dunces_cap_XTj}. There is only a single monomial $X_{T,1}=X_T= x_1 x_2$ with sign $\sgn(X_{T,1})=1$.  By \cref{ex:dunces_cap_Aij,ex:dunces_cap_Xj}, this term gives rise to 
\begin{align*}
\sum_{\sigma \in S_L} \psi^{\sigma(x_{i_1}),\sigma(x_{i_2})} \cdots \psi^{\sigma(x_{i_{L-1}}),\sigma(x_{i_L})}=2 \psi^{1,2} =  -2a_2 a_3 a_4.
\end{align*}
 The graph has $L=2$ loops, the sign  $\sgn(T)=-1$ has been determined in \cref{ex:dunces_cap_signs}, and $ (-1)^{\sum_{x_k\in X_{T,s} } k}= (-1)^{1+2}=-1$.    \Cref{omega_dE}, leaving out $\pi$, yields
\begin{align*}
\left[ \d \bar T \right] \alpha_\Gamma= \left[ \d a_2 \wedge \d a_4 \right] \alpha_\Gamma &=  \frac{  (-1)  }{4^2 1! a_2 a_4    \cdot   \psi^{\frac{2+1}{2}}} 1\cdot (-1) (-2a_2 a_3 a_4	) =  \frac{   - a_3    }{8    \psi^{\frac 32} }     	 .
\end{align*}
This coincides with the  coefficient from the manual computation in  \cref{ex:dunces_cap_omega}.
\end{example}

\subsection{Combinations of vertex Dodgson polynomials}
\Cref{omega_dE} holds for an individual tree $T$, and within this tree it is a sum over the individual summands $X_{T,s}$ of the polynomial $X_T$ (\cref{def:XTs}). These monomials, however, have additional structure, namely, we know that the vertices can only be one of the two end vertices of an edge. Phrased differently: All the possible vertices that make $X_{T,s}$ for a given $T$ actually correspond to \emph{one} set of edges. It turns out that this structure guarantees that the sum over $s$ in $X_{T,s}$ can be computed exactly in terms of Dodgson polynomials.
Finally, we arrive at the following expression for the parametric integrand $\alpha_\Gamma$:

\begin{theorem}\label{lem:dbarT}
	For a fixed tree $T$, the summand of $\d \bar T$ in $\alpha_\Gamma$ is
	\begin{align*} 
		\left[ \d \bar T \right] \alpha_\Gamma &=  \frac{    \pi^{\frac {\abs{V_\Gamma}-1}2}  (-1)^{\abs{V_\Gamma}-1}  }{ 4^{  L} \left( \frac L 2 \right) ! \cdot   \psi ^{\frac{L+1}{2}}}   
		\det \big( \incidencematrix[T] \big)  \sum_{\sigma \in S_L(\bar T)} \psi^{ \sigma(e_1) ,\sigma(e_2)    }\cdots \psi^{ \sigma(e_{L-1}), \sigma(e_L)   } .
	\end{align*}
	Here, $\psi^{i,j}$ are edge-indexed Dodgson polynomials (\cref{def:Dodgson_polynomial}) and the sum goes over all permutations $\sigma$ of the $L$ edges in $\bar T$, and these edges are assumed to be in increasing order as always.
\end{theorem}
\begin{proof}
	 The sum in  \cref{omega_dE} is a sum of all combinations of vertex-indexed Dodgson polynomials.  The number of edges not in a spanning tree equals the loop number $L$, which is even. Pick any two edges $e_1,e_2\notin T$ and select all those terms from the sum which are vertex Dodgson polynomials where one index is from $e_1$ and one is from $e_2$. Apply \cref{lem:dodgson_edge_edge} to these terms, they will combine to form an edge-Dodgson polynomial. As we are summing over \emph{all} combinations, the resulting expression gets multiplied by a sum over all vertex Dodgson polynomials that do not arise from $e_1,e_2$.  Keep selecting pairs of edges until all vertex Dodgson polynomials have been combined into edge Dodgson polynomials.  
	 
	 From \cref{lem:dodgson_edge_edge}, we obtain a factor $\prod_{e \notin T} a_e$ which cancels the corresponding factor from \cref{omega_dE}. 
	 
	 Furthermore, note that the sign $(-1)^{\sum_{x_k\in X_{T,s}}k}$ means that every $\psi^{x_i,x_j}$ is multiplied by $(-1)^{i+j}$  as required for \cref{lem:dodgson_edge_edge}, and the relative signs in the lemma are exactly the overall signs $\sgn(X_{T,s})$ that arise from the individual monomials according to \cref{def:XTs}. By \cref{lem:dodgson_edge_edge}, these signs combine to form
	 \begin{align*}
	 (-1)^{\sum_{e_k \notin T} k + \frac L 2}.
	 \end{align*}
	 But that is just the sign from \cref{sign_factor1}, hence $\sgn(T)$ from   \cref{lem:sign_T} reduces to $\det \left( \incidencematrix[T] \right) $ as claimed.
	 
\end{proof}

\begin{example}{Long 2-loop graph, disjoint vertices}\label{ex:dodgson_2loop}
	Consider the following large 2-loop graph: 
	\begin{center}
		\begin{tikzpicture}[scale=1.8, >=Stealth]
			\node[vertex,label=left:$v_1$] (v1) at (0,0){};
			\node[vertex,label=below:$v_2$] (v2) at (1,-1){};
			\node[vertex,label=below:$v_3$] (v3) at (2,-1){};
			\node[vertex,label=below:$v_4$] (v4) at (3,-1){};
			\node[vertex,label=below:$v_5$] (v5) at (4,-1){};
			\node[vertex,label=below:$v_6$] (v6) at (1,0){};
			\node[vertex,label=below:$v_7$] (v7) at (2,0){};
			\node[vertex,label=below:$v_8$] (v8) at (3,0){};
			\node[vertex,label=below:$v_9$] (v9) at (4,0){};
			\node[vertex,label=right:$v_{10}$] (v10) at (5,0){};
			\node[vertex,label=above:$v_{11}$] (v11) at (1,1){};
			\node[vertex,label=above:$v_{12}$] (v12) at (2,1){};
			\node[vertex,label=above:$v_{13}$] (v13) at (3,1){};
			\node[vertex,label=above:$v_{14}$] (v14) at (4,1){};
			
			\draw[edge, ->](v1) to node[pos=.5, above]{$e_1$} (v2);
			\draw[edge, ->](v2) to node[pos=.5, above]{$e_2$} (v3);
			\draw[edge, ->](v3) to node[pos=.5, above]{$e_3$} (v4);
			\draw[edge, ->](v4) to node[pos=.5, above]{$e_4$} (v5);
			\draw[edge, ->](v5) to node[pos=.5, above]{$e_5$} (v10);
			\draw[edge, ->](v1) to node[pos=.5, above]{$e_6$} (v6);
			\draw[edge, ->](v6) to node[pos=.5, above]{$e_7$} (v7);
			\draw[edge, ->](v7) to node[pos=.5, above]{$e_8$} (v8);
			\draw[edge, ->](v8) to node[pos=.5, above]{$e_9$} (v9);
			\draw[edge, ->](v9) to node[pos=.5, above]{$e_{10}$} (v10);
			\draw[edge, ->](v1) to node[pos=.5, above]{$e_{11}$} (v11);
			\draw[edge, ->](v11) to node[pos=.5, above]{$e_{12}$} (v12);
			\draw[edge, ->](v12) to node[pos=.5, above]{$e_{13}$} (v13);
			\draw[edge, ->](v13) to node[pos=.5, above]{$e_{14}$} (v14);
			\draw[edge, ->](v14) to node[pos=.5, above]{$e_{15}$} (v10);
			
		\end{tikzpicture}
	\end{center}
	Let the  special vertex be $v_\star=v_{14}$. Choosing a spanning tree means to choose exactly two edges not in the tree. If we pick $\bar T := \left \lbrace e_2, e_8 \right \rbrace $,	then the polynomial 
	\begin{align*}
		X_T &= \left( x_3-x_2 \right) \left( x_8-x_7 \right)  = x_3 x_8 - x_2 x_8 - x_3 x_7 + x_2 x_7
	\end{align*}
	has four monomials, each of them gives two (identical) vertex Dodgson polynomials due to the exchange of the two factors, and the sum of the terms in \cref{lem:permutations} is
	\begin{align*}
		\sum_{s=1}^4 Y_{T,s} &=    (-1)^{3+8} 2\psi^{3,8  } -  (-1)^{2+8} 2\psi^{2,8 }      - (-1)^{3+7}2 \psi^{3,7}  +  (-1)^{2+7}2 \psi^{2,7 }    \\
		&=   -2 \psi^{e_3,e_8} -2 \psi^{e_2,e_8}      - 2 \psi^{e_3,e_7} -2 \psi^{e_2,e_7}.
	\end{align*}
 	 By \cref{lem:dodgson_edge_edge}, this sum simplifies considerably:
	\begin{align*}
		\sum_{s=1}^4 Y_{T,s} &=  2a_2 a_8 \left( a_{11}+a_{12} + a_{13} + a_{14} + a_{15} \right)    = -2 a_2 a_8 \psi^{e_2,e_8  }. 
	\end{align*}
	This is the result expected from \cref{lem:dbarT}, where the factor of 2 again arises from the sum over permutations $\psi^{e_2,e_8 }+\psi^{e_8,e_2}$, and the scalar prefactors of \cref{lem:dbarT} are not present since we only considered $\sum_s Y_{T,s}$ instead of $[\d \bar T]\alpha_\Gamma$. 
	
	The choice $\bar T = \left \lbrace e_2, e_8 \right \rbrace $ was special in that it neither involved the special vertex $v_\star=v_{14}$ nor any vertex twice. This restriction is not necessary. For example for $\bar T  := \left \lbrace e_{10},e_{15} \right \rbrace $, where both edges are adjacent to $v_{10}$, one obtains
	\begin{align*}
		\sum_{s=1}^2 Y_{T,j}		&=   2\left(  \psi^{v_{10},v_{10} } + \psi^{  v_9,v_{10} }      \right)   \\
		&= 2\left(  a_{10} a_{15} \left( a_{1} + a_{2} + a_{3} + a_{4} + a_{5} \right)   \right) 	= 2 a_{10} a_{15} \psi^{e_{10},e_{15}}. 
	\end{align*}
\end{example}

The edge-type Dodgson polynomials in \cref{lem:dbarT} are symmetric upon exchanging their arguments, this means that every summand appears $2^{\frac L 2}$ times. Moreover, the product is symmetric upon exchanging the factors, this gives another factor $\left( \frac L 2 \right) !$. Hence, 
\begin{align}\label{dT_distinct_summands}
\text{\# distinct summands } &= \frac{L!}{\left( \frac L 2 \right) ! 2^{\frac L 2}} = \frac{L!}{L!!}=(L-1)!!\\
&= \left \lbrace 1,~3,~15,~105,~945,~10395,~ \ldots \right \rbrace \quad \text{for}\quad L=\left \lbrace 2,4,6,8,\ldots  \right \rbrace   .  \nonumber
\end{align}
This makes it possible to compute the sum for typical 6-loop graphs relatively easily with a computer.

\section{Wedge product}\label{sec:wedge_product}

We will establish that
\begin{align*}
	\alpha_\Gamma \wedge \alpha_\Gamma=0
\end{align*}
for all 1PI graphs. This implies that $\alpha_\Gamma \wedge \alpha_\Gamma=0$ for all non-tree graphs because by \cref{lem:factorization}, if $\Gamma$ is not 1PI then $\alpha_\Gamma$ factorizes, and   if $\alpha_\Gamma= \alpha_1 \wedge \alpha_2$ then $\alpha_\Gamma \wedge \alpha_\Gamma = \pm \left( \alpha_1 \wedge \alpha_1 \right) \wedge \left( \alpha_2 \wedge \alpha_2 \right) $, which is zero if either of the 1PI factors is zero.

\begin{example}{Non-vanishing wedge product}\label{ex:wedge_non_vanishing}
	Notice, first of all, that the claim $\alpha_\Gamma \wedge \alpha_\Gamma=0$ is not trivial. 
	The wedge product is symmetric for even products of differentials, for example
	\begin{align*}
		&\Big( \d a_1 \wedge \d a_2 + \d a_3 \wedge \d a_4 \Big) \wedge \Big( \d a_1 \wedge \d a_2 + \d a_3 \wedge \d a_4 \Big) \\
		&= 2 \d a_1 \wedge \d a_2 \wedge \d a_3 \wedge \d a_4 \neq 0.
	\end{align*}
	Conversely, for an odd number of differentials, there is an overall minus sign when sorting the mixed terms and they cancel out each other:
	\begin{align*}
		\left( \d a_1 \wedge \d a_2 \wedge \d a_3 + \d a_4 \right) \wedge 	\left( \d a_1 \wedge \d a_2 \wedge \d a_3 + \d a_4   \right)&=0.
	\end{align*}
 The number of differentials $\d a$ in $\alpha_\Gamma$ equals the loop number (\cref{loop_number}), which is even by \cref{lem:odd_vanishes}, hence it would be conceivable that $\alpha_\Gamma \wedge \alpha_\Gamma \neq 0$.
	
\end{example}

\begin{example}{Dunce's cap}
	We know from \cref{ex:dunces_cap_omega} that $\alpha_\Gamma$ for the dunces cap equals
	\begin{align*}
	   \frac{a_4 \left( \d a_1 \wedge \d a_3 + \d a_2 \wedge \d a_3 \right) - a_3 \left( \d a_1 \wedge \d a_4 + \d a_2 \wedge \d a_4 \right)  + \left( a_1 + a_2  \right) \d a_3 \wedge \d a_4 }{8  \psi^{\frac 32} } .
	\end{align*}
	Indeed, explicit calculation shows that $\alpha_\Gamma \wedge \alpha_\Gamma=0$ in this case. Firstly, there are all those terms where any $\d a_e$ appears twice, they vanish automatically. What remains are exactly those combinations where the sets of edges are disjoint. These are
	\begin{align*}
		a_4 (-a_3) \left( \d a_1 \wedge \d a_3  \right)  \wedge \left( \d a_2 \wedge d a_4  \right)   + a_4 (- a_3) \left( \d a_2 \wedge \d a_3  \right) \wedge \left( \d a_1 \wedge d a_4 \right) =0.
	\end{align*}
\end{example}

Even if, by \cref{ex:wedge_non_vanishing}, the wedge product of an even number of differentials is not trivially zero, there are still some graphs with even loop order where the wedge product vanishes trivially, namely, if they do not permit sufficiently many distinct edge differentials.

\begin{lemma}\label{lem:edges_upper_bound}
	If $2L > \abs{E_\Gamma}$, or equivalently $\abs{E_\Gamma} > 2  \abs{V_\Gamma}-2$, then $\alpha_\Gamma \wedge \alpha_\Gamma=0$. In particular, if $\Gamma$ is a multiedge on 2 vertices, then $\alpha_\Gamma \wedge \alpha_\Gamma=0$.
\end{lemma}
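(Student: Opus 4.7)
The idea is to observe that this lemma is essentially a degree count, so the argument is very short and the only real content is identifying the exact degree of $\alpha_\Gamma$ as a differential form in the $a_e$ variables.

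By \cref{lem:dbarT} (combined with \cref{Pgamma_factors4}), $\alpha_\Gamma$ is a sum over spanning trees $T$ of terms of the form
\begin{align*}
c_T(\vec a)\, \bigwedge_{e\notin T}\d a_e,
\end{align*}
where $c_T(\vec a)$ is a scalar rational function of the Schwinger parameters. Each spanning tree omits exactly $|E_\Gamma|-(|V_\Gamma|-1)=L$ edges, so every summand contributes the same number $L$ of edge differentials. Hence $\alpha_\Gamma$ is a homogeneous differential form of pure degree $L$ on the $|E_\Gamma|$-dimensional space of Schwinger parameters. This is the only ingredient I will use; no further information about $c_T$ is needed.

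Consequently, $\alpha_\Gamma \wedge \alpha_\Gamma$ is a $2L$-form in $|E_\Gamma|$ variables $a_e$. Since the exterior algebra on $|E_\Gamma|$ generators has no nonzero elements of degree exceeding $|E_\Gamma|$, we conclude that $\alpha_\Gamma \wedge \alpha_\Gamma = 0$ whenever $2L > |E_\Gamma|$. To see the stated equivalence, substitute $L=|E_\Gamma|-|V_\Gamma|+1$ from \cref{loop_number}:
\begin{align*}
2L>|E_\Gamma|\quad\Longleftrightarrow\quad 2|E_\Gamma|-2|V_\Gamma|+2>|E_\Gamma|\quad\Longleftrightarrow\quad |E_\Gamma|>2|V_\Gamma|-2.
\end{align*}

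For the ``in particular'' statement, consider the multiedge on two vertices with $n$ edges, so $|V_\Gamma|=2$ and $|E_\Gamma|=n$, and hence $L=n-1$. If $n$ is even, then $L$ is odd and \cref{lem:odd_vanishes} already gives $\alpha_\Gamma=0$, so trivially $\alpha_\Gamma \wedge \alpha_\Gamma=0$. If $n$ is odd and $n\geq 3$, then $n>2=2|V_\Gamma|-2$ and the degree argument above applies. The remaining case $n=1$ is a tree (excluded). There is no real obstacle to this argument; it is purely a matter of degree counting, and its role is to dispose of the extremal cases so that in \cref{sec:wedge_product} one may concentrate on graphs satisfying $|E_\Gamma|\leq 2|V_\Gamma|-2$.
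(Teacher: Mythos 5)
Your proof is correct and follows essentially the same route as the paper: both arguments are pure degree counts, observing that each copy of $\alpha_\Gamma$ carries exactly $L$ edge differentials, so $\alpha_\Gamma\wedge\alpha_\Gamma$ is a $2L$-form on an $\abs{E_\Gamma}$-dimensional space of Schwinger parameters and must vanish once $2L>\abs{E_\Gamma}$. Your explicit handling of the two-edge multiedge (where the inequality is not strict and one must instead invoke \cref{lem:odd_vanishes}) is a small point the paper's proof leaves implicit, but it does not change the approach.
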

\begin{proof}
	By \cref{dbarT}, each factor $\alpha_\Gamma$ contains as many $\d a_e$ as there are edges not in a spanning tree. These are $L $ (the loop number \cref{loop_number}) edges. 
	Consequently, a wedge product $\alpha_\Gamma \wedge \alpha_\Gamma$ contains $2 L  $ differentials $\d a_e$.
	By antisymmetry, the wedge product $\alpha_\Gamma \wedge \alpha_\Gamma$ vanishes unless all involved $\d a_e$ are distinct. Ignoring any subtleties, for this to be possible at all, the graph must have at least $2 L $ edges. However, at a fixed number of vertices, every additional edge increases the loop number, therefore, this can equivalently be viewed as an \emph{upper} bound on the number of edges.
\end{proof}
Note that the condition of \cref{lem:edges_upper_bound} does not restrict the remaining class of graphs too much. For example, for the special case of $r$-regular graphs (i.e. each vertex has valence $r$), the condition $\abs{E_\Gamma} \leq 2 \abs{V_\Gamma}-2$ is true exactly if $r \leq 4$, (these are the graphs that arise in the renormalizable scalar theories in 4-dimensions). As a second example, a $n$-Laman graph \cite[Sect.~3.3]{gaiotto_higher_2024} has $n \abs{V_\Gamma} = (n-1) \abs{E_\Gamma}+n+1$, and hence $\frac{n}{n-1}\abs{V_\Gamma}- \frac{n+1}{n-1} = \abs{E_\Gamma}$. These graphs satisfy $\abs{E_\Gamma}\leq  2 \abs{V_\Gamma}-2$ for all  $n \geq 2$. 
\bigskip

By \cref{lem:edges_upper_bound}, we concentrate on  $\abs{E_\Gamma}\leq 2 \left( \abs{V_\Gamma}-1 \right) $. 
For every spanning tree, there are $L$ edges not in the tree. If the bound \cref{lem:edges_upper_bound} is exhausted, then the two spanning trees are complements of each other. Otherwise, there is an overlap between the spanning trees, but still, the edge sets \emph{not} in the trees are disjoint.

Let now $E$ be the set of exactly $2L $ distinct  edges \emph{not} in the spanning trees, that is 
\begin{align}\label{def:E}
E = E_1 \cup E_2 = \bar T_1 \cup \bar T_2 = \left( E_\Gamma \setminus T_1 \right) \cup \left( E_\Gamma \setminus T_2 \right) = \left \lbrace e_{i_1} , \ldots, e_{i_{2L}} \right \rbrace   .
\end{align}
As in \cref{def:dbarT}, we let
\begin{align}\label{def:dE}
	\d E &:= \d a_{i_1} \wedge \ldots \wedge \d a_{i_{2L}}.
\end{align} 
The sought-after wedge product consists of multiple terms of this form, each multiplied by some polynomial $Q_E$ in the remaining edge variables $\left \lbrace a_e \right \rbrace $. 
\begin{align}\label{def:QE}
	\alpha_\Gamma \wedge \alpha_\Gamma &= \sum_{\substack{E \subseteq E_\Gamma \\ \abs{E}=2L}} \bar Q_E \; \d E.
\end{align}
Our task is to describe the $\bar Q_E$ and to establish that they vanish.

Every $\bar Q_E$ arises from a sum of \emph{all} ways that produce a differential $\d E$. For a fixed set $E$ of edges, this amounts to all ways of partitioning  $E$ into two disjoint parts according to \cref{def:E}.  There are exactly $\binom {2L}L$ ways to do that. Not for all of these partitions, the two sets describe spanning trees, but we ignore this restriction for now. We can still include those partition if we make sure that they give a zero contribution to the sum.

In the  set $\d E$ (\cref{def:dE}), all factors $\d a_e$ appeared in increasing order. For a given partition $E=E_1\cup E_2$, the factors of $E_1$, which belong to the first $\alpha_\Gamma$, come before the factors of $E_2$. Therefore, when computing the wedge product \cref{def:QE}, these factors will be shuffled and need to be brought into increasing order. On the other hand, $E_1$ and $E_2$ are individually sorted. The sign that arises from the shuffle is the sign of the \emph{partition}, $\sgn(E_1 \oplus E_2)$. Here we use the notation $E_1 \oplus E_2$ instead of $E_1 \cup E_2$ to indicate that this object is a concatenation of two disjoint sequences of edges (which keeps a notion of order), not just the union (which would be the same regardless of which partition had been used). Phrased differently, we are interpreting $E_1,E_2$ as \emph{words} (sequences with specified order) instead of sets (which have no notion of order). 

\begin{example}
{} Consider $E:= \left \lbrace e_1,e_2,e_3,e_4 \right \rbrace $, then $\d E = \d a_1 \wedge \d a_2 \wedge \d a_3 \wedge d a_4$, where the factors are sorted. In the product \cref{def:QE}, the differential $\d E$ can arise, for example, from 
	\begin{align*}
	\left( \d a_1 \wedge \d a_2 \right) \wedge \left( \d a_3 \wedge \d a_4 \right) = \d E.
	\end{align*}
	But it might as well arise in a product of the form
	\begin{align*}
	\left( \d a_1 \wedge \d a_3 \right) \wedge \left( \d a_2 \wedge \d a_4 \right) = (-1) \d E.
	\end{align*}
	Note that the factors within the parentheses, corresponding to $E_1$ and $E_2$, are sorted in all cases, but still in the second case, their order-preserving concatenation $E_1 \oplus  E_2=a_1 a_3 a_2 a_4$ is not   sorted. 
\end{example}

This sign, $\sgn(E_1 \oplus E_2)$, is $+1$ if nothing needs to be permuted, that is, if all letters of the word $E_1\oplus E_2$ come in increasing order. Otherwise, keep in mind that the words $E_1$ and $E_2$ individually are sorted in order to be used as $\bar T$ in \cref{lem:dbarT}. Therefore, the sign of the shuffle equals the sign of the permutation of the word $E_1\oplus E_2$,  
\begin{align*}
\sgn \left(E_1 \oplus E_2 \right) &= \sgn_\text{perm}\Big(  E_1   \oplus  E_2    \Big) \qquad \text{when $E_1$ and $E_2$ individually are sorted}.
\end{align*}
Conversely, if we allow $E_1$ and $E_2$ to be words with arbitrary order of their elements, we need a correction factor to put them into increasing order:
\begin{align}\label{partition_sign_2}
\sgn \left( E_1 \oplus E_2 \right) &= \sgn_\text{perm}\left( E_1 \oplus E_2 \right) \cdot \sgn_\text{perm}\left( E_1 \right) \cdot \sgn_\text{perm}\left( E_2 \right) .
\end{align}
Even if in \cref{lem:dbarT} the arguments $E_j$ must be sorted increasingly, we will see later that it is useful to allow non-sorted $E_j$ as well, which requires \cref{partition_sign_2}. 

Recall that, summing over all partitions, the contribution of a fixed set $E$ of edges to the wedge product \cref{def:QE} is
\begin{align*}
	\bar Q_E &= \sum_{E = E_1 \cup E_2} \sgn(E_1 \oplus E_2) \cdot \left[ \d E_1 \right] \alpha_\Gamma \cdot \left[ \d E_2 \right] \alpha_\Gamma . 
\end{align*}
In this formula, all edges in the subsets $E_1$ and $E_2$ are in increasing order such that we can apply \cref{lem:dbarT} without any further signs. This will give rise to an overall scalar factor consisting of powers of $\psi $ and other quantities. We leave  out this factor since it is irrelevant for the question whether $\bar Q_E$ vanishes. The non-trivial part of $\bar Q_E$ is
\begin{align}\label{QE_sum}
	\tilde Q_E &:= \sum_{E =E_1 \oplus E_2} \sgn(E_1 \oplus E_2)  \det \big( \incidencematrix(E_1,\emptyset) \big)    \det \big(\incidencematrix(E_2,\emptyset) \big)  \nonumber\\
	&\qquad \cdot  \sum_{\sigma \in S_L(E_1)} \psi^{ \sigma(e_1), \sigma(e_2)   }  \cdots \psi^{ \sigma(e_{L-1}), \sigma(e_L)   }  \nonumber\\
	&\qquad  \cdot  \sum_{\sigma \in S_L(E_2)} \psi^{ \sigma(e_1), \sigma(e_2)     }  \cdots \psi^{ \sigma(e_{L-1}), \sigma(e_L)  } . 
\end{align} 
The signs in this sum, arising from \cref{lem:sign_T}, are problematic because the determinants depend on the   choice of graph matrix for a given graph, that is, on the choice of labeling and direction of edges. In \cref{QE_sum}, we have introduced the notation $\incidencematrix(E_1,\emptyset)$ of \cref{def:minors} to make contact with \cref{lem:dodgson_polynomial_determinants}. Concretely, by \cref{Dodgson_product_trees}, this product of the determinants is a Dodgson polynomial,
\begin{align}\label{sign_product_dodgson}
\det \big(\incidencematrix(E_1,\emptyset) \big) \det \big(\incidencematrix(E_2,\emptyset ) \big) &=  \psi^{  E_1,   E_2}  \in \left \lbrace -1,0,+1 \right \rbrace . 
\end{align}
This identity holds even for the case where $E_1$ and $E_2$ are arbitrary sets of $L$ edges each, but the expression is non-zero only when both $E_1$ and $E_2$ are complements of spanning trees.  On the other hand, the Dodgson polynomial $ \psi^{E_1, E_2} $ is the determinant of the $(\abs{V_\Gamma}-1)\times (\abs{V_\Gamma}-1)$ minor $\mathbb M(E_1,E_2)\equiv \mathbb M[T_1,T_2]$. According to \cref{lem:dodgson_sign_determinant}, this Dodgson polynomial can be expanded as an alternating sum over Dodgson polynomials $\psi^{i,j} $, where $i$ and $j$ are individual edges:
	\begin{align}\label{dodgson_sign_sum}
	\psi^{E_1,E_2}  &= \frac{1}{\left( \psi  \right) ^{L-1}}\sum_{m\in M(E_1,E_2) } \sgn(m) \underbrace{\prod_{(p_1,p_2)\in m} \psi^{p_1,p_2} }_{L \textnormal{ factors}},
\end{align}
where $M(E_1,E_2)$ are the $L!$ matchings of the $L$-element sets $E_1$ and $E_2$. 
The Symanzik polynomial $(\psi)^{L-1}$ in \cref{dodgson_sign_sum} is yet another trivial overall factor, we factor it out of $\tilde Q_E$ (\Cref{QE_sum}) to form $Q_E$ according to   
\begin{align}\label{QE_sum2}
	Q_E &:=\sum_{E =E_1 \cup E_2} \sgn(E_1 \oplus E_2)  \sum_{m\in M(E_1,E_2)} \sgn(m) \prod_{(p_1,p_2)\in m}\psi^{p_1,p_2}   \nonumber\\
	&\qquad \cdot  \sum_{\sigma \in S_L(E_1)} \psi^{ \sigma(e_1) , \sigma(e_2)  } \cdots \psi^{  \sigma(e_{L-1})  ,   \sigma(e_L)  }  \nonumber\\
	&\qquad  \cdot  \sum_{\sigma \in S_L(E_2)} \psi^{\left \lbrace \sigma(e_1) \right \rbrace , \left \lbrace \sigma(e_2) \right \rbrace    } \cdots \psi^{ \sigma(e_{L-1})   ,   \sigma(e_L)   }  . 
\end{align}

\begin{example}{2-loop with general Dodgson polynomials}\label{ex:2loop_permutations}
	
	Consider an arbitrary 1PI 2-loop graph. Such a graph has the shape of the letter $\theta$, where the three branches might be arbitrarily long, compare the graph considered in \cref{ex:dodgson_2loop}.

	Without loss of generality, let $E= \left \lbrace 1,2,3,4 \right \rbrace $  be the $2L$ selected edges. Up to trivial exchange of the two components, there are three distinct partitions, they have the following signs:
	\begin{align*}
	\left \lbrace 1,2 \right \rbrace \oplus \left \lbrace 3,4 \right \rbrace &: \quad +1\\
	\left \lbrace 1,3 \right \rbrace \oplus \left \lbrace 2,4 \right \rbrace &:\quad -1\\
	\left \lbrace 1,4 \right \rbrace \oplus \left \lbrace 2,3 \right \rbrace &: \quad +1.      
	\end{align*}
	\Cref{QE_sum}, using \cref{sign_product_dodgson}, reads 
	\begin{align*}
		Q_E &= +1 \psi^{\left \lbrace 1,2 \right \rbrace ,\left \lbrace 3,4 \right \rbrace    } \psi^{1,2 } \psi^{3,4} - \psi^{\left \lbrace 1,3 \right \rbrace ,\left \lbrace 2,4 \right \rbrace    } \psi^{1,3} \psi^{2,4} + \psi^{\left \lbrace 1,4 \right \rbrace ,\left \lbrace 2,3 \right \rbrace    } \psi^{1,4}\psi^{2,3}.
	\end{align*}
	The Dodgson polynomial with two-component sets as indices are zero or $\pm 1$. In particular, they are zero unless both $E_1$ and $E_2$ are complements of spanning trees. Instead of examining the combinatorial possibilities of choosing spanning trees, we now use  \cref{dodgson_sign_sum} to expand these Dodgson polynomials, and we obtain \cref{QE_sum2}:
	\begin{align*}
		\psi\cdot Q_E &= \left( \psi^{1,3}\psi^{2,4}-\psi^{1,4}\psi^{2,3} \right) \psi^{1,2 } \psi^{3,4}\\
		&\qquad - \left( \psi^{1,2}\psi^{3,4}-\psi^{1,4}\psi^{2,3} \right)   \psi^{1,3} \psi^{2,4} \\
		&\qquad + \left( \psi^{1,2}\psi^{3,4}-\psi^{1,3}\psi^{2,4}\right)   \psi^{1,4}\psi^{2,3}\\
		&=  \psi^{1,2 } \psi^{1,3}\psi^{2,4}  \psi^{3,4}-  \psi^{1,2 } \psi^{1,4} \psi^{2,3}  \psi^{3,4}\\
		&\qquad -  \psi^{1,2} \psi^{1,3} \psi^{2,4}\psi^{3,4}  + \psi^{1,3}\psi^{1,4} \psi^{2,3}     \psi^{2,4} \\
		&\qquad + \psi^{1,2}\psi^{1,4}\psi^{2,3} \psi^{3,4}  -\psi^{1,3}\psi^{1,4}\psi^{2,3}\psi^{2,4}   \\
		&= 0.
	\end{align*}
	The sum vanishes regardless of the  values of the particular Dodgson polynomials.

\end{example}
Already at $L=4$, the sums of \cref{QE_sum2} are too complicated to do them by hand, but \cref{ex:2loop_permutations} is paradigmatic: The fact that the sum vanishes is due to combinatorial cancellations between the summands. We never need to evaluate any $\psi^{i,j}$ explicitly, it does not even matter if a given $\psi^{i,j}$ is zero or not. 

\begin{figure}[htbp]
	\centering
		\begin{tikzpicture}

			\coordinate(x0) at (0,0);
			\node at ($(x0)+(.5,1)$) {$+(1,2) \oplus (3,4)$};
			\node[vertex, fill=white, label=above:1](v1) at ($(x0)+(0,0)$) {};
			\node[vertex, fill=white, label=above:2](v2) at ($(x0)+(1,0)$) {};
			\node[vertex, label=below:3](v3) at ($(x0)+(0,-1)$) {};
			\node[vertex, label=below:4](v4) at ($(x0)+(1,-1)$) {};
			
			\draw[edge](v1)--(v2);
			\draw[edge](v3)--(v4);
			\draw[edge,dashed](v1)--(v3);
			\draw[edge,dashed](v2)--(v4);
			\node at ($(x0)+(.5,-2)$) {$+\psi^{1,3}\psi^{2,4}\psi^{1,2} \psi^{3,4}$};
			\node at ($(x0)+(.5,-3)$) {fix $\left \lbrace 1,4 \right \rbrace $ };
			\node at ($(x0)+(.5,-3.5)$) { exchange $2 \leftrightarrow 3$ };
			
			\coordinate(x0) at (5,0);
			\node at ($(x0)+(.5,1)$) {$-(1,2) \oplus (4,3)$};
			\node[vertex, fill=white, label=above:1](v1) at ($(x0)+(0,0)$) {};
			\node[vertex, fill=white, label=above:2](v2) at ($(x0)+(1,0)$) {};
			\node[vertex, label=below:3](v3) at ($(x0)+(0,-1)$) {};
			\node[vertex, label=below:4](v4) at ($(x0)+(1,-1)$) {};
			
			\draw[edge](v1)--(v2);
			\draw[edge](v3)--(v4);
			\draw[edge,dashed](v2)--(v3);
			\draw[edge,dashed](v1)--(v4);
			\node at ($(x0)+(.5,-2)$) {$-\psi^{1,4}\psi^{2,3}\psi^{1,2} \psi^{3,4}$};
			\node at ($(x0)+(.5,-3)$) {fix $\left \lbrace 1,3 \right \rbrace $ };
			\node at ($(x0)+(.5,-3.5)$) { exchange $2 \leftrightarrow 4$ };
			
			\coordinate(x0) at (10,0);
			\node at ($(x0)+(.5,1)$) {$+(1,3) \oplus (4,2)$};
			\node[vertex, fill=white, label=above:1](v1) at ($(x0)+(0,0)$) {};
			\node[vertex, label=above:2](v2) at ($(x0)+(1,0)$) {};
			\node[vertex, fill=white, label=below:3](v3) at ($(x0)+(0,-1)$) {};
			\node[vertex, label=below:4](v4) at ($(x0)+(1,-1)$) {};
			
			\draw[edge](v1)--(v3);
			\draw[edge](v2)--(v4);
			\draw[edge,dashed](v3)--(v2);
			\draw[edge,dashed](v1)--(v4);
			\node at ($(x0)+(.5,-2)$) {$+\psi^{1,4}\psi^{2,3}\psi^{1,3} \psi^{2,4}$};
			\node at ($(x0)+(.5,-3)$) {fix $\left \lbrace 1,2 \right \rbrace $ };
			\node at ($(x0)+(.5,-3.5)$) { exchange $3\leftrightarrow 4$ };

			\coordinate(x0) at (0,-5.5);	
			\node at ($(x0)+(.5,1)$) {$-(1,3) \oplus (2,4)$};
			\node[vertex, fill=white, label=above:1](v1) at ($(x0)+(0,0)$) {};
			\node[vertex, label=above:2](v2) at ($(x0)+(1,0)$) {};
			\node[vertex, fill=white, label=below:3](v3) at ($(x0)+(0,-1)$) {};
			\node[vertex, label=below:4](v4) at ($(x0)+(1,-1)$) {};
			
			\draw[edge](v1)--(v3);
			\draw[edge](v2)--(v4);
			\draw[edge,dashed](v1)--(v2);
			\draw[edge,dashed](v3)--(v4);
			\node at ($(x0)+(.5,-2)$) {$-\psi^{1,2}\psi^{3,4}\psi^{1,3} \psi^{2,4}$};

			\coordinate(x0) at (5,-5.5);
			\node at ($(x0)+(.5,1)$) {$+(1,4) \oplus (2,3)$};
			\node[vertex, fill=white, label=above:1](v1) at ($(x0)+(0,0)$) {};
			\node[vertex, label=above:2](v2) at ($(x0)+(1,0)$) {};
			\node[vertex,label=below:3](v3) at ($(x0)+(0,-1)$) {};
			\node[vertex,  fill=white, label=below:4](v4) at ($(x0)+(1,-1)$) {};
			
			\draw[edge](v1)--(v4);
			\draw[edge](v2)--(v3);
			\draw[edge,dashed](v1)--(v2);
			\draw[edge,dashed](v3)--(v4);
			\node at ($(x0)+(.5,-2)$) {$+\psi^{1,2}\psi^{3,4}\psi^{1,4} \psi^{2,3}$};
			
			\coordinate(x0) at (10,-5.5);
			\node at ($(x0)+(.5,1)$) {$-(1,4) \oplus (3,2)$};
			\node[vertex, fill=white, label=above:1](v1) at ($(x0)+(0,0)$) {};
			\node[vertex, label=above:2](v2) at ($(x0)+(1,0)$) {};
			\node[vertex,label=below:3](v3) at ($(x0)+(0,-1)$) {};
			\node[vertex,  fill=white, label=below:4](v4) at ($(x0)+(1,-1)$) {};
			
			\draw[edge](v1)--(v4);
			\draw[edge](v2)--(v3);
			\draw[edge,dashed](v4)--(v2);
			\draw[edge,dashed](v1)--(v3);
			\node at ($(x0)+(.5,-2)$) {$-\psi^{1,3}\psi^{2,4}\psi^{1,4} \psi^{2,3}$};

		\end{tikzpicture}
	\caption{Auxiliary graphs $G$ of  six distinct permutations $S$ of $E=\left \lbrace 1,2,3,4 \right \rbrace   $ for a 2-loop graph used in the proof of \cref{thm:QE_0}. These graphs correspond to  the six terms comprising $Q_E$ in \cref{ex:2loop_permutations}. \\ The first line of each figure shows the partition $\sgn_\text{perm}(e_1,e_2,e_3,e_4)\cdot (e_1,e_2)\oplus(e_3,e_4)$ with its sign   (\cref{proof_overall_sign}). Vertices of $E_1$ are drawn white, vertices of $E_2$ are black. 
	The dashed lines correspond to Dodgson polynomials in the first block of factors, \cref{factor1_product}. They go from $E_1$ to $E_2$, and their position is implied by the chosen permutation $S$. The bold lines represent the Dodgson polynomials of the second block of factors, \cref{factor2_product}. These are a perfect matchin \emph{within} either $E_1$ or $E_2$. In general, the bold edges are not fixed by $S$ alone, but for a 2-loop graph as shown here, they are. \\
	The cancellation used in \cref{thm:QE_0} occurs within each column. We fix two vertices at opposing sides,  and exchange the remaining vertices. In the 2 loop case, this means to exchange one pair of vertices. This exchanges dashed and solid lines, but reproduces the same summand in terms of Dodgson polynomials, just with a flipped sign.}
	\label{fig:visualization_2loop}
\end{figure}

\begin{theorem}\label{thm:QE_0}
	Let $\Gamma$ be a 1PI graph with $L$ loops where $L$ is even. Let $E \subseteq E_\Gamma$ with $\abs{E}=2L$ and let $Q_E$ be as in \cref{QE_sum2}. Then 
	\begin{align*}
	Q_E=0.
	\end{align*}
	This implies $\bar Q_E=0$ and therefore $\alpha_\Gamma \wedge \alpha_\Gamma=0$ by \cref{def:QE}.
\end{theorem}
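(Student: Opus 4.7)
The plan is to prove $Q_E = 0$ by constructing a fixed-point-free, sign-reversing involution on the summands in \cref{QE_sum2}, so that contributions cancel in pairs. First, I reinterpret each summand combinatorially: after factoring the symmetric overcounting from $\sum_{\sigma \in S_L(E_k)}$ into a common constant, a summand is specified by a tuple (ordered partition $E = E_1 \oplus E_2$; cross matching $m \in M(E_1, E_2)$; perfect matchings $\mu_1$ of $E_1$ and $\mu_2$ of $E_2$). These data describe a 2-edge-colored, 2-regular multigraph $G$ on vertex set $E$ in which every vertex is incident to exactly one \emph{cross} edge (from $m$) and one \emph{internal} edge (from $\mu_1 \cup \mu_2$). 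The product of Dodgson factors $\prod_e \psi^e$ attached to the summand depends only on the edges of $G$, not on which are colored cross.

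Next, I use the color structure of $G$. In each cycle of $G$, cross and internal edges alternate; walking around a cycle, cross edges flip the side while internal edges preserve it, so closing up the cycle forces the number of cross edges to be even, and each cycle has length divisible by $4$. In particular, multi-edges (2-cycles) are impossible. This structural constraint is what makes the following involution well-defined.

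The involution is defined as follows. Let $v_0 = \min E$ and let $C$ be the cycle of $G$ containing $v_0$. Swap the cross/internal label of every edge of $C$, leave all other cycles unchanged, and update the partition and matchings accordingly: keep $v_0$ in its original side and propagate the side-assignments around $C$ using the new coloring. The condition $|C| \in 4\mathbb{Z}$ guarantees that equal numbers of vertices move from $E_1$ to $E_2$ as in the opposite direction, so the new partition remains balanced. The map is manifestly its own inverse, and has no fixed points, since a fixed point would require every edge of $C$ to be simultaneously cross and internal. In the $L=2$ case illustrated in \cref{fig:visualization_2loop}, $C$ is a single $4$-cycle and the involution specializes to the ``fix two opposite vertices, exchange the remaining two'' recipe described there.

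The main obstacle is showing that this involution flips the overall sign $\sgn(E_1 \oplus E_2) \cdot \sgn(m)$. I would decompose the side-swap on $C$ into a sequence of elementary transpositions, each exchanging one vertex of $E_1$ with one vertex of $E_2$, and track the cumulative effect on both signs step by step. For the base case $|C| = 4$ (verified in \cref{ex:2loop_permutations}), the swap reduces to a single transposition that flips the shuffle sign while leaving $\sgn(m)$ unchanged. For longer cycles $|C| = 4k$, the alternating color pattern organizes the transpositions into $k$ structurally identical blocks whose net sign contribution can be computed by induction on $k$; I expect the result to be an overall sign flip independent of $k$. Once the sign reversal is confirmed, the summands cancel in pairs, giving $Q_E = 0$, whence $\bar Q_E = 0$ and $\alpha_\Gamma \wedge \alpha_\Gamma = 0$ by \cref{def:QE}.
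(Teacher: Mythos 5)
Your construction is, in substance, the same as the paper's: the auxiliary graph on the $2L$ selected edges of $\Gamma$, with a perfect matching of \enquote{cross} edges between $E_1$ and $E_2$ and \enquote{internal} matchings inside each part, the observation that the two edge types alternate so every cycle has length divisible by four, and cancellation by exchanging the two edge types within one distinguished cycle while leaving the product of Dodgson factors untouched. Your involution (swap the colours on the cycle through $\min E$ and re-propagate the side assignment from the fixed vertex) produces exactly the pairing the paper uses, as your check of the $4$-cycle case against \cref{fig:visualization_2loop} shows.

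The gap is that the decisive step --- that this swap reverses the total sign $\sgn(E_1\oplus E_2)\cdot\sgn(m)$ for a cycle of length $4k$ with $k>1$ --- is not proved: you outline a decomposition into transpositions and an induction on $k$, and then write that you \enquote{expect} an overall flip. That expectation is precisely the content of the theorem; without it nothing cancels. The paper closes this step by first collapsing the two signs into a single permutation sign, $\sgn(E_1\oplus E_2)\,\sgn(m)=\sgn_\text{perm}(E_1\oplus E_2)$ of the word $E_1\oplus E_2$ (using $\sgn(m)=\sgn_\text{perm}(E_1)\sgn_\text{perm}(E_2)$ together with \cref{partition_sign_2}), so that only one sign needs to be tracked through the swap; the colour exchange on a $4k$-cycle is then realized concretely as the reflection that fixes two antipodal vertices and exchanges the remaining $4k-2$ labels in $2k-1$ pairs --- an odd number of transpositions --- whence the permutation sign flips while the multiset of Dodgson factors is unchanged. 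To complete your version you must either reproduce this counting (showing that your side-propagation amounts to an odd permutation of the word $E_1\oplus E_2$ for every $k$, not just the verified case $k=1$), or adopt the paper's combined-sign bookkeeping; tracking $\sgn(E_1\oplus E_2)$ and $\sgn(m)$ separately through your \enquote{$k$ structurally identical blocks} is exactly the delicate interplay your sketch leaves unresolved.
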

\begin{proof}
	We examine the sum \cref{QE_sum2}. It consists of products of $2L $ edge-indexed Dodgson polynomials, coming with alternating signs. We will establish that these terms cancel pairwise. In what follows, we  assume that $E,E_1,E_2$ are \emph{words}, that is, sequences with a  fixed order of elements, not unordered sets. 
	
	\medskip 
	\noindent
	First, consider the sign. By \cref{partition_sign_2},
	\begin{align*}
		\sgn \left( E_1 \oplus E_2 \right) &= \sgn_\text{perm}\left( E_1 \oplus E_2 \right) \cdot \sgn_\text{perm}\left( E_1 \right) \cdot \sgn_\text{perm}\left( E_2 \right) .
	\end{align*}
	The other sign, $\sgn(m)$ from \cref{lem:dodgson_sign_determinant}, is the sign of the matching of the words $E_1$ and $E_2$. Phrased differently, this is the relative sign of their permutations, 
	\begin{align*}
	\sgn(m) &= \sgn_\text{perm} \left( E_1 \right) \cdot \sgn_\text{perm}\left( E_2 \right) .
	\end{align*}
	Hence, the overall sign of a given summand in $Q_E$ is 
	\begin{align}\label{proof_overall_sign}
			\sgn \left( E_1 \oplus E_2 \right) \sgn(m) &=\sgn_\text{perm}\left( E_1 \oplus E_2 \right).
	\end{align}
	This is the ordinary permutation sign for the permutation of the overall word $E$, where $E_1$ is the first $L$ elements of $E$ and $E_2$ is the second $L$ elements. 
	
	\medskip 
	\noindent
	Now we examine \emph{which} summands are present. The key to this is the observation that the sum over all partitions $E_1 \oplus E_2$, and then over all matchings of the individual $E_j$, is the same as a sum over all permutations $S$ of $E$, with the convention that the first $L$ elements, $\left \lbrace s_1, \ldots, s_L \right \rbrace $, form $E_1$, and the last $L$ elements, $\left \lbrace s_{L+1} , \ldots, s_{2L} \right \rbrace $,  form $E_2$. That is, $\left \lbrace s_1, s_2, \ldots, s_{2L-1},s_{2L} \right \rbrace = S=E_1 \oplus E_2$, and 
	\begin{align}\label{QE_sum3}
		Q_E &=\sum_{S\in S_{2L}(E)}  \sgn_\text{perm}(S)  \cdot   \psi^{s_1, s_{L+1}} \psi^{s_2, s_{L+2}} \cdots \psi^{s_L, s_{2L}}   \nonumber\\
		&\qquad \cdot  \sum_{\sigma \in S_L(E_1)}\psi^{ \sigma(s_1) , \sigma(s_2) } \cdots \psi^{  \sigma(s_{L-1})  ,   \sigma(s_L)  }  \nonumber\\
		&\qquad  \cdot  \sum_{\sigma \in S_L(E_2)}  \psi^{ \sigma( s_{L+1}) ,   \sigma(s_{L+2})     } \cdots \psi^{\sigma(s_{2L-1})   ,  \sigma(s_{2L})   }  . 
	\end{align}

	There are $\abs{E}=2L$   edge indices in $S$ and they are all distinct. We let $G$ be an auxiliary graph on $2L$ vertices, where each vertex is labeled with one of the indices of $E$ (i.e. the vertices of $G$ correspond to the selected edges of $\Gamma$). The bipartition $S=E_1\oplus E_2$ implies a coloring of the vertices of $G$. We draw the vertices white when they lie in $E_1$, and black when they are in $E_2$. 
	
	Each summand in \cref{QE_sum3} has $2L$ factors $\psi^{i,j}$, each of which involves exactly two distinct indices. These summands will be the edges in our auxiliary graph $G$. The Dodgson polynomial $\psi^{i,j}=\psi^{j,i}$ is symmetric under exchange of indices, hence the edges in $G$ are undirected. The first $L$  factors of a summand in \cref{QE_sum3} are
	\begin{align}\label{factor1_product}
	\psi^{s_1, s_{L+1}} \psi^{s_2, s_{L+2}} \cdots \psi^{s_L, s_{2L}} .
	\end{align}
	Each factor joins one index of $E_1$ to one index of $E_2$. Hence, in the auxiliary graph $G$, the factors in \cref{factor1_product} correspond to edges between a black and a white vertex. In fact, \cref{factor1_product} involves every index exactly once, hence, these edges are a matching of all black vertices of $G$ with all white vertices of $G$. We draw these edges as dashed lines in $G$. \Cref{fig:visualization_2loop} shows auxiliary graphs for the 2-loop case.
	
 	For a given permutation of $E$, the resulting auxiliary graph is unique. A fixed auxiliary graph can correspond to more than one permutation, namely, it corresponds to exactly all those permutations where the \emph{sets} $E_1,E_2$ are unaltered, but both \emph{words} are permuted in the same way such that the resulting matching is the same. In \cref{factor1_product}, those permutations  correspond to a permutation of the individual Dodgson polyonmials $\psi^{p_1,p_2}$, without changing the matching of indices. There are $L!$ such permutations, each of them amounts to permuting indices in $E_1$ and simultaneously in $E_2$, hence all these terms contribute with the same sign. Such permutations have no influence on the second block of factors in \cref{QE_sum3} because they leave the partition $S=E_1\oplus E_2$ fixed. Hence, we could multiply by $L!$ and ignore these permutations and demand without a loss of generality that the set $E_1$ should be sorted increasingly. The remaining partitions would then be in bijection with auxiliary graphs. However, we do not do this because the next step of the proof gets slightly easier when there is not an additional step of sorting indices involved. 
	
	The other block of $L$ factors in \cref{QE_sum3}, 
	\begin{align}\label{factor2_product}
\sum_{\sigma \in S_L(E_1)}\psi^{ \sigma(s_1) , \sigma(s_2) } \cdots \psi^{  \sigma(s_{L-1})  ,   \sigma(s_L)  }    \cdot   \sum_{\sigma \in S_L(E_2)} \psi^{ \sigma( s_{L+1}) ,   \sigma(s_{L+2})     } \cdots \psi^{\sigma(s_{2L-1})   ,  \sigma(s_{2L})   } 
	\end{align}
	represents the sum of all possibilities to choose Dodgson polynomials \emph{within} $E_1$ and $E_2$, respectively. In the auxiliary graph, these are   perfect matchings of the vertices $E_1$, and of the vertices $E_2$, respectively. We draw these factors as solid edges in $G$. Note that \cref{factor2_product} involves an additional summation over permutations, the matching is not implied by the summation over permutations $S=E_1 \oplus E_2$. Phrased differently: The permutation $S$ fixes the matching   \cref{factor1_product} \emph{between} $E_1$ and $E_2$, and for each fixed $S$, we need to sum over all perfect matchings \emph{within} $E_1$ and $E_2$. An example for this is shown in \cref{fig:visualization_4loop}.
	
	\medskip
	\noindent
	By construction, the auxiliary graph has $2L$ vertices, $2L$ edges, and every vertex is incident to exactly one solid edge and one dashed edge. Hence, the auxiliary graph is either  a cycle, or a union of disjoint cycles. No two edges of the first type or of the second type are adjacent to each other, therefore they must alternate, therefore each of these cycles contains an even number of edges. Moreover, in order to match vertices of the same color with the black edges, the number of vertices of a given color must be even within each cycle (this is equivalent to saying that the loop number is even, \cref{lem:odd_vanishes}). Hence, each cycle has an even number of black vertices and the same even number of white vertices, and the total number of vertices is a multiple of four. For 2 loops, the situation is shown in \cref{fig:visualization_2loop}. In that case the auxiliary graph is just \emph{one} cylce because an even cycle can not have less than 4 edges. For 4 loops, it is possible to have either one or two cycles, see \cref{fig:visualization_4loop}.
	
	\medskip 
	\noindent
	We will now show that the terms in \cref{QE_sum3} cancel pairwise.
	Consider a cycle of   length $4k$, where $k$ is integer. Choose arbitrary one of the vertices to be at position 0 in the cycle, this fixes a vertex at position $2k$. Now exchange the labels of all other vertices pairwise \enquote{across the connection line between the two fixed vertices}, that is, exchange number 1 with number $4k-1$, number 2 with $4k-2$ and so on until number $2k-1$ with $2k+1$.  After this operation, the cycle has the same structure, the same vertex labels are present, the two edge types are still alternating,  but the labeling of vertices has changed. Equivalently, this operation exchanges the two  types of edge within the selected cycle, that is, it moves factors of Dodgson polynomials from the first block (\cref{factor1_product}) to the second (\cref{factor2_product}) and vice versa. This reordering of factors does not change the value of a summand because each Dodgson polynomial with the same indices is equivalent, no matter if it came from the first or the second block of factors. In doing this transformation, a total of $4k-2$ vertices are exchanged, these are $2k-1$ swaps of indices. This number is always odd. Consequently, the permutation sign \cref{proof_overall_sign} flips and we have obtained a summand in $Q_E$ which has all the same factors as the original summand, but a flipped overall sign. Hence, the two summands cancel, the same transformation can be done with \emph{every} summand, and $Q_E=0$.
\end{proof}

\begin{figure}[htb]
	\centering
	\begin{tikzpicture}

		\coordinate(x0) at (0,0);
		\node at ($(x0)+(0,2)$) {$+(1,5,2,7) \oplus (3,4,6,8)$};
		\node[vertex, fill=white, label=right:1](v1) at ($(x0)+(0:1)$) {};
		\node[vertex, fill=white, label=right:2](v2) at ($(x0)+(45:1)$) {};
		\node[vertex, label=above:3](v3) at ($(x0)+(90:1)$) {};
		\node[vertex, label=left:4](v4) at ($(x0)+(135:1)$) {};
		\node[vertex, fill=white, label=left:5](v5) at ($(x0)+(180:1)$) {};
		\node[vertex,  label=left:6](v6) at ($(x0)+(225:1)$) {};
		\node[vertex,fill=white, label=below:7](v7) at ($(x0)+(270:1)$) {};
		\node[vertex, label=right:8](v8) at ($(x0)+(315:1)$) {};
		
		\draw[edge,dashed](v1)--(v3);
		\draw[edge,dashed](v5)--(v4);
		\draw[edge,dashed](v2)--(v6);
		\draw[edge,dashed](v7)--(v8);
		\draw[edge](v1)--(v7);
		\draw[edge](v2)--(v5);
		\draw[edge](v3)--(v4);
		\draw[edge](v6)--(v8);
		\node at ($(x0)+(0,-2)$) {$+\psi^{1,3}\psi^{4,5}\psi^{2,6} \psi^{7,8} \psi^{1,7}\psi^{2,5}\psi^{3,4} \psi^{6,8}$};
		\node at ($(x0)+(0,-3)$) {fix $\left \lbrace 1,2 \right \rbrace $ };
		\node at ($(x0)+(0,-3.5)$) { exchange $3 \leftrightarrow 7$ \quad $4 \leftrightarrow 8$,\quad $5 \leftrightarrow 6$  };
		
		\coordinate(x0) at (8,0);
		\node at ($(x0)+(0,2)$) {$+(1,5,2,7) \oplus (3,4,6,8)$};
		\node[vertex, fill=white, label=right:1](v1) at ($(x0)+(0:1)$) {};
		\node[vertex, fill=white, label=right:2](v2) at ($(x0)+(45:1)$) {};
		\node[vertex, label=above:3](v3) at ($(x0)+(90:1)$) {};
		\node[vertex, label=left:4](v4) at ($(x0)+(135:1)$) {};
		\node[vertex, fill=white, label=left:5](v5) at ($(x0)+(180:1)$) {};
		\node[vertex,  label=left:6](v6) at ($(x0)+(225:1)$) {};
		\node[vertex,fill=white, label=below:7](v7) at ($(x0)+(270:1)$) {};
		\node[vertex, label=right:8](v8) at ($(x0)+(315:1)$) {};
		
		\draw[edge,dashed](v1)--(v3);
		\draw[edge,dashed](v5)--(v4);
		\draw[edge,dashed](v2)--(v6);
		\draw[edge,dashed](v7)--(v8);
		\draw[edge](v1)--(v5);
		\draw[edge](v2)--(v7);
		\draw[edge](v3)--(v4);
		\draw[edge](v6)--(v8);
		\node at ($(x0)+(0,-2)$) {$+\psi^{1,3}\psi^{4,5}\psi^{2,6} \psi^{7,8} \psi^{1,5}\psi^{2,7}\psi^{3,4} \psi^{6,8}$};
		\node at ($(x0)+(0,-3)$) {fix $\left \lbrace 1,4 \right \rbrace $ };
		\node at ($(x0)+(0,-3.5)$) { exchange $3 \leftrightarrow 5$   };

		\coordinate(x0) at (0,-6.5);	
		\node at ($(x0)+(0,2)$) {$-(1,6,2,3) \oplus (7,8,5,4)$};
		\node[vertex, fill=white, label=right:1](v1) at ($(x0)+(0:1)$) {};
		\node[vertex, fill=white, label=right:2](v2) at ($(x0)+(45:1)$) {};
		\node[vertex,fill=white, label=above:3](v3) at ($(x0)+(90:1)$) {};
		\node[vertex, label=left:4](v4) at ($(x0)+(135:1)$) {};
		\node[vertex,  label=left:5](v5) at ($(x0)+(180:1)$) {};
		\node[vertex, fill=white, label=left:6](v6) at ($(x0)+(225:1)$) {};
		\node[vertex,  label=below:7](v7) at ($(x0)+(270:1)$) {};
		\node[vertex, label=right:8](v8) at ($(x0)+(315:1)$) {};
		
		\draw[edge,dashed](v1)--(v7);
		\draw[edge,dashed](v6)--(v8);
		\draw[edge,dashed](v2)--(v5);
		\draw[edge,dashed](v3)--(v4);
		\draw[edge](v1)--(v3);
		\draw[edge](v2)--(v6);
		\draw[edge](v7)--(v8);
		\draw[edge](v4)--(v5);
		\node at ($(x0)+(0,-2)$) {$-\psi^{1,7}\psi^{6,8}\psi^{2,5} \psi^{3,4} \psi^{1,3}\psi^{2,6}\psi^{7,8} \psi^{4,5}$};

		\coordinate(x0) at (8,-6.5);
		\node at ($(x0)+(0,2)$) {$-(1,3,2,7) \oplus (5,4,6,8)$};
		\node[vertex, fill=white, label=right:1](v1) at ($(x0)+(0:1)$) {};
		\node[vertex, fill=white, label=right:2](v2) at ($(x0)+(45:1)$) {};
		\node[vertex,  fill=white, label=above:3](v3) at ($(x0)+(90:1)$) {};
		\node[vertex, label=left:4](v4) at ($(x0)+(135:1)$) {};
		\node[vertex, label=left:5](v5) at ($(x0)+(180:1)$) {};
		\node[vertex,  label=left:6](v6) at ($(x0)+(225:1)$) {};
		\node[vertex,fill=white, label=below:7](v7) at ($(x0)+(270:1)$) {};
		\node[vertex, label=right:8](v8) at ($(x0)+(315:1)$) {};
		
		\draw[edge,dashed](v1)--(v5);
		\draw[edge,dashed](v3)--(v4);
		\draw[edge,dashed](v2)--(v6);
		\draw[edge,dashed](v7)--(v8);
		\draw[edge](v1)--(v3);
		\draw[edge](v2)--(v7);
		\draw[edge](v5)--(v4);
		\draw[edge](v6)--(v8);
		\node at ($(x0)+(0,-2)$) {$-\psi^{1,5}\psi^{3,4}\psi^{2,6} \psi^{7,8} \psi^{1,3}\psi^{2,7}\psi^{4,5} \psi^{6,8}$};

	\end{tikzpicture}
	\caption{Auxiliary graphs for the cancellation of a summand in $Q_E$ for a 4-loop graph. Both columns show the same permutation $S$ of \cref{QE_sum3} (hence the coloring of vertices and the location of dashed lines in the upper row is the same), but they differ in the matching chosen in the second factor, \cref{factor2_product}. On the left side, the edges form one cycle of length 8. The vertex opposite to 1 is 2, and we exchange three pairs to obtain a canceling term.  On the right side, the auxiliary graph consists of two disjoint cycles, we only manipulate the cycle containing vertex index 1.}
	\label{fig:visualization_4loop}
\end{figure}

\newpage

\appendix

\section{Graph matrices}\label{sec:graph_matrices}

Let $\Gamma$ be a graph where edges $e\in E_\Gamma$ are directed, $e= \left( v^-(e) \rightarrow  v^+(e) \right) $ .

\begin{definition}\label{def:incidence_matrix}
	The \emph{incidence matrix} of $\Gamma$ is a matrix $\bar {\incidencematrix}$ with $\abs{E_\Gamma}$ rows and $\abs{V_\Gamma}$ columns. Its entry $\bar I_{e,v}$ has the value $-1$ if the edge $e$ starts at vertex $v$, or value $+1$ if $e$ ends at $v$. The entry is zero if $e$ is not incident to $v$.
\end{definition}

The matrix $\bar {\incidencematrix}^T \bar {\incidencematrix}$ is a $\abs{V_\Gamma}\times \abs{V_\Gamma}$ matrix where both rows and columns are indexed the vertices of the graph, this is the \emph{unlabeled graph Laplacian}. 
We want to be able to identify particular edges. To this end, we introduce one parameter $a_e$ for every edge $e$. In physics, these are the Schwinger parameters. From these parameters, we define the \enquote{edge variable matrix} 
\begin{align}\label{def:edge_variable_matrix}
	\edgematrix := \operatorname{diag}(\vec a)=\begin{pmatrix}
		a_1 & \\
		& \ddots & \\
		&& a_{\abs{E_\Gamma}}
	\end{pmatrix}.
\end{align}
Powers of the matrix $\edgematrix$ are understood in terms of the matrix exponential, but they effectively just amount to the corresponding power of the diagonal entries. 
\begin{align*}
	\edgematrix^{-1} = \operatorname{diag}\left(  \vec {(a^{-1})} \right).
\end{align*}
One now obtains the \emph{labeled graph Laplacian} $\bar{\laplacian}:=\bar{\incidencematrix}^T \edgematrix^{-1} \bar{\incidencematrix}$. However, the matrix $\bar{\laplacian}$ is not invertible. One therefore defines a variant of the above matrices where one arbitrary vertex $v_\star$ is excluded. To fix notation, we will always choose to remove the last vertex $v_\star=v_{\abs{V_\Gamma}}$, but all results are independent of this choice.
\begin{definition}\label{def:reduced_incidence_matrix}
	The \emph{reduced   incidence matrix} $\incidencematrix$ of $\Gamma$ amounts to the incidence matrix $\bar{\incidencematrix}$ (\cref{def:incidence_matrix}), but with the last column, corresponding to $v_\star$, left out. 
\end{definition}

\begin{definition}\label{def:Laplacian}
 	The  (reduced, labeled) \emph{Laplacian} of a graph $\Gamma$ is the matrix
 	\begin{align*}
 	 \laplacian &:=  {\incidencematrix}^T \edgematrix^{-1}  {\incidencematrix} =\mathbb  L^T.
 	\end{align*}
\end{definition}
\Cref{def:Laplacian} is sometimes called \emph{dual Laplacian} because the edge $e$ is labelled with the inverse parameter, $\frac{1}{a_e}$. This is merely convention. Conversions between the dual and non-dual version of the definitions is always possible by trivial operations.

\begin{definition}\label{def:expanded_Laplacian}
	The \emph{expanded Laplacian} is a matrix consisting of the reduced incidence matrix $\incidencematrix$ (\cref{def:reduced_incidence_matrix}) and the edge matrix (\cref{def:edge_variable_matrix}).
	\begin{align*}
		\mathbb M &:= \begin{pmatrix}
			\edgematrix & \incidencematrix \\
			-\incidencematrix^T &0
		\end{pmatrix}.
	\end{align*}
\end{definition}
\noindent
One finds that the determinants of $\laplacian$ and $\mathbb M$ agree up to an overall factor.
\begin{definition}\label{def:Symanzik_polynomial}
	The  \emph{Symanzik polynomial} of the graph $\Gamma$ is
	\begin{align*}
	\psi_\Gamma:=\det \mathbb M = \det \laplacian \cdot \prod_{e\in E_\Gamma} a_e.
	\end{align*}
\end{definition}
The Symanzik polynomial is alternatively given by the sum of all spanning trees of $\Gamma$, such that each monomial in $\psi_\Gamma$ contains the edges not in the tree.

\section{Dodgson polynomials}\label{sec:Dodgson_polynomials}

Consider sets $A $ and $B$ of integers between 1 and $(\abs{E_\Gamma}+\abs{V_\Gamma}-1)$, not necessarily disjoint. These sets correspond to sets of rows and columns of $\mathbb M$ (\cref{def:expanded_Laplacian}), we define two types of minors: 
\begin{align}\label{def:minors}
	\mathbb M  ( A,B  ) &:= \mathbb M \text{ where rows $A$ and colums $B$ have been removed}\\
	\mathbb M [A,B ] &:= \mathbb M \text{ where only rows $A$ and colums $B$ are present}. \nonumber 
\end{align}
If $\abs{A}=\abs{B}$, then these matrices are square and hence it makes sense to compute their determinant.

\begin{definition}\label{def:Dodgson_polynomial}
	Let $\abs{A}=\abs{B}$. The \emph{Dodgson polynomial} of a graph $\Gamma$ is the determinant of the minor of $\mathbb M$ (\cref{def:expanded_Laplacian}) where rows $A$ and columns $B$ have been removed,
	\begin{align*}
		\psi^{A,B}_\Gamma &= \det \mathbb M  (A,B).
	\end{align*}
	
\end{definition}
We will often leave out the index $_\Gamma$ from graph polynomials in order to not clutter notation.
Classically, the definition of Dodgson polynomials assumes that $A$ and $B$ are subsets of edge indices, but the definition and many properties are analogous for vertices \cite{golz_dodgson_2019}. For our applications, it is sufficient to restrict to the case where $A$ and $B$ are either both sets of edges, or both sets of vertices. The Dodgson polynomial is symmetric $\psi^{A,B}=\psi^{B,A}$ as long as $A$ and $B$ are either both sets of edges or both sets of vertices, but $\psi^{e,v}=-\psi^{v,e}$ for exchanging a single edge with a single vertex. 

Minors,   determinants and Dodgson polynomials have numerous interesting properties and relations, see for example \cite{vein_determinants_1999,brown_periods_2010,brown_spanning_2011,golz_dodgson_2019}. We will here review the ones that are relevant for our application, and prove new ones we need in the main text. 

In the special case where $A$ and $B$ consist of only one element, we will leave out braces, $\psi^{i,j} := \psi^{\left \lbrace i \right \rbrace ,\left \lbrace j \right \rbrace    }_\Gamma$. In that case, the Dodgson polynomials (\cref{def:Dodgson_polynomial}) coincide with the cofactors of $\mathbb M$. We recall that the \emph{adjungate} of the matrix $\mathbb M$ is the matrix consisting of all cofactors, 
\begin{align}\label{def:adjoint}
\operatorname{adj}(\mathbb M) &= \begin{pmatrix}
	\psi^{1,1} & -\psi^{1,2} & \ldots \\
	-\psi^{2,1} & \psi^{2,2} & \ldots\\
	\vdots & \vdots & \ddots 
\end{pmatrix}
\end{align}
The adjungate matrix encodes, up to an overall factor $\det(\mathbb M)$, the inverse matrix $\mathbb M^{-1}$. Due to the special structure of $\mathbb M$, the bottom right block in $\mathbb M^{-1}$ coincides with $\mathbb L^{-1}$.  
\begin{lemma}\label{lem:inverse_Laplacian_Dodgson}
	The $(i,j)$-entry of the inverse matrix of the graph Laplacian (\cref{def:Laplacian}), where $i$ and $j$ are indices of vertices, is given by Dodgson (\cref{def:Dodgson_polynomial}) and the Symanzik polynomial (\cref{def:Symanzik_polynomial}) according to 
	\begin{align*}
	\left( \laplacian^{-1} \right) _{ij} &= (-1)^{i+j} \frac{\psi^{i,j}}{\psi} .
	\end{align*}
\end{lemma}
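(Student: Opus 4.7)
The plan is to combine the block structure of the expanded Laplacian $\mathbb M$ with the Schur complement formula, and then finish with the classical cofactor expression for the inverse of a matrix.

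First I would observe that $\edgematrix$ is diagonal and invertible (since $a_e\neq 0$), so the Schur complement formula applied to $\mathbb M$ (\cref{def:expanded_Laplacian}) gives
\[
\det \mathbb M = \det \edgematrix\cdot \det\bigl(0-(-\incidencematrix^T)\edgematrix^{-1}\incidencematrix\bigr) = \prod_e a_e\cdot \det \laplacian,
\]
which recovers \cref{def:Symanzik_polynomial} and serves as a warm-up. The same identity, applied to the minor $\mathbb M(i,j)$, is the key step: removing the vertex-row labeled $i$ deletes the $i$-th row of $-\incidencematrix^T$, equivalently the $i$-th column of $\incidencematrix$ in the transposed block; removing the vertex-column labeled $j$ deletes the $j$-th column of $\incidencematrix$ in the upper block. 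Denoting by $\incidencematrix_{(k)}$ the matrix $\incidencematrix$ with its $k$-th column removed, this gives
\[
\mathbb M(i,j) = \begin{pmatrix}\edgematrix & \incidencematrix_{(j)} \\ -(\incidencematrix_{(i)})^T & 0\end{pmatrix}.
\]

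Then I would apply the same Schur-complement identity to this matrix to obtain
\[
\psi^{i,j}=\det \mathbb M(i,j)=\det \edgematrix\cdot \det\bigl((\incidencematrix_{(i)})^T\edgematrix^{-1}\incidencematrix_{(j)}\bigr)=\prod_e a_e\cdot \det \laplacian(i,j),
\]
where $\laplacian(i,j)$ denotes the minor of $\laplacian$ obtained by deleting row $i$ and column $j$. The identification of the $(|V_\Gamma|-2)\times(|V_\Gamma|-2)$ Schur complement with this minor is the heart of the proof, so I would verify it entry by entry: its $(k,l)$ entry is $\sum_e a_e^{-1}\,\incidencematrix_{e,i'(k)}\incidencematrix_{e,j'(l)}$, where $i'(\cdot)$ and $j'(\cdot)$ skip the indices $i$ and $j$ respectively, which is exactly the $(k,l)$ entry of $\laplacian(i,j)$.

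Finally I would invoke the cofactor formula $(\laplacian^{-1})_{ij}=(-1)^{i+j}\det \laplacian(j,i)/\det \laplacian$, use the symmetry $\laplacian=\laplacian^T$ (immediate from $\edgematrix$ being diagonal) to replace $\det \laplacian(j,i)$ by $\det \laplacian(i,j)$, and substitute the two identities above to get
\[
(\laplacian^{-1})_{ij}=(-1)^{i+j}\,\frac{\det \laplacian(i,j)}{\det \laplacian}=(-1)^{i+j}\,\frac{\psi^{i,j}/\prod_e a_e}{\psi/\prod_e a_e}=(-1)^{i+j}\,\frac{\psi^{i,j}}{\psi}.
\]
There is no real obstacle here; the only thing to be careful about is the bookkeeping of which rows and columns of $\incidencematrix$ versus $\incidencematrix^T$ are removed when passing from $\mathbb M(i,j)$ to its Schur complement, and the fact that, because $\edgematrix$ is symmetric as well as invertible, no extra sign is picked up in any of these manipulations.
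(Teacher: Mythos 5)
Your proof is correct, and it is essentially the argument the paper relies on: the paper states this lemma without a written proof, justifying it by the adjugate formula for $\mathbb M$ together with the observation that the bottom-right block of $\mathbb M^{-1}$ is $\laplacian^{-1}$, which is exactly the block-structure/Schur-complement linear algebra you carry out explicitly. Your bookkeeping of $\det\mathbb M(i,j)=\prod_e a_e\,\det\laplacian(i,j)$, the cofactor formula for $\laplacian^{-1}$, and the symmetry of $\laplacian$ is accurate, so the proposal stands as a complete filled-in version of the paper's sketch.
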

Note that in \cref{lem:inverse_Laplacian_Dodgson} the indices $i,j$ of the Dodgson polynomial refer to vertices, not edges. To compute these polynomials from the matrix $\mathbb M$ in  \cref{def:expanded_Laplacian}, the indices need to be shifted by $\abs{E_\Gamma}$. 

\bigskip
Due to their definition as determinants of minors of $\mathbb M$ (\cref{def:Dodgson_polynomial}), the Dodgson polynomials satisfy various \emph{Dodgson identities} which typically arise from equating different ways to expand these determinants. In the main text, we need the following statement. 

\begin{lemma}\label{lem:dodgson_sign_determinant}
	Let $A,B\subset E_\Gamma$  with  $\abs{A}=\abs{B}=n$. Then
	\begin{align*}
		\psi^{A,B}  &= \frac{1}{\left( \psi_\Gamma  \right) ^{n-1}}\sum_{m } \sgn(m) \underbrace{\prod_{(p_1,p_2)\in m} \psi^{p_1,p_2}_\Gamma}_{n \textnormal{ factors}},
	\end{align*}
	where $m$ denotes one of the $n!$ possible matchings of the indices from $A$ with the indices of  $B$ in pairs $p$. The indices in $A,B$ are assumed in their unique increasing order. If the matching $m$ matches the $i$\textsuperscript{th} entry of $A$ to the $i$\textsuperscript{th} entry of $B$, then $\sgn(m)=1$, otherwise $\sgn(m)$ is the relative sign of the permutation of the matching.
\end{lemma}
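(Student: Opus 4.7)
The plan is to recognize the identity as an instance of Jacobi's classical theorem on minors of the adjugate matrix, applied to $\mathbb{M}$. The right-hand side of the claim is, up to a global sign, the determinant of an $n\times n$ submatrix of $\operatorname{adj}(\mathbb{M})$, while the left-hand side $\psi^{A,B}$ is a complementary minor of $\mathbb{M}$ itself; Jacobi's identity links the two.

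First I would rewrite the sum over matchings as a determinant. For $A=\{a_1<\cdots<a_n\}$ and $B=\{b_1<\cdots<b_n\}$, the sum $\sum_m\sgn(m)\prod_{(p_1,p_2)\in m}\psi^{p_1,p_2}$ is by definition $\det[\psi^{a_i,b_j}]_{1\leq i,j\leq n}$. Using the cofactor identity $\operatorname{adj}(\mathbb{M})_{ij}=(-1)^{i+j}\psi^{j,i}$, a short permutation argument shows that this determinant equals $(-1)^{s(A)+s(B)}\det\operatorname{adj}(\mathbb{M})[B,A]$, where $s(X):=\sum_{x\in X}x$ and the sign pulls out of the determinant because $\sum_i b_{\sigma(i)}=s(B)$ is permutation-invariant.

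Next I would invoke Jacobi's minor theorem: for any invertible square matrix $M$ with adjugate $C$ and any index sets $I,J$ with $|I|=|J|=k$, one has $\det C[I,J]=(-1)^{s(I)+s(J)}(\det M)^{k-1}\det M[J^c,I^c]$. Applied to $M=\mathbb{M}$ with $I=B$ and $J=A$, and using $\det\mathbb{M}=\psi$ together with $\det\mathbb{M}[A^c,B^c]=\det\mathbb{M}(A,B)=\psi^{A,B}$, the right-hand side becomes $(-1)^{s(A)+s(B)}\psi^{n-1}\psi^{A,B}$. Equating this with the expression from the previous paragraph for $\det\operatorname{adj}(\mathbb{M})[B,A]$, the signs $(-1)^{s(A)+s(B)}$ cancel and rearrangement yields exactly the claimed formula.

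The main obstacle is the bookkeeping of signs: the factors $(-1)^{i+j}$ inherent in the cofactor formula, the permutation sign in Jacobi's identity, and the matching-sign convention used on the right-hand side of the lemma must all be reconciled. Once this is done, the argument collapses to one invocation of a standard multilinear identity and uses no further graph-theoretic structure; in particular, the hypothesis that $A,B$ are edge sets only enters through the interpretation of $\psi^{A,B}$ via \cref{def:Dodgson_polynomial} and the associated symmetry $\psi^{i,j}=\psi^{j,i}$.
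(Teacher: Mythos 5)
Your proposal is correct and follows essentially the same route as the paper: both invoke Jacobi's determinant identity for minors of the adjugate of the expanded Laplacian $\mathbb M$ and then expand the resulting $n\times n$ determinant of entries $\psi^{a_i,b_j}$ as the signed sum over matchings. Your sign bookkeeping (the $(-1)^{s(A)+s(B)}$ factors from the cofactor convention cancelling against the sign in Jacobi's identity) is in fact spelled out more carefully than in the paper's own proof, which states the identity without the explicit sign and row/column swap.
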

\begin{proof}
	Use Jacobi's determinant formula \cite{jacobi_formatione_1841}. Let $(\operatorname{adj} \mathbb M)[A,B]$ be the adjungate (\cref{def:adjoint}) of the expanded graph Laplacian (\cref{def:expanded_Laplacian}) where only rows $A$ and columns $B$ are present. Let   $\mathbb M(A,B)$ be the expanded Laplacian where rows $A$ and columns $B$ have been deleted. Then 
	\begin{align}\label{proof_det}
		\det \Big( (\operatorname{adj}\mathbb M)[A,B] \Big) &= \Big( \det(\mathbb M) \Big) ^{\abs{A}-1} \det \big( \mathbb  M(A,B) \big) . 
	\end{align}
	By \cref{def:Symanzik_polynomial}, $\Big( \det(\mathbb M) \Big) ^{\abs{A}-1}=\psi^{n-1}$ is a power of the Symanzik polynomial. By \cref{def:Dodgson_polynomial}, $\det \big( \mathbb  M(A,B) \big)= \psi^{A,B}$ is the Dodgson polynomial. By \cref{def:adjoint}, the entries of the adjungate are the Dodgson polynomials $\psi^{i,j}$, where $i\in A$ and $j\in B$. The claimed formula follows if one applies Laplace expansion to the determinant on the left of \cref{proof_det}.
\end{proof}

Let $A\subset E_\Gamma$ be such that the minor  $\incidencematrix(A,\emptyset) $ (\cref{def:minors}) of the reduced incidence matrix (\cref{def:reduced_incidence_matrix}) is square, that is, $\abs{A}=\abs{E_\Gamma}-\abs{V_\Gamma}+1=L_\Gamma$. It is a standard fact of graph theory that the determinant of this minor is $\pm 1$ if the edges $E_\Gamma \setminus A=:T$ are  a spanning tree of $G$, and zero otherwise. Eqivalently, $\det \big( \incidencematrix[T]\big)$ is $\pm 1$ if and only if $T$ is a spanning tree, and zero else.
\begin{lemma}[\cite{brown_spanning_2011}]\label{lem:dodgson_polynomial_determinants}
	Let $A,B$ with $A \cap B=\emptyset$ be sets of edges. Let $\mathcal T$ be the set of subgraphs of $\Gamma \setminus(A \cup B)$ which have $\abs R=L_\Gamma-\abs{A}$ edges. Then the Dodgson polynomial (\cref{def:Dodgson_polynomial}) is an alternating sum over edges not in   spanning trees,
	\begin{align*}
		\psi^{A,B}_\Gamma &= \sum_{R \in \mathcal T}  \det \Big( \incidencematrix(R \cup A ,\emptyset)\Big) \cdot \det \Big( \incidencematrix(R \cup B ,\emptyset) \Big)\prod_{e \in R } a_e.
	\end{align*}
	Only those terms contribute where both  $R\cup A$ and $R\cup B$ are   spanning tree complements in $\Gamma$.
\end{lemma}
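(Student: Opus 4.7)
The plan is to compute $\psi^{A,B}_\Gamma = \det \mathbb M(A,B)$ by two nested Laplace expansions, exploiting the block structure of $\mathbb M$ (\cref{def:expanded_Laplacian}) together with the fact that $\edgematrix$ is diagonal. Write $U := E_\Gamma \setminus A$ and $W := E_\Gamma \setminus B$. After deleting rows indexed by $A$ and columns indexed by $B$ (both from the edge block), the matrix $\mathbb M(A,B)$ has the block form
\begin{align*}
\mathbb M(A,B) \;=\; \begin{pmatrix} \edgematrix_{U,W} & \incidencematrix_U \\ -\incidencematrix^T_W & 0 \end{pmatrix},
\end{align*}
where $\edgematrix_{U,W}$ is the diagonal matrix restricted to rows $U$ and columns $W$, and $\incidencematrix_U$ is the reduced incidence matrix restricted to rows $U$.

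First I would expand $\det \mathbb M(A,B)$ along the last $|V_\Gamma|-1$ columns (the vertex columns). Because the bottom-right block is zero, a non-zero contribution requires choosing all $|V_\Gamma|-1$ rows from the edge block, i.e.\ a set $T_2 \subseteq U$ with $|T_2|=|V_\Gamma|-1$. The associated minor is $\det \incidencematrix[T_2]$, which is $\pm 1$ when $T_2$ is a spanning tree and $0$ otherwise. The complementary minor then uses rows $(U \setminus T_2) \cup V^\ast$ and columns $W$. Its top block is a rectangular slice of $\edgematrix$, of size $(|U|-|V^\ast|)\times |W| = (L_\Gamma-|A|)\times(|E_\Gamma|-|A|)$, with $a_e$ on the "diagonal" positions $(e,e)$ and zeros elsewhere.

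Second I would expand this complementary minor along its top $L_\Gamma-|A|$ rows. Because the top block is diagonal-supported, the only non-vanishing column choice is the one matching $U \setminus T_2$ identically, which requires $U\setminus T_2 \subseteq W$, i.e.\ $B \subseteq T_2$ (using $A\cap B=\emptyset$). Setting $R := U\setminus T_2 = E_\Gamma\setminus(A\cup T_2)$, the top sub-minor contributes $\prod_{e\in R} a_e$, and the surviving bottom minor uses rows $V^\ast$ and columns $W\setminus R = E_\Gamma\setminus(R\cup B) =: T_1$, giving $\pm\det \incidencematrix[T_1]$. Reparametrising the outer sum by $R$ instead of $T_2$, the condition $B\subseteq T_2\subseteq U$, $|T_2|=|V_\Gamma|-1$ becomes $R \subseteq E_\Gamma\setminus(A\cup B)$, $|R|=L_\Gamma-|A|$, which is exactly the set $\mathcal T$ of the statement. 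Noting $\det \incidencematrix[T_i] = \det \incidencematrix(R\cup A_i,\emptyset)$ for $A_1=A$, $A_2=B$ yields the claimed formula up to a global sign $\varepsilon(R)$ coming from the two Laplace expansions; the final remark in the lemma (that only spanning-tree-complement pairs contribute) is then automatic, since $\det \incidencematrix[T_i]=0$ unless $T_i$ is a spanning tree.

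The main obstacle is the sign bookkeeping $\varepsilon(R)$, which I would handle as follows. I would combine the two successive Laplace expansions into a single generalised Laplace expansion of $\det \mathbb M(A,B)$ that simultaneously chooses $|V_\Gamma|-1$ edge rows for the vertex columns and $L_\Gamma-|A|$ edge columns (in $W$) for the remaining edge rows. The resulting shuffle sign is the product of the two row-shuffle signs and the two column-shuffle signs, and one checks that under the natural orderings of $U$, $W$, $V^\ast$ these shuffle signs agree with the signs by which $\det \incidencematrix[T_i]$ differ from their "standard" presentations. To verify this, I would check the base case $A=B=\emptyset$, where the identity reduces to the matrix-tree formula $\psi = \sum_T \prod_{e\notin T} a_e$, and then check compatibility under moving one edge at a time into $A$ or into $B$; since moving an edge changes both sides by the same incidence-matrix row or column operation, a single induction step fixes $\varepsilon(R)=1$.
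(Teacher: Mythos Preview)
The paper does not give its own proof of this lemma; it is stated with a citation to \cite{brown_spanning_2011} and left unproved. So there is no in-paper argument to compare against, and your task is really to supply the proof that the cited reference contains.

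Your double Laplace expansion is the standard route and is correct in substance. The block analysis is right: expanding $\det\mathbb M(A,B)$ along the vertex columns forces the $|V_\Gamma|-1$ chosen rows to lie in the edge block (call them $T_2\subseteq U$), since the bottom-right block is zero; in the complementary minor the surviving edge rows $R=U\setminus T_2$ see only the diagonal of $\edgematrix$, which forces the matching column choice to equal $R$ (hence $R\subseteq W$, i.e.\ $B\subseteq T_2$) and contributes $\prod_{e\in R}a_e$; what remains is $\det\big({-\incidencematrix^T}[T_1]\big)$ with $T_1=W\setminus R$. Re-indexing the outer sum by $R$ gives exactly the set $\mathcal T$, and the identifications $T_2=E_\Gamma\setminus(R\cup A)$, $T_1=E_\Gamma\setminus(R\cup B)$ recover the two incidence-matrix determinants in the stated form.

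The only soft spot is the sign. Your plan to pin down $\varepsilon(R)$ by checking $A=B=\emptyset$ and then inducting on $|A|,|B|$ is plausible, but as written it skips the key point: you must first argue that the accumulated Laplace sign is \emph{independent of $R$}, not merely of $A$ and $B$, since otherwise different monomials in the sum could carry different signs and no single base case would fix them all. This does hold---the $R$-dependent contributions to the two shuffle signs cancel because the rows and columns indexed by $R$ sit in the same relative positions within the edge ordering of $U$ and of $W$ (here the hypothesis $A\cap B=\emptyset$ is used, so that $R\subseteq U\cap W$ and the position of each $e\in R$ in $U$ and in $W$ differ by a quantity depending only on $A,B$). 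Once $R$-independence is established, the matrix-tree theorem at $A=B=\emptyset$ (where all monomials of $\psi_\Gamma$ appear with coefficient $+1$) fixes the global sign to $+1$, and the induction step becomes redundant.
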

The signs in  the sum in \cref{lem:dodgson_polynomial_determinants} change when a different labeling or direction of the edges or vertices is chosen for the same graph. There have been efforts to introduce polynomials which do not have this ambiguity, such as   the \emph{spanning forest polynomials} \cite{brown_spanning_2011} or the \emph{Dodgson cycle polynomials}   \cite{golz_new_2017,golz_dodgson_2019}. For our application,   Dodgson polynomials are more suitable because of their well-defined relation to the inverse of the Laplacian in \cref{lem:inverse_Laplacian_Dodgson}.

The total number of edges minus the number of edges in a spanning tree is the loop number $L_\Gamma$ (=first Betti number, \cref{loop_number}) of a graph. 
In the special case where $\abs{A}=\abs{B}= L_\Gamma$, the set $R$ in \cref{lem:dodgson_polynomial_determinants} must have size zero. Hence, the sum has only a single  summand, $R=\emptyset$, and
\begin{align}\label{Dodgson_product_trees}
	\psi^{A,B}  &= \det \Big( \incidencematrix(A,\emptyset)\Big) \cdot \det \Big( \incidencematrix (B,\emptyset)\Big).
\end{align}
If both $A$ and $B$ are complements of spanning trees, this quantity is $+1$ or $-1$, otherwise it is zero.

\bigskip 
One crucial step for our analysis is relating the vertex-type Dodgson polynomials of \cref{lem:inverse_Laplacian_Dodgson} to edge-type Dodgson polynomials.

\begin{lemma}\label{lem:dodgson_vertex_edge}
	Let $e=v_1 \rightarrow v_2$ be an edge in $G$, and let $v\in V_\Gamma$ be not the removed vertex $v_\star$, but $v$ is allowed to coincide with either $v_1$ or $v_2$.
	Then
	\begin{align*}
		- \left( -1 \right) ^{v_1}\psi^{  v_1, v } + \left( -1 \right) ^{v_2}\psi^{ v_2 , v   }   &= \left( -1 \right) ^{e+\abs{E}} a_e \psi^{e,v  } .
	\end{align*}
	If $v_1$ or $v_2$ coincide with $v_\star$, the equation stays true if one leaves out the summand involving $v_\star$.
\end{lemma}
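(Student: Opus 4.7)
The plan is to derive this identity as a single-entry consequence of the classical adjugate relation $\mathbb{M} \cdot \operatorname{adj}(\mathbb{M}) = \det(\mathbb{M}) \cdot I$ applied to the expanded Laplacian of \cref{def:expanded_Laplacian}. With the convention $\psi^{A,B} = \det \mathbb{M}(A,B)$ of \cref{def:Dodgson_polynomial}, the adjugate has entries $\operatorname{adj}(\mathbb{M})_{ij} = (-1)^{i+j}\,\psi^{j,i}$, where the indices refer to positions inside $\mathbb{M}$ (so a vertex labeled $v$ occupies position $\abs{E}+v$, as noted after \cref{lem:inverse_Laplacian_Dodgson}).

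The core step is to pick the entry of $\mathbb{M} \cdot \operatorname{adj}(\mathbb{M})$ at row $e$ and column $\abs{E}+v$. Since $e \neq \abs{E}+v$, this entry must vanish. Now row $e$ of $\mathbb{M}$ is extremely sparse: besides the diagonal value $a_e$ in the edge column, its only other nonzero entries are $-1$ at column $\abs{E}+v_1$ and $+1$ at column $\abs{E}+v_2$, read off from $\bar{\incidencematrix}$. Hence the matrix-product expansion of the chosen entry collapses to at most three cofactor terms, one proportional to $a_e\,\psi^{v,e}$ and two proportional to $\psi^{v,v_1}$ and $\psi^{v,v_2}$. After factoring out the common sign $(-1)^{\abs{E}+v}$ and invoking the two elementary symmetries already recorded in the excerpt, namely $\psi^{v,v_i} = \psi^{v_i,v}$ for same-type indices and $\psi^{v,e} = -\psi^{e,v}$ for mixed edge--vertex indices, the vanishing of the entry rearranges directly into the claimed identity.

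The boundary cases in which $v_1$ or $v_2$ coincides with $v_\star$ require no separate argument: the column corresponding to $v_\star$ is simply absent from $\mathbb{M}$, so the corresponding term is not present in the expansion at all, which matches the prescription in the lemma to omit that summand. The only real obstacle to carrying this out cleanly is disciplined sign bookkeeping, specifically the shift by $\abs{E}$ between a vertex label and its position in $\mathbb{M}$, and the single sign flip $\psi^{e,v} = -\psi^{v,e}$ when swapping a vertex index with an edge index. Once these conventions are fixed, the identity is a completely routine rearrangement with no further combinatorial input beyond the definition of the adjugate.
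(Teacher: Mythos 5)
Your proposal is correct, and it takes a genuinely different route from the paper. The paper proves \cref{lem:dodgson_vertex_edge} by Laplace-expanding each of the three minors $\mathbb M(e,v)$, $\mathbb M(v_1,v)$, $\mathbb M(v_2,v)$ along the sparse column $e$, which produces second-order minors such as $\psi^{\left\lbrace e,v_1\right\rbrace,\left\lbrace e,v\right\rbrace}$ and $\psi^{\left\lbrace v_1,v_2\right\rbrace,\left\lbrace e,v\right\rbrace}$, and then verifies that the $\psi^{\left\lbrace v_1,v_2\right\rbrace,\left\lbrace e,v\right\rbrace}$ contributions cancel in the stated combination. You instead read off the $(e,\abs{E_\Gamma}+v)$ entry of $\mathbb M\cdot\operatorname{adj}(\mathbb M)=\det(\mathbb M)\,I$, i.e.\ the ``expansion by alien cofactors'' along the sparse row $e$, which collapses to exactly three first-order Dodgson polynomials and yields the identity in one stroke, with no second-order minors and no cancellation to track; the degenerate cases $v_1=v_\star$ or $v_2=v_\star$ are handled uniformly because the corresponding column of the reduced matrix is simply absent. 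I checked the sign bookkeeping: with row $e$ of $\mathbb M$ having entries $a_e$, $-1$, $+1$ at columns $e$, $\abs{E_\Gamma}+v_1$, $\abs{E_\Gamma}+v_2$, the vanishing entry gives $a_e(-1)^{e+\abs{E_\Gamma}+v}\psi^{v,e}-(-1)^{v_1+v}\psi^{v,v_1}+(-1)^{v_2+v}\psi^{v,v_2}=0$, and the symmetries $\psi^{v,v_i}=\psi^{v_i,v}$, $\psi^{v,e}=-\psi^{e,v}$ turn this into precisely the lemma, including the factor $(-1)^{e+\abs{E_\Gamma}}$. One caution: since $\mathbb M$ is not symmetric, the transpose in the adjugate matters exactly in the mixed edge--vertex entries you use; the paper's display \cref{def:adjoint} is loose on this point (it is harmless there because only the symmetric vertex block is used), so it is important that you fixed the standard convention $\operatorname{adj}(\mathbb M)_{ij}=(-1)^{i+j}\psi^{j,i}$ explicitly, as you did --- with that in place the argument is complete and arguably cleaner than the paper's.
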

\begin{proof}
	Note that an edge $e=v_1 \rightarrow v_2$ implies that the incidence matrix (\cref{def:reduced_incidence_matrix}) has entries $\incidencematrix_{e, v_1}=-1$ and $\incidencematrix_{e,v_2}=+1$. We assume for now that $v_1<v_2$, hence  row $v_1$ comes above $v_2$. For our proof, it will be crucial to understand the column $e$ of the expanded   Laplacian $\mathbb M$ (\cref{def:expanded_Laplacian}), it has the form 
	\begin{align*}
\mathbb M = \begin{pmatrix}
	\edgematrix  & \incidencematrix\\
	-\incidencematrix^T &  0 
\end{pmatrix}=	\quad  \begin{matrix}
\vspace{.2cm} \\
~(\text{row }e )\\
\vspace{.1cm} \\
(\text{row }v_1) \\
\\
(\text{row } v_2)
\end{matrix}\begin{pmatrix}
		\ddots &   &   & & \quad  \incidencematrix\quad \\
		 & a_e &   &  & \vdots   \\
		&   &  \ddots & & \vdots \\
		\ldots & -(-1) & \ldots  &   &  \\
		\ldots &   & \ldots &  & \\
		\ldots & -(+1) & \ldots  & &
	\end{pmatrix}
	\end{align*}
	All empty entries in this matrix are zero. 
	
	Use \cref{def:Dodgson_polynomial} to compute the Dodgson polynomials. In   $\mathbb M(e,v)$,  row $e$ has been removed. Consequently, the diagonal in the upper left block is missing the entry $a_e$ and column $e$ has only two non-zero entries, namely $1$ at row $v_1$ and $-1$ at row $v_2$.  We compute $\psi^{e,v}$ by expanding the determinant with respect to column $e$:
	\begin{align*}
		\det \big( \mathbb M(e,v) \big)   &= 1 (-1)^{\abs{E_\Gamma} -1 + v_1 +e }\det \big(\mathbb M (\left \lbrace e,v_1 \right \rbrace  ,\left \lbrace e,v \right \rbrace    )\big)\\
		&\qquad 
		+(-1)(-1)^{\abs{E_\Gamma}-1  + v_2 +e }   \det \big(\mathbb M (\left \lbrace e,v_2 \right \rbrace  ,\left \lbrace e,v \right \rbrace    )\big)\\
		(-1)^{e+\abs {E_\Gamma} } a_e  \psi^{  e, v   }  &=  -(-1)^{  v_1 } a_e \psi^{\left \lbrace e,v_1 \right \rbrace  ,\left \lbrace e,v \right \rbrace    } 
		+(-1)^{ v_2 }  a_e \psi^{\left \lbrace e,v_2 \right \rbrace  ,\left \lbrace e,v \right \rbrace }. 
	\end{align*}
	
	Now consider the left hand side of the equation in \cref{lem:dodgson_vertex_edge}. We compute the determinant   $\det \left( \mathbb M(v_1,v) \right)  $ by expanding with respect to  column $e$. Having removed row  $v_1$,  column $e$ has only two non-vanishing entries left. We have $v_1<v_2$, so row $v_2$ lies below $v_1$ and the index of $v_2$ gets shifted by one if $v_1$ is removed. 
	\begin{align*}
		\det \mathbb M(v_1,v)  &= a_e (-1)^{e+e} \det\big( \mathbb M (\left \lbrace e,v_1 \right \rbrace , \left \lbrace e,v \right \rbrace  )\big)\\
		&\qquad 	+(-1)(-1)^{\abs{E}-1 +v_2+e}  \det \big(\mathbb M (\left \lbrace v_1,v_2 \right \rbrace  ,\left \lbrace e,v \right \rbrace )\big) \\
		(-1)^{v_1}\psi^{ v_1 , v  }   &= (-1)^{v_1} a_e \psi^{\left \lbrace e,v_1 \right \rbrace , \left \lbrace e,v \right \rbrace    } 
		+ (-1)^{\abs{E} +v_2+v_1+e}   \psi^{ \left \lbrace v_1,v_2 \right \rbrace  ,\left \lbrace e,v \right \rbrace } .
	\end{align*}
	The remaining term is expanded analogously, but because $v_1<v_2$, removing $v_2$ does not alter the row index of $v_1$. However, the entry in row $v_1$ is $1$ instead of $(-1)$. 
	\begin{align*}
		\det \big(\mathbb M(v_2,v)\big) &=  a_e (-1)^{e+e}\det \big(\mathbb M(\left \lbrace e,v_2 \right \rbrace , \left \lbrace e,v \right \rbrace    )\big) \\
		&\qquad 		+ 1(-1)^{\abs{E_\Gamma}  +v_1+e} \det \big(\mathbb M\left( \left \lbrace v_1,v_2 \right \rbrace  ,\left \lbrace e,v \right \rbrace      \right) \big) \\
		(-1)^{v_2}\psi^{\left \lbrace v_2 \right \rbrace ,\left \lbrace v \right \rbrace    }  &= (-1)^{v_2} a_e \psi^{\left \lbrace e,v_2 \right \rbrace , \left \lbrace e,v \right \rbrace   } 
		+ (-1)^{\abs{E}  +v_1+v_2+e} \psi^{ \left \lbrace v_1,v_2 \right \rbrace  ,\left \lbrace e,v \right \rbrace     } 
	\end{align*}
	Adding the two terms on the left hand side of \cref{lem:dodgson_vertex_edge}, we find
	\begin{align*}
		& - \left( -1 \right) ^{v_1}\psi^{ v_1 , v   } 
		+ \left( -1 \right) ^{v_2}\psi^{  v_2 , v   }   \\ 
		&= -(-1)^{v_1} a_e \psi^{\left \lbrace e,v_1 \right \rbrace , \left \lbrace e,v \right \rbrace    } 
		+ (-1)^{\abs{E_\Gamma} +v_2+v_1+e}   \psi^{ \left \lbrace v_1,v_2 \right \rbrace  ,\left \lbrace e,v \right \rbrace } \\
		&\qquad  +(-1)^{v_2} a_e \psi^{\left \lbrace e,v_2 \right \rbrace , \left \lbrace e,v \right \rbrace   } 
		- (-1)^{\abs{E_\Gamma}  +v_1+v_2+e} \psi^{ \left \lbrace v_1,v_2 \right \rbrace  ,\left \lbrace e,v \right \rbrace     }_\Gamma  \\
		&= -(-1)^{v_1} a_e \psi^{\left \lbrace e,v_1 \right \rbrace , \left \lbrace e,v \right \rbrace    }  +(-1)^{v_2} a_e \psi^{\left \lbrace e,v_2 \right \rbrace , \left \lbrace e,v \right \rbrace   } =	(-1)^{e+\abs {E_\Gamma} } a_e  \psi^{  e, v  } .
	\end{align*}
If we had assumed $v_2<v_1$, the signs of two terms would have changed, but the end result would have been the same. 
\end{proof}

\Cref{lem:dodgson_vertex_edge} involves an arbitrary vertex $v$, but  the same statement holds also for an arbitrary edge $e_2\neq e$ in place of $v$. 
We will need \cref{lem:dodgson_vertex_edge} with one particular combination of indices, namely for a product
\begin{align*}
 \left( v_2-v_1 \right) \left(  v_4-v_3 \right) = v_2 v_4 - v_1 v_4 + v_1 v_3 - v_2 v_3.
\end{align*}

\begin{lemma}\label{lem:dodgson_edge_edge}
	Let $e_1 = v_1 \rightarrow v_2$ and $e_2 = v_3 \rightarrow v_4$ be distinct edges. Then 
	\begin{align*}
		&(-1)^{v_2 + v_4} \psi^{  v_2, v_4     } - (-1)^{v_1+v_4} \psi^{ v_1 , v_4   } + (-1)^{v_1 + v_3} \psi^{ v_1 , v_3 } - (-1)^{v_2+v_3} \psi^{ v_2, v_3  }\\
		&=(-1)^{e_1+e_2+1 } a_{e_1} a_{e_2} \cdot  \psi^{  e_1, e_2   } .
	\end{align*}
\end{lemma}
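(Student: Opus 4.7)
The plan is to apply Lemma~\ref{lem:dodgson_vertex_edge} twice in succession. The first application collapses the four vertex-indexed Dodgson polynomials on the LHS into two terms that are mixed edge-vertex Dodgson polynomials (with $e_1$ on one side); the second application then collapses those two terms into the single edge-edge Dodgson polynomial $\psi^{e_1,e_2}$ on the RHS.

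Concretely, I would first group the four summands of the LHS by their second index. Rewriting
\[ (-1)^{v_2+v_4}\psi^{v_2,v_4} - (-1)^{v_1+v_4}\psi^{v_1,v_4} = (-1)^{v_4}\bigl[-(-1)^{v_1}\psi^{v_1,v_4} + (-1)^{v_2}\psi^{v_2,v_4}\bigr], \]
the bracket is exactly the LHS of Lemma~\ref{lem:dodgson_vertex_edge} applied to $e_1=v_1\to v_2$ with third index $v_4$, hence equals $(-1)^{e_1+|E_\Gamma|}a_{e_1}\psi^{e_1,v_4}$. The same maneuver on the two $v_3$-terms, paying attention to their overall sign in the LHS, yields $-(-1)^{v_3}(-1)^{e_1+|E_\Gamma|}a_{e_1}\psi^{e_1,v_3}$. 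Combining,
\[ \text{LHS} = (-1)^{e_1+|E_\Gamma|}a_{e_1}\Bigl[(-1)^{v_4}\psi^{e_1,v_4} - (-1)^{v_3}\psi^{e_1,v_3}\Bigr]. \]

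Next I would use the antisymmetry $\psi^{e,v}=-\psi^{v,e}$ to flip each mixed Dodgson polynomial, pulling an overall minus sign out of the bracket to obtain $-\bigl[(-1)^{v_4}\psi^{v_4,e_1}-(-1)^{v_3}\psi^{v_3,e_1}\bigr]$. This inner bracket is in the exact form handled by the edge-valued extension of Lemma~\ref{lem:dodgson_vertex_edge} (noted by the authors right after its proof), applied now to $e_2=v_3\to v_4$ with third index $e_1$, so it equals $(-1)^{e_2+|E_\Gamma|}a_{e_2}\psi^{e_2,e_1}$. Multiplying everything together, using $\psi^{e_2,e_1}=\psi^{e_1,e_2}$ (both edge indices) and $(-1)^{2|E_\Gamma|}=1$, yields $(-1)^{e_1+e_2+1}a_{e_1}a_{e_2}\psi^{e_1,e_2}$, as required.

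The only real difficulty is sign bookkeeping: three independent sources of signs must be tracked simultaneously (the $(-1)^{v_i}$ factors entering each invocation of Lemma~\ref{lem:dodgson_vertex_edge}, the alternating signs already present in the LHS, and the mixed-index antisymmetry $\psi^{e,v}=-\psi^{v,e}$). A minor subtlety is that $e_1$ and $e_2$ may share an endpoint, and some endpoint may coincide with the excluded vertex $v_\star$; these degenerate cases are precisely the ones addressed by the final sentence of Lemma~\ref{lem:dodgson_vertex_edge}, so dropping any term whose index is $v_\star$ lets the same argument go through unchanged.
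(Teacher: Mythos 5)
Your proposal is correct and follows essentially the same route as the paper's proof: two successive applications of Lemma~\ref{lem:dodgson_vertex_edge} (first with edge $e_1$ against the vertex indices $v_3,v_4$, then, after using the mixed-index antisymmetry $\psi^{e,v}=-\psi^{v,e}$, the edge-valued extension applied to $e_2$ with $e_1$ in place of the vertex), and your sign bookkeeping matches the paper's computation term by term.
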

\begin{proof}
Apply \cref{lem:dodgson_vertex_edge} :
	\begin{align*}
		&(-1)^{v_2 + v_4} \psi^{  v_2 , v_4  } - (-1)^{v_1+v_4} \psi^{  v_1 ,v_4  } + (-1)^{v_1 + v_3} \psi^{  v_1, v_3  } - (-1)^{v_2+v_3} \psi^{  v_2,v_3  }\\
		&= (-1)^{e_1+\abs{E_\Gamma}+v_4} a_{e_1} \psi^{  e_1, v_4    } - (-1)^{e_1 + \abs{E_\Gamma}+v_3} a_{e_1} \psi^{  e_1 , v_3   }\\
		&=-(-1)^{e_1+\abs{E_\Gamma} } a_{e_1} \big(- (-1)^{ v_3}   \psi^{  v_3 , e_1 }+ (-1)^{v_4}   \psi^{  v_4 , e_1  }  \big)  .
	\end{align*}
	We have used that the Dodgson polynomial with one edge index and one vertex index flips sign when the indices are exchanged. 
	The two vertices $v_3 \rightarrow v_4$ are the edge $e_2$, hence we use \cref{lem:dodgson_vertex_edge} again, but this time with the edge $e_1$ in place of the vertex $v$. 
	\begin{align*}
		&(-1)^{v_2 + v_4} \psi^{ v_2,v_4  } - (-1)^{v_1+v_4} \psi^{ v_1 , v_4   } + (-1)^{v_1 + v_3} \psi^{ v_1, v_3  } - (-1)^{v_2+v_3} \psi^{ v_2 , v_3  }\\
		&=-(-1)^{e_1+\abs{E_\Gamma} } a_{e_1} (-1)^{e_2+\abs{E_\Gamma}}  a_{e_2} \psi^{ e_1,e_2 }   =(-1)^{e_1+e_2+1 } a_{e_1} a_{e_2} \psi^{  e_1, e_2  } .
	\end{align*}
\end{proof}

\section{Scalar Feynman rules in parametric space} \label{sec:parametric_feynman_integral}

Many of the constructions in our proof are familiar from the derivation of parametric Feynman rules. This form of the Feynman rules have been known at least since the 1960s, more details and derivations can be found in  \cite{nakanishi_general_1957,bogoliubow_ueber_1957,panzer_feynman_2015,balduf_dyson_2024}. In the present appendix, we briefly review the steps to derive parametric Feynman rules for scalar fields.

We start from a momentum-space Feynman integral in $D$ spacetime dimensions, where we assume that the vertex Feynman rules are constant factors (such as $-i\lambda$), which we leave out.  
\begin{align}\label{feynman_integral_momentum}
	\feyk{\Gamma}&=  \left( \prod_{l\in L_\Gamma} \int \frac{\d^D \; k_l}{(2\pi)^D} \right) \left( \prod_{e \in E_\Gamma}  \frac{1}{(\vv k_e^2-m_e^2)^{\nu_e}} \right) . 
\end{align}
The parentheses in this expression are unnecessary, but kept for clarity. The loop momenta $  k_l$ do not necessarily coincide with edge momenta $k_e$, but they are linear functions of each other. Explicitly, they are related by momentum conservation at each vertex. Let $e\sim v$ denote the edges incident to a vertex $v$, and let $q_v$ be an external momentum at that vertex (which is zero when the vertex is internal), then  
\begin{align}\label{momenta_incidence_matrix}
0 &=\sum_{e \sim v} \vv k_e + q_v = \sum_{e \in E_\Gamma} \incidencematrix_{e,v}   k_e + \vv q_v.
\end{align}
Here, $\incidencematrix$ is the  incidence matrix (\cref{def:incidence_matrix}). We can write \cref{momenta_incidence_matrix} as a matrix-vector-product by introducint the $\abs{V_\Gamma}$-elements vector $\vec q$ of external momenta, and the $\abs{E_\Gamma}$-element vector $\vec k$ of edge momenta. The condition \cref{momenta_incidence_matrix} amounts to a product of $\abs{V_\Gamma}$ delta functions, which we write as Fourier integrals over auxiliary $D$-vectors $x_v$:
\begin{align*}
	\left( \prod_{v\in V_\Gamma} (2 \pi)^D \delta\left(  \incidencematrix^T \vec k + \vec q\right)_v  \right)  &= \left( \prod_{v\in V_\Gamma}\int \limits_{-\infty}^\infty \d^D x_v  \right) e^{i  \vec k^T \incidencematrix  \vec x+  i \vec q^T \vec x}.
\end{align*}
With this delta function, the momentum-integrals in the Feynman integral of \cref{feynman_integral_momentum} can be taken about all edge momenta, not just the independent loop momenta. We obtain a \enquote{hybrid} form between momentum-space and position-space (indeed, we could as well have started from the position-space formula, but starting from momentum space, it is more natural to consider external \emph{momenta} and not fixed external \emph{positions} of vertices).
\begin{align}\label{feynman_integral_hybrid}
	\feyk{\Gamma}&=  \left( \prod_{e\in E_\Gamma} \int \frac{\d^D \; k_e}{(2\pi)^D} \right) \left( \prod_{v\in V_\Gamma}\int \limits_{-\infty}^\infty \d^D \vv x_v  \right) \left( \prod_{e \in E_\Gamma}  \frac{1}{(\vv k_e^2-m_e^2)^{\nu_e}} \right) e^{i  \vec k^T \tilde \incidencematrix  \vec x+  i \vec q^T \vec x}  . 
\end{align}
One of the delta functions expresses overall momentum conservation, by convention, this delta function is not included in $\feyk{\Gamma}$. Therefore, we pick one vertex $v_\star$ and integrate only over the remaining vertices. From now on, we assume that the vectors $\vec x, \vec q$ only contain $(\abs{V_\Gamma}-1)$ remaining elements, and $\incidencematrix$ is the reduced incidence matrix (\cref{def:reduced_incidence_matrix}).

Now, we transform each of the propagators to Schwinger parametric space by using 
\begin{align*}
	\frac{1}{\left( \vv k_e^2 -m_e^2 \right) ^{\nu_e} } &=\int \limits_0^\infty  \frac{\d a_e \, a_e^{\nu_e-1}}{\Gamma(\nu_e)}  e^{-a_e (\vv k_e^2-m_e^2)}.
\end{align*}
Here, $\Gamma(\nu_e)$ is the Euler Gamma function. 
The product of these expressions give rise to an exponent which can be written as a matrix-vector product between the vector $\vec k$ of edge momenta, the vector $\vec m$ of edge masses, and the diagonal matrix $\edgematrix$ (\cref{def:edge_variable_matrix}) of edge variables $a_e$.
The Feynman rules are then 
\begin{align*}
	\feyk{\Gamma} 
	&=\left( \prod_{e \in E_\Gamma} \int \limits_0^\infty \frac{\d a_e \, a_e^{\nu_e-1}}{\Gamma(\nu_e)} \right)  \left( \prod_{e\in E_\Gamma} \int \frac{\d^D \vv k_e}{(2\pi)^D} \right)\left( \prod_{v\in (V_\Gamma\setminus v_0)} \int \d^D \vv x_v \right)  \\
	&\qquad \exp \left( 	-\vec k^T \edgematrix \vec k + \vec m^T \edgematrix \vec m + i \vec k^T   \incidencematrix \vec x + i \vec x^T \vec q.   \right).
\end{align*}
We want to integrate over both $\vec x$ and $\vec k$, we start with $\vec k$. This is a Gaussian integral where the variable $\vec k$ needs to be shifted by $-\frac i 2 \edgematrix^{-1} \incidencematrix \vec x$ in order to complete the square. Solving the integral, one obtains a power of $\det \edgematrix=\prod_{e\in E_\Gamma} a_e$ in the denominator. We ignore powers of $2\pi$, they can be absorbed by appropriate conventions. The result is 
\begin{align*}
	\feyk{\Gamma}
	&=  \left( \prod_{e \in E_\Gamma} \int \limits_0^\infty \frac{\d a_e \, a_e^{\nu_e-1}}{\Gamma(\nu_e)} \right) \left( \prod_{v\in (V_\Gamma\setminus v_0)} \int \d^D \vv x_v \right) \frac{1}{  \left( \det \edgematrix \right) ^{\frac D 2}} \\
	&\qquad e^{ 	-\frac 1 4  \vec x  ^T \incidencematrix^T \edgematrix^{-1}  \incidencematrix \vec x   + i \vec x^T \vec q + \vec m^T\edgematrix\vec m }   .
\end{align*}
In the exponent, we recognize the Laplacian $\laplacian = \incidencematrix^T\edgematrix^{-1}\incidencematrix$ (\cref{def:Laplacian}). The remaining integral over $\vec x$ is again Gaussian, it produces a power of $\det  \laplacian$ in the denominator.  
\begin{align}\label{feynman_rules_parametric}
	\feyk{\Gamma}
	&=  \left( \prod_{e \in E_\Gamma} \int \limits_0^\infty \frac{\d a_e \, a_e^{\nu_e-1}}{\Gamma(\nu_e)} \right)  \frac{1}{   \left( \det \edgematrix \right) ^{\frac D 2}  \left( \det \laplacian  \right) ^{\frac D 2}  } e^{ - \vec q ^T   \laplacian^{-1}   \vec q    + \vec m^T \edgematrix \vec m }   .
\end{align}
The   denominator is given by  the first Symanzik polynomial (\cref{def:Symanzik_polynomial}), $\psi = \det \edgematrix \det \laplacian$,
and the remaining exponent is defined as the second Symanzik polynomial, 
\begin{align*}
	\phi  &:= \psi  \cdot  \vec q ^T  \laplacian^{-1}  \vec q   - \psi  \cdot \sum_{e \in E_\Gamma} a_e m_e^2  . 
\end{align*}
The entries of the inverse Laplacian are vertex-indexed Dodgson polynomials according to \cref{lem:inverse_Laplacian_Dodgson}. These Dodgson polynomials, up to sign, are spanning 2-forest polynomials \cite{brown_spanning_2011}, the second Symanzik polynomial is a sum over spanning 2-forests of $\Gamma$, see e.g. \cite{bogner_feynman_2010} for details. 

The construction in the main text, in particular \cref{sec:integration}, is analogous to the derivation of \cref{feynman_rules_parametric}. We compute an integral over $\vec x$ and obtain $\alpha_\Gamma$, which is the integrand for parametric Feynman rules (i.e. it is a differential form in $\d a_e$). The distinctive difference from the present scalar case is that the integration in the main text involves not just the exponential $e^{-\vec x \laplacian \vec x}$, but   additionally the non-trivial polynomial $W_\Gamma(\vec x)$ defined in \cref{Pgamma_factors4}. Solving a Gaussian integral with polynomial integrand inevitably leads to factors $\laplacian^{-1}$ in the result, hence, the result $\alpha_\Gamma$ is a function of  Dodgson polynomials.

\medskip 

Finally, we remark that in the literature, the term \enquote{parametric Feynman integral} often means a slightly different expression than \cref{feynman_rules_parametric}. That version can be obtained by using that  both Symanzik polynomials are homogeneous, which allows to introduce an overall scaling of all Schwinger parameters $a_e \mapsto t \cdot a_e$. One can then solve the $t$-integral analytically, it produces an Euler gamma function, and the remaining integral is over the \emph{projective} space of Schwinger parameters, where one can e.g. fix the sum to be unity:
\begin{align}\label{feynman_integral_parametric2}
	\feyk \Gamma&=  \Gamma(-d_\Gamma)  \left( \prod_{e\in E_\Gamma} \int \limits_0^\infty \frac{\d a_e\; a_e^{\nu_e-1}}{\Gamma(\nu_e)}  \right)  \; \delta \left( 1-\sum_{e=1}^{\abs{E_\Gamma}} a_e \right) \frac{\phi^{ d_\Gamma}}{ \psi^{d_\Gamma+\frac D 2}} .
\end{align}
Here, $d_\Gamma :=  \abs{L_\Gamma} \frac D 2-\sum_{e\in E_\Gamma} \nu_e $ is the superficial degree of divergence of the graph $\Gamma$, and the  gamma function factor $\Gamma(-d_\Gamma)$ represents the superficial divergence of the integral. For a IR-finite graph without UV-subdivergences, assuming a suitable choice of external momenta, the remaining projective integral is finite. In particular, the form $\alpha_\Gamma$ (\cref{def:omega}) used in the main text is well-defined on the projective space of Schwinger parameters \cite{gaiotto_higher_2024,wang_factorization_2024}.

In general, a projective integral of the form \cref{feynman_integral_parametric2}  might involve additional singularities, corresponding to subdivergences of the Feynman graph. \Cref{feynman_integral_parametric2} is then the starting point for renormalization in parametric space, see e.g. \cite{binoth_automatized_2000}.

\bibliography{vanishing_wedge_product}

\end{document}